\renewcommand{\int}{\ensuremath {\mathcal I}}
\newcommand{\eqref}[1]{(\ref{#1})}
\newcommand{\fbk}{\texttt{ASSA}}
\newcommand{\stoller}{\texttt{{Stoller}}}
\begin{document}
\frontmatter          % for the preliminaries
\mainmatter              % start of the contributions
%
%%% OLD TITLE
% \title{A Symbolic Approach to the Parametric Analysis of ARBAC
%   Policies}
\title{{Automated Symbolic Analysis of ARBAC-Policies}}
\subtitle{(Extended Version)}

\titlerunning{Automated Analysis of ARBAC Policies} % abbreviated title (for running head)
% also used for the TOC unless
% \toctitle is used
%
\author{
Alessandro Armando\inst{1,2} \and Silvio Ranise\inst{2}
}
\authorrunning{Armando, Ranise}   % abbreviated author list (for running head)
\institute{%
  DIST, Universit\`a degli Studi di Genova, Italia \and
  Security and Trust Unit, FBK, Trento, Italia 
}
\maketitle              % typeset the title of the contribution
%
%
%		ABSTRACT
\begin{abstract}
  One of the most widespread framework for the management of
  access-control policies is Administrative Role Based Access Control
  (ARBAC).  Several automated analysis techniques have been proposed
  to help maintaining desirable security properties of ARBAC policies.
  One limitation of many available techniques is that the sets of
  users and roles are bounded.  In this paper, we propose a symbolic
  framework to overcome this difficulty.  We design an automated
  security analysis technique, parametric in the number of users and
  roles, by adapting recent methods for model checking infinite state
  systems that use first-order logic and state-of-the-art theorem
  proving techniques.  Preliminary experiments with a prototype
  implementations seem to confirm the scalability of our technique.
\end{abstract}
%
%
%		INTRODUCTION
\section{Introduction}
Role Based Access Control (RBAC)~\cite{rbac} regulates access by
assigning users to roles which, in turn, are granted permissions to
perform certain operations.  Administrative RBAC (ARBAC)~\cite{arbac}
specifies how RBAC policies may be changed by administrators; thus
providing support for decentralized policy administration, which is
crucial in large distributed systems.
% One of the most important problem in system security is the
% specification and management of access-control policies.  % Several
% frameworks have been put forward to specify complex policies.
% Because of their expressiveness, the problem of predicting if the
% policies correctly capture the intent of their designers is even
% more dramatic and automated analysis techniques are mandatory for
% systems of reasonable size to establish their correctness.  One of
% the most widespread framework for the specification and management
% of access-control policies---which are a crucial ingredient of
% system security---is ARBAC, Administrative Role Based Access control
% (see, e.g.,~\cite{arbac}).  In this framework, access is regulated
% by assigning users to roles which, in turn, are granted permissions
% to perform certain operations.  On top of this, administrative
% policies specify how the access policies may be changed by an
% administrator; thus providing support for decentralized policy
% administration, which is crucial in large distributed systems.  To
% enhance scalability, ARBAC has been extended in several directions
% by, e.g., adding delegation or permitting parameterized roles.
% These additional features make the need of automated analysis
% techniques of the policies even more acute.
For the sake of simplicity, we consider the URA97 component of
ARBAC97~\cite{arbac97}, which is concerned with the management of the
user-role assignment by administrative roles.  The generalization to
other variants of ARBAC is left to future work.

% Unfortunately, it is almost impossible for a human to foresee the
% subtle interplays between the operations carried out by different
% administrators because of the large number of possible interleavings.
% Even for real-world applications where only finitely many users and
% roles are considered, security analysis may be a daunting task because
% of their numbers.  
% For example, in~\cite{schaad01}, the ARBAC system of one of the major
% bank in Europe compries around forty thousounds users and over one
% thousand roles.
As it is almost impossible for a human to foresee the subtle
interplays between the operations carried out by different
administrators because of the large number of possible interleavings.
Automated analysis techniques are thus of paramount importance to
maintain desirable security properties while ensuring flexible
administration.  Several techniques have been proposed,
e.g.,~\cite{li-tripunitara,stoller,stoller2}.  % They are capable of
% solving \emph{goal reachability} problems of the form: given an
% initial policy state $S$, an administrative policy $P$, a set of
% administrators $A$, and a set of roles $R$, is it possible for the
% administrators in $A$ to modify the policy state $S$ (according to the
% policy $P$) so that a given user is a member of the roles in $R$?  
In general, security analysis problems are undecidable but become
decidable under suitable restrictions.  Indeed, the results of the
analysis are valid under the assumptions that make them decidable.  In
this respect, one of the most severe limitations of the available
techniques is that the number of users and roles is bounded, i.e.\
finite and known \emph{a priori}.  % For example, the work described
% in~\cite{stoller2} allows for an infinite set of roles but users must
% still be bounded.  
So, if one has proved that a certain property holds for, say, $1000$
users and $150$ roles and after some times, the number of users or
roles is changed for some reason, then the result of the previous
analysis no more holds and the automated technique must be invoked
again.  It would be desirable to have analysis techniques capable of
certifying that a certain property holds regardless of the number of
users or roles so to make their results more useful.

In this paper, we propose a symbolic framework to specify ARBAC
policies that enables the design of parametric (in the number of users
and roles) security analysis techniques.  The idea is to adapt recent
techniques for model checking infinite state systems~\cite{ijcar08}
that use decidable fragments of first-order logic and state-of-the-art
theorem proving techniques 
% (in particular, Satisfiability Modulo Theories (SMT) solvers)
to mechanize the analysis.  The paper makes two contributions towards
the goal of building parametric analysis techniques for ARBAC
policies.  The {former} is a \textbf{framework for the uniform
  specification of a variety of ARBAC policies}.  In particular, we
can describe security analysis problems where users and roles are
finitely many but their exact number is not known \emph{a priori}.
% In other words, we can specify goal reachability problems which are
% \emph{parametric} in the number of users or roles.  
The {second contribution} is a \textbf{symbolic} backward reachability
\textbf{procedure} that can be used \textbf{to solve} an important
class of security analysis problems, called \textbf{user-role
  reachability problems}, that allow one to check if certain users can
acquire a given permission or, dually, if a user can never be given a
role which would give him or her a permission which is not supposed to
have.  The security analysis problem is iteratively reduced to a
series of satisfiability checks in a decidable fragment of first-order
logic.  We use ideas from model theory and the theory of
well-quasi-ordering~\cite{ijcar08,AbdullaTCS} for the proof of
termination of the method, which turns out to be the most substantial
part of the proof of correctness.
% Both of these tests are reduced to solving (decidable)
% satisfiability problems for a class of first-order formulae used to
% represent sets of states.  The termination of the procedure is derived
% by exploiting some standard model-theoretic constructions that allow
% us to define a well-quasi-ordering on sets of states that implies
% termination of the backward reachability procedure (see,
% e.g.,~\cite{ijcar08,AbdullaTCS}).  
The decidability of the parametric goal reachability problem is
obtained as a corollary of the correctness of the procedure.  

% \paragraph{Related work.}
Our decidability result is more general that those
in~\cite{li-tripunitara,stoller} which assume a bounded number of
users and roles.  A comparison with the result in~\cite{stoller2} is
more articulated.  On the one hand, we are more general in allowing
for a finite but unknown number of users and roles while
in~\cite{stoller2} the users are bounded and only the roles are
parametric.  On the other hand, we allow for only a restricted form of
negation in the preconditions of certain administrative actions
while~\cite{stoller2} seems to allow for arbitrary negation.  We plan
to investigate how to extend our framework to allow for arbitrary
negation in the near future while in this paper we focus on the core
ideas.  Finally, our procedure can consider several initial RBAC
policies at the same time while~\cite{li-tripunitara,stoller,stoller2}
can handle only one.

\paragraph{Plan of the paper.}  In Section~\ref{sec:ARBAC-intro}, we
formally define ARBAC policies with their user-role reachability
problem.  In Section~\ref{sec:symbolic-specification-ARBAC}, we
present our symbolic framework for the specification of ARBAC polices.
In Section~\ref{sec:reachability-analysis}, we design a symbolic
analysis procedures of ARBAC policies.  In Section~\ref{sec:exp}, we
discuss some preliminary experiments with a prototype of our
technique.  Section~\ref{sec:conclusion} concludes and gives some
hints about future work.
%%% SR: COMMENT OR UNCOMMENT THE FOLLOWING LINES FOR THE SHORT OR LONG
%%% VERSION, RESP.
The omitted proofs and some additional material can be found in the
extended version of the paper~\cite{ext-version}.
%%% The omitted proofs and some additional material can be found in
%%% the Appendixes.

\section{RBAC and ARBAC policies}
\label{sec:ARBAC-intro}

We assume familiarity with ARBAC (see, e.g.,~\cite{arbac}) and
many-sorted first-order logic with equality (see,
e.g.,~\cite{enderton}).  Consider a signature
$\Sigma_{\mathit{ARBAC}}$ containing the sort symbols $\mathit{User},
\mathit{Role}$, and $\mathit{Permission}$, countably many constant
symbols $e^u_i, e^r_i, e^p_i$ (for $i\geq 0$) of sort $\mathit{User}$,
$\mathit{Role}$, and $\mathit{Permission}$, respectively, the
predicate symbols $\succeq$ (written infix), $pa$, and $ua$ of arity
$\mathit{Role}\times \mathit{Role}$, $\mathit{Role}\times
\mathit{Permission}$, and $\mathit{User}\times \mathit{Role}$,
respectively, and \emph{no function symbols}.  A \emph{RBAC policy} is
a first-order structure $\mathcal{M}=(D,I)$ over this signature, where
the interpretation of $ua$ (in symbols, $ua^I$) is the user-role
assignment relation, $pa^I$ is the permission-role assignment, and
$\succeq^I$ is the role hierarchy.  Without loss of generality, we
consider structures that interpret the sort symbols into (disjoint)
sets of users, roles, and permissions, respectively.
% (This is so because, although the interpretations of sorts (universes)
% are not required to be disjoint in many-sorted first-order logic,
% equality is defined only on elements of the same universe; hence,
% there will always exist an elementary equivalent structure---i.e.\
% satisfying the same set of formulae---whose universes are disjoint.)
Our notion of state corresponds to that of miniRBAC policy
in~\cite{stoller}.

An ARBAC policy prescribes how the user-role assignment, the
permission assignment, and the role hierarchy of RBAC policies may
evolve.  As in~\cite{stoller} and according to the URA97
administrative control model~\cite{arbac97}, in this paper, we assume
that the interpretations of $\succeq$ and $pa$ are constant over time
and only that of $ua$ may change.  We also assume that $\succeq^I$ is
a partial order and refer to $\succeq^I$ as the `more senior than'
relationship between roles.  % Syntactically, to emphasize that
% $ua$ is time variant, we write $ua(u,r)$ instead of $ua(u,r)$.
We abuse notation by denoting an interpretation $\mathcal{M}=(D,I)$
over $\Sigma_{\mathit{ARBAC}}$ with the restriction $s$ of $I$ to $ua$
when the rest of $\mathcal{M}$ is clear from the context and write
$\succeq$, $pa$, and $ua$ instead of $\succeq^I$, $pa^I$, and $ua^I$
(or $s(ua)$), respectively.

Let $s$ be a RBAC policy.  A user $u$ is an \emph{explicit member} of
a role $r$ in $s$ if $(u,r) \in s(ua)$ or, equivalently, $s \models
ua(u,r)$, where `$\models$' is the standard satisfaction relation of
many-sorted first-order logic.  Similarly, $u$ is an \emph{implicit
  member} of $r$ in $s$ if $(u,r')\in s(ua)$ for some $r'$ which is
more senior than $r$ or, equivalently, $s\models ua^*(u,r)$ where
$ua^*(u,r)$ abbreviates the formula $\exists r'.(r'\succeq r \wedge
ua(u,r'))$.  Thus, $u$ is \emph{not} a {member} of $r$ (neither
implicit nor explicit) if for all role $r'$ more senior than $r$, we
have $(u,r')\not\in s(ua)$ or, equivalently, $s\models \forall
r'.(r'\succeq r\Rightarrow \neg ua(u,r'))$.

% A user $u$ \emph{satisfies a URA97 constraint $\lambda_1\wedge
%   \cdots\wedge \lambda_n$ in the RBAC policy $s$} if, for each $i=1,
% ..., n$, $u$ is (not) an explicit member of role $r$ when $\lambda_i$
% is $ua(u,r)$ ($\neg ua(u,r)$, respectively) or if $u$ is (not) an
% implicit member of role $r$ when $\lambda_i$ is $ua^*[u,r)$ ($\neg
% ua^*(u,r)$, respectively) or, equivalently, $s\models \lambda_i$.  
A \emph{\sl can\_assign} action is a tuple $\langle r_a, C, r'
\rangle$ such that $r_a,r'$ are roles and $C$ is a (possibly empty)
finite set of \emph{role expressions} of the form $r$ or
$\overline{r}$ where $r$ is a role.  Sometimes, along the lines
of~\cite{li-tripunitara}, a set $T$ of users can be attached to a
\textsl{can\_assign} action; in this case, users in $T$ are assumed not to
initiate any role assignment.  A \emph{\sl can\_revoke} action is a
pair $\langle r_a, r' \rangle$ such that $r_a,r$ are roles.
%%%%%%%%%%%%%%%%%%%%%%%%%%%%%%%%%%%%%%%%%%%%%%%%%%%%%%%%%%%%%%%%%%%%%%%%
A \emph{user $u$ satisfies a role expression $\rho$ in a RBAC policy
  $s$} if $u$ is an implicit member of role $r$ in $s$ when $\rho$ is
$r$ (or, equivalently, $s\models ua^*(u,r)$) and $u$ is not a member
of role $r$ in $s$ when $\rho$ is $\overline{r}$ (or, equivalently,
$s\models \neg ua^*(u,r)$).  A \emph{user $u$ satisfies the finite set
  $C=\{\rho_1, ..., \rho_n\}$ of role expressions in a RBAC policy
  $s$} if $u$ satisfies $\rho_i$ in $s$, for each $i=1, ..., n$
($n\geq 0$) or, equivalently, $s\models [\neg] ua^*(u,r_1)\wedge
\cdots \wedge [\neg] ua^*(u,r_n)$, where $[\neg] ua^*(u,r_i)$ denotes
$ua^*(u,r_i)$ when $\rho_i$ is $r_i$ and $\neg ua^*(u,r_i)$ when
$\rho_i$ is $\overline{r_i}$.  If $n=0$, then $C=\emptyset$ and any
user $u$ always satisfies it.
%%%%%%%%%%%%%%%%%%%%%%%%%%%%%%%%%%%%%%%%%%%%%%%%%%%%%%%%%%%%%%%%%%%%%%%%
Let $s,s'$ be two RBAC policies.  A \emph{\textsl{can\_assign}} action $\langle
r_a, C, r' \rangle$ is \emph{enabled} in $s$ if there exist users
$u_a,u$ such that $u_a$ satisfies $r_a$ in $s$ and $u$ satisfies $C$
in $s$ and $s'$ is \emph{obtained} from $s$ by its application if
$s'(ua) = s(ua) \cup \{ (u,r')\}$.  A \emph{\textsl{can\_revoke}} action
$\langle r_a, r' \rangle$ is \emph{enabled} in $s$ if there exists a
user $u_a$ such that $u_a$ satisfies $r_a$ in $s$ and $s'$ is
\emph{obtained} from $s$ by its application if $s'(ua) = s(ua)
\setminus \{ (u,r')\}$.
% such that $r_a,r$ are roles and if $s,s'$ are two states,
% there exist users $u_a,u$ such that $(u_a,r_a)\in s(ua)$ (in which
% case, the \textsl{can\_revoke} is \emph{enabled}), then $s'(ua) = s(ua)
% \setminus \{ (u,r') \}$ (in which case, we say that $s'$ is
% \emph{obtained} by applying the action).  
If $\alpha$ is a \textsl{can\_assign} or a \textsl{can\_revoke} action, we write
$\alpha(s,s')$ to denote the fact that the action is enabled in $s$
and $s'$ is obtained from $s$ by applying $\alpha$.  The pair $(S_0,
A)$ is an \emph{ARBAC policy} when $S_0$ is a finite set of RBAC
policies, called \emph{initial}, and $A$ is a finite set of
\textsl{can\_assign} and \textsl{can\_revoke} actions.
% Let $\Gamma :=(S_0,\{\alpha_1, ..., \alpha_n\})$ be an ARBAC policy,
Let $u$ be a user, $RP$ be a finite set of pairs $(r,p)$ where $r$ is
a role and $p$ a permission.  The pair $\gamma:=(u,RP)$ is called the
\emph{goal} of the \emph{user-role reachability problem for
  $\Gamma:=(S_0, A)$} which consists of answering the following
question: is there a sequence $s_0, ..., s_m$ of states such that
$s_0\in S_0$, for each $i=0, ..., m-1$, there exists $\alpha\in A$ for
which $\alpha(s_i,s_{i+1})$, $(u,r)\in s_m(ua)$, and $(r,p)\in pa$ 
% (in symbols, $s_m\models ua(u,r)\wedge pa(r,p)$)
for each pair $(r,p)\in RP$.  If there is no such $m\geq 0$, then the
goal $\gamma$ is \emph{unreachable}; otherwise, it is \emph{reachable}
and the sequence $s_0, ..., s_m$ of states is called a \emph{run}
leading $\Gamma$ from an initial RBAC policy $s_0\in S_0$ to a RBAC
policy satisfying $\gamma$.

\begin{example}
  \label{ex:intro}
  We formalize the running example in~\cite{li-tripunitara}. Let
  $\mathcal{M}$ be an RBAC policy such that $\mathit{User} := \{
  Alice, Bob, Carol \}$, $\mathit{Permission} := \{ Edit, Access, View
  \}$, and $\mathit{Role} := \{ Employee,$ $Engineer, PartTime,
  FullTime, HumanResource,$ \linebreak $ProjectLead, and Manager
  \}$.\footnote{For the sake of clarity, here and the other examples
    of the paper, we will abuse notation by using more evocative names
    for constants than $e_i^{\alpha}$, $\alpha\in \{u,r,p\}$ ($i\geq
    0$).  Also, if constants have different identifiers, then they
    denote distinct elements. We use the same identifiers to denote
    constants and the elements they denote.}  Every user is a member
  of role Employee.  Managers work full-time. Project leaders are
  engineers.  Alice is an engineer who is part-time.  All employees
  have access permission to the office.  Thus, $\mathcal{M}$ is also
  such that $\succeq := \{(Engineer, Employee), (PartTime, Employee),
  (FullTime,$ \linebreak $Employee), (ProjectLead, Engineer),
  (Manager, FullTime)\}$, $pa := \{ (Access,$ $Employee), (View,
  HumanResource), (Edit, Engineer)\}$, $ua := \{ (Alice, Part\mbox{-}$
  \linebreak $Time), (Alice, Engineer), (Bob, Manager), (Carol,
  HumanResource)\}$.

  Examples of \textsl{can\_assign} are: $\langle Manager, \{
  Engineer,FullTime\}, ProjectLead \rangle$, $\langle HumanResource,
  \emptyset, FullTime \rangle$, and $\langle HumanResource, \emptyset,
  PartTime \rangle$.  The \linebreak meaning of the first action is
  that a manager can assign a full-time engineer to be a project
  leader; the second and the third ones mean that a user in the
  human-resources department can turn any user to be full-time or
  part-time.  If we attach to the previous assignments, the singleton
  set $T=\{Carol\}$ of users; then those actions cannot be performed
  by Carol even if she has the appropriate roles.  Examples of
  \textsl{can\_revoke} actions are: $\langle Manager, ProjectLead \rangle$,
  $\langle Manager,$ $Engineer \rangle$, $\langle HumanResource,$
  $FullTime \rangle$, and $\langle HumanResource,$ $PartTime \rangle$.
  For instance, the meaning of the first is that a manager can revoke
  the role of project leader to any user; the meaning of the other
  actions is similar.\qed
\end{example}

\section{Symbolic representation of ARBAC policies}
\label{sec:symbolic-specification-ARBAC}

Our framework represents (i) sets of RBAC policies as the models of a
first-order theory whose signature contains only constant and
predicate symbols but no function symbols, (ii) initial RBAC policies
and constraints as universal formulae, and goals of reachability
problems as existential formulae, and (iii) administrative actions
(such as the \textsl{can\_assign} and \textsl{can\_revoke}) as certain classes of
formulae.  The assumptions on the three components allow us to design
a decision procedure for the user-role reachability problem where the
number of users and roles is finite but unknown.  We now describe in
details these assumptions.

% In order to mechanize the solution of the user-role reachability
% problem, we introduce a symbolic representation of the structures
% representing sets of RBAC policies and of the \textsl{can\_assign} and `can
% revoke actions.  The idea is well-known in logic: first-order
% formulae can be used to describe classes of first-order structures.
% Indeed, to be useful, such formulae should allow for the automated
% solution of their satisfiability problem, since---as we will see
% below---the user-role reachability problem can be reduced to several
% of such problems.  Formally, we will use the notion of theory to
% identify classes of first-order structures.
% We assume a basic level of familiarity with many-sorted first-order
% logic~\cite{enderton}.  Each signature considered below includes the
% equality sign $=_S : S\times S$ (written infix), for each sort symbol
% $S$ in the signature, and it is to be interpreted as the identity
% relation.  By abuse of notation, we frequently omit the subscript $S$
% when it is clear from the context.  Let $\mathit{Users},
% \mathit{Roles},$ and $\mathit{Permissions}$ be sort symbols.
% We assume three non-empty, countably infinite, and pairwise disjoint
% sets of constants of sort $\mathit{Users}$, $\mathit{Roles}$, and
% $\mathit{Permissions}$.  
\paragraph{Formal preliminaries.}  A \emph{$\Sigma$-theory} is a set
of sentences (i.e.\ formulae where no free variables occur) over the
signature $\Sigma$.  A theory $T$ is axiomatized by a set $Ax$ of
sentences if every sentence $\varphi$ in $T$ is a logical consequence
of $Ax$.  We associate with $T$ the class $Mod(T)$ of structures over
$\Sigma$ which are models of the sentences in $T$.  A theory is
\emph{consistent} if $Mod(T)\neq \emptyset$. A $\Sigma$-formula
$\varphi$ is \emph{satisfiable modulo $T$} iff there exists
$\mathcal{M}\in Mod(T)$ such that $\mathcal{M}$ satisfies $\varphi$
(in symbols, $\mathcal{M}\models \varphi$).  A $\Sigma$-formula
$\varphi$ is \emph{valid modulo $T$} iff its negation is unsatisfiable
modulo $T$ and it is \emph{equivalent modulo $T$} to a
$\Sigma$-formula $\varphi'$ iff the formula $(\varphi \Leftrightarrow
\varphi')$ is valid modulo $T$.  As notational conventions,
% the constants of sort $\mathit{Users}$ are denoted by $e^u_i$, those
% of sort $\mathit{Roles}$ by $e^r_i$, and those of sort
% $\mathit{Permisions}$ by $e^p_i$ for $i\geq 0$.
the variables $u,r,p$ and their subscripted versions are of sort
$\mathit{Users}, \mathit{Roles}$, and $\mathit{Permissions}$,
respectively; $\underline{u},\underline{r}, \underline{p}$ denote
tuples of variables of sort $\mathit{Users},\mathit{Roles},
\mathit{Permission}$, respectively;
$\varphi(\underline{x},\underline{\pi})$ denotes a quantifier-free
formula where at most the variables in the tuple $\underline{x}$ may
occur free and at most the predicate symbols in the tuple
$\underline{\pi}$ may occur besides those of the signature over which
$\varphi$ is built.  In this paper, we consider only consistent
theories axiomatized by \emph{universal sentences} of the form
$\forall \underline{x}.\varphi(\underline{x})$.  In the examples, we
will make frequent use of the \emph{theory of scalar values $v_1, ...,
  v_n$ (for $n\geq 1$) of type $S$}, denoted with $SV(\{v_1, ...,
v_n\}, S)$, whose signature consists of the sort $S$, the constant
symbols $v_1, ..., v_n$ of sort $S$, and it is axiomatized by the
following (universal) sentences: $v_i \neq v_j$ for $i,j=1, ..., n$,
$i\neq j$, and $\forall x.(x=v_1 \vee \cdots \vee x=v_n)$, where $x$
is of sort $S$.

\subsection{Symbolic representation of RBAC policies} Let
$T_{\mathit{Role}}$ be a $\Sigma_{\mathit{Role}}$-theory axiomatized
by a finite set of universal sentences where $\Sigma_{\mathit{Role}}$
contains the sort $\mathit{Role}$, the predicate $\succeq$, and
countably many constants of sort $\mathit{Role}$ but no function
symbol.  Let $T_{\mathit{User}}$ be a $\Sigma_{\mathit{User}}$-theory
axiomatized by a finite set of universal sentences where
$\Sigma_{\mathit{User}}$ contains the sort $\mathit{User}$, countably
many constants of sort $\mathit{User}$ but no function symbol.  Let
$T_{\mathit{Permission}}$ be a $\Sigma_{\mathit{Permission}}$-theory
axiomatized by a finite set of universal sentences where
$\Sigma_{\mathit{Permission}}$ contains the sort $\mathit{Role}$ and
countably many constants of sort $\mathit{Permission}$ but no function
symbol.  We emphasize that the signatures of these three theories may
contain finitely many predicate symbols besides those mentioned above
but no function symbols.
\begin{example}
  \label{ex:zero}
  For the version of ARBAC we are considering, the theory
  $T_{\mathit{Role}}$ can be axiomatized by the following three
  universal sentences: $\forall r.(r\succeq r)$, $\forall
  r_1,r_2.((r_1\succeq r_2 \wedge r_2\succeq r_1) \Rightarrow
  r_1=r_2)$, and $\forall r_1,r_2,r_3.((r_1\succeq r_2 \wedge
  r_2\succeq r_3) \Rightarrow r_1\succeq r_3)$.  This means that
  $\succeq$ is interpreted as a partial order by the structures in
  $Mod(T_{\mathit{Role}})$.  The set of basic roles and their
  positions in the partial order can be defined, when considering
  Example~\ref{ex:intro}, as the following sentences: $Engineer
  \succeq Employee$, $PartTime \succeq Employee$, $FullTime \succeq
  Employee$, $ProjectLead \succeq Engineer$, and $Manager \succeq
  FullTime$.  
  %%% SR: COMMENT OR UNCOMMENT THE FOLLOWING LINES FOR THE SHORT OR LONG
  %%% VERSION, RESP.
  The interested reader can see~\cite{ext-version} for a discussion on
  how to formalize ARBAC with parametric roles.
  % In Appendix~\ref{app:extension-parametric-role}, we discuss how to
  % parametric roles can be formalized in our framework.
  %
  % For extensions of ARBAC using parametric roles, we can assume
  % $\Sigma_{\mathit{Role}}$ to contain also sort symbols for the
  % values of the parameter of each role and predicate symbols for the
  % parametric roles.  To illustrate, consider the university policy
  % example in~\cite{stoller2}.  There, the role schema
  % $\mathit{Student}(dept,cid)$ is used for students registered for
  % the course numbered $cid$ offered by department $dept$ and the
  % instance $\mathit{Student}(dept=cs,cid=101)$ identifies students
  % of the Computer Science department taking course $101$.  In our
  % framework, we can add the sort symbols $\mathit{Department}$ and
  % $\mathit{CourseId}$, and the predicate symbol $\mathit{Student}~:~
  % \mathit{Department}\times \mathit{CourseId}\times \mathit{Role}$
  % to $\Sigma_{\mathit{Role}}$.  Then, we can add a universal
  % sentence saying that the new predicate is partially functional,
  % i.e.\
  % \begin{eqnarray*}
  %   \forall D,CID,r_1,r_2.((\mathit{Student}(D,CID,r_1)\wedge 
  %   %                           \mathit{Student}(D,CID,r_2)) 
  %   %    & \Rightarrow & r_1 = r_2) ,
  %  \end{eqnarray*}
  % where $D$ is a variable of sort $\mathit{Department}$ and $CID$
  % is of sort $\mathit{CourseId}$. A more comprehensive discussion
  % about how to formalize parametric roles is included in
  % Appendix~\ref{app:extension-parametric-role}.

  For the theory $T_{\mathit{User}}$, we have a similar flexibility.
  For example, if there is only a \emph{finite and known} number
  $n\geq 1$ of users, say $e^u_1, ..., e^u_n$, then we can use the
  theory of a scalar value $SV(\{e^u_1, ..., e^u_n\}, \mathit{User})$.
  Another situation is when we have a \emph{finite but unknown} number
  of users whose identifiers are, for example, linearly ordered (think
  of the integers with the usual order relation `less than or equal').
  In this case, we add the ordering relation $\leq$ of arity
  $\mathit{User}\times \mathit{User}$ to $\Sigma_{\mathit{User}}$ and
  the following universal sentences constrain $\leq$ to be a linear
  order: $\forall u.(u\leq u)$, $\forall u_1,u_2,u_3.((u_1\leq
  u_2\wedge u_2\leq u_3)\Rightarrow u_1\leq u_3)$, $\forall
  u_1,u_2.((u_1\leq u_2\wedge u_2\leq u_1)\Rightarrow u_1 = u_2)$, and
  $\forall u_1,u_2.(u_1\leq u_2\vee u_2\leq u_1)$.  If
  $T_{\mathit{User}}=\emptyset$, then the identifiers $e^u_i$ of users
  can be compared for (dis-)equality % (recall that $=$ is a logical
%   symbol and it is thus available in every signature)
  and there is again a \emph{finite but unknown} number of users.

  Similar observations also hold for $T_{\mathit{Permission}}$.
  Often, there is only a finite and known number of permissions that
  can be associated to roles.  For example, continuing the
  formalization of Example~\ref{ex:intro}, recall that we have only
  three permissions: Access, View, and Edit.  So,
  $T_{\mathit{Permission}}:= SV(\{Access, View, Edit\},$
  $\mathit{Permission})$.  \qed
\end{example}
As shown by the example above, the flexibility of our approach allows
us to go beyond standard ARBAC policies by specifying the domains of
users, roles, and permissions enjoying non-trivial algebraic
properties which are useful to model, e.g., property-based
policies~\cite{uarbac}.  
% For example, property-based policies~\cite{uarbac} can be specified
% in our framework by enriching the theories with suitable axioms to
% describe the properties and their relationships with the user-role
% assignment.
We leave a detailed analysis of the scope of applicability of our
framework to future work (as a first step in this direction,
see~\cite{asiaccs}).

Now, we define $\Sigma_{ARBAC}:=\Sigma_{\mathit{Role}}\cup
\Sigma_{\mathit{User}}\cup \Sigma_{\mathit{Permission}}\cup \{ pa,
ua\}$ and let $T_{ARBAC} := T_{\mathit{Role}}\cup
T_{\mathit{User}}\cup T_{\mathit{Permission}}\cup \mathit{PA}$, where
$\mathit{PA}$ is a set of (universal) sentences over
$\Sigma_{\mathit{Role}}\cup \Sigma_{\mathit{Permission}}\cup \{pa\}$
characterizing the permission assignment relation.
\begin{example}
  \label{ex:zero-bis}
  Consider again
  Example~\ref{ex:intro}.  % From Example~\ref{ex:zero}, recall that
%   $T_{\mathit{User}} := SV(\{ Alice, Bob, Carol\}, \mathit{User})$,
%   $T_{\mathit{Role}} := SV(\{\mathit{Employee}$,
%   $\mathit{HumanResource}, \mathit{Engineer}\}, \mathit{Role})$ and
%   $T_{\mathit{Permission}} := SV(\{Access, View, Edit\}$,
%   $\mathit{Permission})$.  
  The permission-role assignment is axiomatized by
  $\mathit{PA}:=\{\forall p,r.(pa(p,r) \Leftrightarrow ( (p =
  \mathit{Access} \wedge r=\mathit{Employee}) \vee (p = \mathit{View}
  \wedge r=\mathit{HumanResource}) \vee (p = \mathit{Edit} \wedge
  r=\mathit{Engineer}) )\}$. % specifying that role
%   Employee has permission Access, role HumanResource has permission
%   View, and role Engineer has permission Edit.
  \qed
\end{example}
Observe that a structure in $Mod(T_{\mathit{ARBAC}})$ over
$\Sigma_{\mathit{ARBAC}}$ is a RBAC policy.  
% Notice that, without loss of generality, we can consider only those
% many-sorted $\Sigma_{ARBAC}$-structures in $Mod(T_{\mathit{ARBAC}})$
% whose universes are three disjoint sets of users, roles, and
% permissions.  This is so because although universes are not required
% to be disjoint in many-sorted first-order logic, equality is defined
% only on elements of the same universe; hence, there will always
% exist an elementary equivalent (i.e.\ satisfying the same set of
% formulae) structure whose universes are disjoint.
\subsection{Symbolic representation of initial RBAC policies,
  constraints, and goals} 
Since no axiom involving $ua$ is in $T_{\mathit{ARBAC}}$, the
interpretation of $ua$ is arbitrary.  We consider the problem of how
to constrain the interpretation of $ua$ by means of an example.
\begin{example}
  \label{ex:one} 
  We specify the user-role assignment of Example~\ref{ex:intro}.  Let
  $T_{\mathit{User}}, T_{\mathit{Role}}$, and
  $T_{\mathit{Permission}}$ be as in Example~\ref{ex:zero-bis}.
  Consider the formula
%   According to~\cite{li-tripunitara}, the user assignment relation
%   must be such that Alice is PartTime and Engineer, Bob is a Manager,
%   and Carol Is HumanResource.  This can be described by using the
%   following formula
  $In(ua)$:% \footnote{Notice that $In(ua)$ can be seen as the Clark's
%     completion of the following facts: $ua(Alice,PartTime)$,
%     $ua(Alice,Engineer)$, $ua(Bob,Manager)$, and
%     $ua(Carol,HumanResource)$.}
  \begin{eqnarray*}
    \forall u,r.(ua(u,r) & \Leftrightarrow &
    (
       (u=Alice \wedge r = PartTime)     \vee 
       (u=Alice \wedge r=Engineer)       \vee \\
       && \hspace{-1cm}
       (u=Bob \wedge r=Manager)          \vee 
       (u=Carol \wedge r=HumanResource) 
       ) ) .
  \end{eqnarray*}
  (Notice that $In(ua)$ can be seen as the Clark's
  completion~\cite{clark} of the facts: $ua(Alice,$ $PartTime)$,
  $ua(Alice,Engineer)$, $ua(Bob,Manager)$, and $ua(Carol,$
  $HumanResource)$.)  It is easy to see that the interpretation
  considered in Example~\ref{ex:intro} satisfies $In(ua)$.
%   \begin{eqnarray*}
%     s_{In}(ua) & := & \left\{  
%       \begin{array}{l}
%         (Alice, PartTime), (Alice, Engineer), \\
%         (Bob, Manager), (Carol,HumanResource) 
%       \end{array} \right\} % \\
%     v_{In}(pa) & := & \left\{  
%       \begin{array}{l}
%         (Access, Employee), (View, HumanResource), \\
%         (Edit, Engineer) 
%       \end{array} \right\} \\
%     v_{In}(rh) & := & \left\{  
%       \begin{array}{l}
%         (Engineer, Employee), (FullTime, Employee), \\
%         (PartTime, Employee), (ProjectLead, Engineer), \\
%         (Manager, FullTime) 
%       \end{array} \right\} 
%     . 
%   \end{eqnarray*}
%   Finally, notice that from $Inv_{rh}^{\succeq}$, we can easily
%   derive, e.g., that $PartTime \succeq Employee$ (since $rh[PartTime,
%   Employee]$) or that $ProjectLead \succeq Employee$ (since
%   $rh[PartTime, Engineer]$, $rh[Engineer,$ $Employee]$ and the
%   transitivity of $\succeq$).
%   In other words, the formula $In(ua)$ describes a (singleton) set of
%   valuation mapping for $ua$.
\qed
\end{example}
Since the formula $In(ua)$ used in the example above belongs to the
class of universal sentences containing the state variable $ua$, we
will use such a class of formulae, and denote it with
\emph{$\forall$-formulae}, to symbolically specify initial RBAC
policies.
\begin{example} 
  \label{ex:unbounded}
  Although in Example~\ref{ex:one} the numbers of users and roles are
  fixed to certain values, our framework does not require this.  For
  example, recall the discussion in Example~\ref{ex:zero} and take
  $T_{\mathit{User}}=\emptyset$, $T_{\mathit{Role}}=\emptyset$.  Then,
  consider the following $\forall$-formula: $\forall u,r.(ua(u,r)
  \Leftrightarrow (u\neq e^u_0 \wedge r\neq e^r_0))$.  A RBAC policy
  $s$ satisfying the formula is such that $(e^u_0,e^r_0)\not\in s(ua)$
  and $(e^u_i,e^r_j)\in s(ua)$ for every pair $(i,j)$ of natural
  numbers with $i,j\neq 0$.  Thus, there is no bound on the number of
  pairs $(e^u_i,e^r_j)$ in
  $s(ua)$.  % Our procedure for solving the user-role
%   reachability problem considers those initial RBAC policies whose
%   number of pairs $(e^u_i,e^r_j)$ with $i,j\neq 0$ is finite but
%   unknown, see Section~\ref{sec:reachability-analysis} for
%   details.
  \qed
\end{example}

Notice that $\forall$-formulae are not only useful to describe initial
RBAC policies but also to express constraints on the set of states
that \textsl{can\_assign} and \textsl{can\_revoke} actions must
satisfy.  As an example, consider RBAC policies with separation of
duty constraints, i.e.\ a user cannot be assigned two given roles.
This can be enforced by using \emph{static mutually exclusive roles}
(SMER) \emph{constraints} that require pairs of roles with disjoint
membership (see, e.g.,~\cite{stoller}).  Formulae representing SMER
constraints are $\forall$-formulae with the following form: $\forall
u.\neg (ua(u,e^r_i) \wedge ua(u,e^r_j) )$, for $i,j\geq 0$ and $i\neq
j$.  Notice that other kinds of constraints can be specified in our
framework as long as they can be expressed as $\forall$-formulae.

% Although $\forall$-formulae turn out to be quite useful, they cannot
% express the goal $\gamma$ of a user-role reachability problem as
% illustrated by the following example.
\begin{example}
  \label{ex:two}
  Let us consider again the situation described in
  Example~\ref{ex:intro}.  One may be interested in knowing if user
  Alice can take role FullTime and have permission Access.  This
  property can be encoded by the following formula:
  \begin{eqnarray*}
    \hspace{.725cm}
    \exists u,r,p.(ua(u,r) \wedge pa(p,r) \wedge
    u = Alice \wedge r \succeq FullTime \wedge p = Access) . 
    \hspace{.525cm} \qed
  \end{eqnarray*}
\end{example}
Generalizing this example, we introduce \emph{$\exists$-formulae} of
the form $\exists
\underline{u},\underline{r},\underline{p}.\varphi(\underline{u},\underline{r},\underline{p})$.

\subsection{Symbolic representation of administrative actions}
A \emph{policy literal} is either $ua(u,r)$, $\neg ua(u,r)$, a literal
over $\Sigma_{\mathit{User}}$ (e.g., $u=e^u_i$ or $u\neq e^u_i$ for
$i\geq 0$), or a literal over $\Sigma_{\mathit{Role}}$ (e.g.,
$r=e^r_j$, $r\succeq e^r_j$, or their negations for $j\geq 0$).  A
\emph{policy expression} is a finite conjunction of policy
literals.  % A policy expression $C$ is
% \emph{grounded} if for each atom of the form $ua(u,r)$ in $C$, there
% exist $u \bowtie e^u_i$ and $r \bowtie e^r_i$ that also occur in $C$
% for some $i\geq 1$ and $\bowtie\in\{ =, \neq\}$.  
Administrative actions are represented by instances of formulae of the
following form:% The formulae
% specifying \textsl{can\_assign} and \textsl{can\_revoke} actions have the following two
% forms, respectively:
\begin{eqnarray} 
  \label{eq:can_assign}
  \exists u,r, u_1, r_1, r_2, ..., r_k. & 
    (C(u,r, u_1, r_1, r_2, ..., r_k)  \wedge  & ua' = ua \oplus (u_1,e^r_i)) \\
  \label{eq:can_revoke}
  \exists u,r,u_1.&(C(u,r, u_1) ~~~~~~~~~~~~~~~~\wedge  & ua' = ua \ominus (u_1,e^r_i))
\end{eqnarray}
where $k,i\geq 0$, $C$ is a policy expression called the \emph{guard}
of the transition, primed variables denote the value of the state
variable $ua$ after the execution of the transition, $ua \odot
(u,e^r_i)$ abbreviates
\begin{eqnarray*}
  \lambda w,v.(\mathit{if}~(w=u\wedge v=e^r_i)
              ~\mathit{then}~b
              ~\mathit{else}~ua(w,v)) ,
\end{eqnarray*}
and $b$ is $\mathit{true}$ when $\odot$ is $\oplus$ and it is
$\mathit{false}$ when $\odot$ is $\ominus$.\footnote{We use
  $\lambda$-notation here for the sake of readability only.  The same
  formulae can be easily recast in pure first-order logic.  For
  example, (\ref{eq:can_assign}) can be written as $\exists u,r,r_1,
  ..., r_k.(C(u,r, r_1, ..., r_k) \wedge \forall w,r.(ua'(w,r)
  \Leftrightarrow ((w=u\wedge r=e^r)\vee ua(w,r)))$.} It is possible
to symbolically represent \textsl{can\_assign} actions as formulae of the form
(\ref{eq:can_assign}) and \textsl{can\_revoke} actions as formulae of the form
(\ref{eq:can_revoke}).  We illustrate this with an example.
% According to the current practice, administrative revocation has a
% tautological guard since there is little evidence that
% pre-conditions for revocation are useful.
% We illustrate how formulae of these form are useful in modelling `can
% assign and \textsl{can\_revoke} rules by continuing the example
% from~\cite{li-tripunitara} already considered in the examples above.
\begin{example}
  \label{ex:three} 
  We specify in our framework the administrative actions given in
  Example~\ref{ex:intro}.  The \textsl{can\_assign} action $\langle Manager,
  \{Engineer, FullTime\}, ProjectLead \rangle$ corresponds to the
  following instance of (\ref{eq:can_assign}):
  \begin{eqnarray*}
    \exists u,r,u_1,r_1,r_2.\left(
      \begin{array}{ll}
        ua(u,r) \wedge r\succeq Manager \wedge u\neq  Carol \wedge\\
        ua(u_1,r_1) \wedge r_1\succeq Engineer \wedge 
        ua(u_1,r_2) \wedge r_2\succeq FullTime  \wedge \\
        ua' = ua \oplus (u_1,ProjectLead) 
      \end{array} 
      \right) .
  \end{eqnarray*}
  Two observations are in order.  First, the literal $u\neq Carol$
  disables the transition when $u$ is instantiated to $Carol$.  This
  allows us to model the set $T=\{Carol\}$ of users that are prevented
  to execute assignments.  Second, by simple logical manipulations and
  recalling the definition of the abbreviation $ua^*$ introduced in
  Section~\ref{sec:ARBAC-intro}, it is possible to rewrite the guard
  of the transition as $ua^*(u,Manager)\wedge ua^*(u_1,Engineer)\wedge
  ua^*(u_1,FullTime)\wedge u\neq Carol$.  
  % This should convince the reader that the instance of
  % (\ref{eq:can_assign}) faithfully models the \textsl{can\_assign} action in
  % Example~\ref{ex:intro}.
  The simpler \textsl{can\_assign} rules $\langle HumanResource, \emptyset,
  FullTime \rangle$ and $\langle HumanResource$, $\emptyset,$
  $PartTime \rangle$ can be specified by the following two instances
  of (\ref{eq:can_assign}):
  \begin{eqnarray*}
    \exists u,r,u_1.\left(
      \begin{array}{ll}
        ua(u,r) \wedge r\succeq HumanResource \wedge u\neq  Carol & \wedge\\
        ua' = ua \oplus (u_1,FullTime)
      \end{array} 
      \right) \\
    \exists u,r,u_1.\left(
      \begin{array}{ll}
        ua(u,r) \wedge r\succeq HumanResource \wedge u\neq  Carol & \wedge\\
        ua' = ua \oplus (u_1,PartTime)
      \end{array} 
      \right) .
  \end{eqnarray*}
  Following~\cite{li-tripunitara}, we call $AATU$ (an abbreviation for
  `assignment and trusted users') the set containing the above three
  formulae.  

  The \textsl{can\_revoke} action $\langle Manager, ProjectLead \rangle$ is
  formalized by the following instance of (\ref{eq:can_revoke}):
  \begin{math}
    \exists u,r.(
        ua(u,r) \wedge r\succeq Manager  ~\wedge~
        ua' = ua \ominus (u_1,ProjectLead) ) .
  \end{math}
  The remaining three \textsl{can\_revoke}s can be obtained from the formula
  above by simply replacing Manager and ProjectLead with Manager and
  Engineer for $\langle Manager, Engineer \rangle$, with HumanResource
  and FullTime for $\langle HumanResource,$ $FullTime \rangle$, and
  with HumanResource and PartTime for $\langle HumanResource,$
  $PartTime \rangle$.  % As for \textsl{can\_assign} actions, the guard of the
%   instance of (\ref{eq:can_revoke}) above can be rewritten as
%   $ua^*(u,Manager)$.
%   \begin{eqnarray*}
%     \exists u,r,u_1,r_1.\left(
%       \begin{array}{ll}
%         ua(u,r) \wedge r\succeq Manager  & \wedge\\
%         r_1 = ProjectLead \wedge ua' = ua \ominus (u_1,r_1)
%       \end{array} 
%       \right) \\
%     \exists u,r,u_1,r_1.\left(
%       \begin{array}{ll}
%         ua(u,r) \wedge r\succeq Manager  & \wedge\\
%         r_1 = Engineer \wedge ua' = ua \ominus (u_1,r_1)
%       \end{array} 
%       \right) \\
%     \exists u,r,u_1,r_1.\left(
%       \begin{array}{ll}
%         ua(u,r] \wedge r\succeq HumanResource  & \wedge\\
%         r_1 = FullTime \wedge ua' = ua \ominus (u_1,r_1)
%       \end{array} 
%       \right) \\
%     \exists u,r,u_1,r_1.\left(
%       \begin{array}{ll}
%         ua(u,r] \wedge r\succeq HumanResource  & \wedge\\
%         r_1 = PartTime \wedge ua' = ua \ominus (u_1,r_1)
%       \end{array} 
%       \right) .
%   \end{eqnarray*}
%   As in~\cite{li-tripunitara}, let $AAR$ be the set obtained by adding
%   the last four formulae above to $AATU$. 
  \qed 
\end{example}
Notice that the guards of the transitions of the form
(\ref{eq:can_assign}) do not correspond exactly to those introduced in
Section~\ref{sec:ARBAC-intro}.  On the one hand, policy expressions
give us the possibility to require a user $u$ to be an explicit member
of a certain role $r$ in the guard of transition (by writing
$ua^*(u,r)$) while preconditions of a \textsl{can\_assign} can only
require a user to be an implicit member of a role (i.e.\ $ua^*(u,r)$).
On the other hand,
% it is possible to express $ua^*(u,r)$, 
it is not possible, in general, to express $\neg ua^*(u,r)$ (i.e.\ $u$
is neither an explicit nor an implicit member of $r$), although it is
possible to use $\neg ua(u,r)$ (i.e.\ $u$ is not an explicit member of
$r$).
% First, there is no disjunction.  This is without loss of generality as
% we can always transform a precondition to disjunctive normal form,
% i.e.\ a formula of the form $\bigvee_{i=1}^n C_i$ for $C_i$ a policy
% expression and then create as many instances of (\ref{eq:can_assign})
% as the disjuncts $C_1, ..., C_n$ with the same update for $ua$.  
This is so because to express $\neg ua^*(u,r)$, a universal
quantification is required; recall from Section~\ref{sec:ARBAC-intro}
that $\neg ua^*(u,r)$ abbreviates $\forall r'.(r'\succeq r\Rightarrow
\neg ua(u,r))$.  In other words, only a limited form of negation can
be expressed in the guards of our formalization of a \textsl{can\_assign}
action.  This simplifies the technical development that follows, in
particular the proof of termination of the procedure used to solve the
user-role reachability problem (see
Section~\ref{sec:reachability-analysis} for details).  We plan to
adapt a technique used in infinite state model checking for handling
global conditions to allow $\neg ua^*(u,r)$ in the guards of
transitions (see, e.g.,~\cite{abdulla-delzanno-rezine}) but leave this
to future work.  Here, we
% The second difference with respect to preconditions of \textsl{can\_assign}
% actions is that no universal quantifier may occur in guards; thus, it
% is impossible to express $\neg ua^*[u,r]$.  If one decides to allow
% universal quantifiers to occur in guards, then several nice properties
% of the framework we are proposing no more hold.  In particular, all
% those features that allow us to design a procedure capable of
% automatically solving the goal reachability problem are lost, e.g.,
% the satisfiability problems to which the goal reachability problem is
% reduced are no more decidable.  Here, we 
observe that in many situations of practical relevance, it is possible
to overcome this difficulty.  For example, when there are only
finitely many roles ranging over a set $R$, it is possible to
eliminate the hierarchy as explained in~\cite{stoller0} so that the
framework proposed in this paper applies without problems.
% the formula $\forall r'.(r'\succeq r\Rightarrow \neg ua(u,r))$ (whose
% abbreviation is $\neg ua^*(u,r)$) can be equivalently rewritten as
% $\bigwedge_{r'\in R} (r'\succeq r \Rightarrow \neg ua(u,r'))$ without
% using a quantifier.  Thus, any guard containing $\neg ua^*(u,r)$ is
% equivalent to a quantifier-free formula so that several instances of
% (\ref{eq:can_assign}) can be derived.  To understand how this is
% possible, recall that if the guard of a transition is a
% quantifier-free formula, it can be transformed to disjunctive normal,
% i.e.\ a formula of the form $\bigvee_{i=1}^n C_i$ for $C_i$ a policy
% expression.  At this point, it is possible to create as many instances
% of (\ref{eq:can_assign}) as the disjuncts $C_1, ..., C_n$ with the
% same update for $ua$.  
It is worth noticing that although the set of roles has been assumed
to be bounded, our framework supports the situation where the set of
users can be finite but its cardinality is unknown.  % The approaches

\subsection{Reachability and satisfiability modulo $T_{\mathit{ARBAC}}$}
\label{subsec:reach-and-smt}
At this point, it should be clear that the (algebraic) structures of
users, roles, and permission can be specified by suitable theories;
that we can symbolically represent RBAC policies and goals by using
$\forall$-formulae and $\exists$-formulae, respectively, \textsl{can\_assign}
actions by formulae of the form (\ref{eq:can_assign}), and \textsl{can\_revoke}
actions by formulae of the form (\ref{eq:can_revoke}).  As a
consequence, we can rephrase the user-goal reachability problem
introduced in Section~\ref{sec:ARBAC-intro} as follows.

Let $T_{\mathit{ARBAC}}$ be a $\Sigma_{\mathit{ARBAC}}$-theory given
as described above and specifying the structure of users, roles,
permission, role hierarchy, and the permission-role relation.  If
$\Gamma:=(S_0, A)$ is an ARBAC policy together with a set
$\mathcal{C}$ of constraints on the set of states that the actions of
the system must satisfy (e.g., SMER), then derive the associated
\emph{symbolic ARBAC policy} $\Gamma_s:=(In(ua), Tr, C)$ as explained
above, where $In$ is a $\forall$-formula representing the initial set
$S_0$ of RBAC policies, $Tr$ is a finite set of instances of
(\ref{eq:can_assign}) or of (\ref{eq:can_revoke}) corresponding to the
actions in $A$, and $C$ is a finite set of $\forall$-formula
representing constraints in $\mathcal{C}$.
Furthermore, let $\gamma_s$ be an $\exists$-formula of the form
\begin{eqnarray}
  \label{eq:symbolic-goal}
  \exists u_1, r_1, p_1, ... , u_n, r_n, p_n.
   \bigwedge_{i=1}^n (ua(u_i,r_i) \wedge 
                     r_i \bowtie e^r_{j_i} \wedge
                     p_i=e^p_{j_i}) ,
\end{eqnarray}
called a \emph{symbolic goal} and corresponding to a goal $RP:= \{
(e^r_{j_i},e^p_{j_i}) ~|~ i= 1, ..., n\}$, where $\bowtie\in \{ =,
\succeq\}$.  Then, it is easy to see that the user-role reachability
problem for $\Gamma$ with $RP$ as goal is solvable iff there exists a
natural number $\ell\geq 0$ such that the formula
\begin{equation}
  \label{eq:unsafe}
  In(ua_0)\wedge 
  \bigwedge_{i=0}^{\ell} (\iota({a}_i) \wedge 
                        \tau(ua_i, ua_{i+1})\wedge
                        \iota({a}_{i+1})) 
  \wedge \gamma_s(ua_{\ell})
\end{equation}
is {satisfiable modulo $T_{ARBAC}$}, where $\tau$ is the disjunction
of the formulae in $Tr$, and $\iota$ is the disjunction of those in
$C$.  Notice that the (big) conjunction over $\ell$ with $In$ in
(\ref{eq:unsafe}) can be seen as a characterization of the set of
states (forward) reachable from the initial set of states.
Symmetrically (and more interestingly for the rest of this paper), the
(big) conjunction over $\ell$ with $\gamma_s$ in (\ref{eq:unsafe})
characterizes the set of states backward reachable from the goal
states.
% An instance of the \emph{goal reachability problem} is a pair
% $(\Gamma, \gamma)$ and consists of establishing whether there exists a
% natural number $\ell\geq 0$ such that the formula
% \begin{equation}
%   \label{eq:unsafe}
%   In(ua_0)\wedge 
%   \bigwedge_{i=0}^{\ell} (\iota({a}_i) \wedge 
%                         \tau(ua_i, ua_{i+1})\wedge
%                         \iota({a}_{i+1})) 
%   \wedge \gamma(ua_{\ell})
% \end{equation}
% is \emph{satisfiable modulo the theory $T_{URA}$}, i.e.\ there exists
% a model of $T_{URA}$ satisfying (\ref{eq:unsafe}), where
% $\tau:=\bigvee_{i=1}^n \tau_i$, $\iota:=\bigwedge_{i=1}^m \iota_i$,
% and $ua_0, ua_1, ..., ua_{\ell}$ are renamed copies of the state
% variable $ua$.  
% If there is no such $\ell$, then the goal $\gamma$ is
% \emph{unreachable}; otherwise, it is \emph{reachable} since the
% satisfiability of \eqref{eq:unsafe} implies the existence of a
% sequence of transitions of length $\ell$, called a \emph{run}, leading
% $\Gamma_s$ from a state in $In$ to a state satisfying $\gamma_s$.  
We observe that when $\ell=0$, no actions must be performed and
already some of the states in $In$ satisfies $\gamma_s$, thus, formula
(\ref{eq:unsafe}) simplifies to $In(ua_0)\wedge \iota(ua_0)\wedge
\gamma_s(ua_0)$.  
\begin{example} 
  \label{ex:role-I-axioms}
  We illustrate the check for satisfiability of the formula
  (\ref{eq:unsafe}) for $\ell=0$ by reconsidering the situation
  described in Example~\ref{ex:two}.  The problem was to establish if
  the formula $In(ua)$ of Example~\ref{ex:one} and the goal formula
%   \begin{eqnarray*}
%     \exists u,r,p.(ua(u,r)] \wedge pa[p,r] \wedge
%     u = Alice \wedge r \succeq FullTime \wedge p = Access) 
%   \end{eqnarray*}
  of Example~\ref{ex:two} are satisfiable modulo the theory
  $T_{\mathit{ARBAC}}$ in Example~\ref{ex:zero-bis}.  % Since we are
%   considering the situation where no administrative actions can be
%   taken, we can trasform the problem into a particular instance of a
%   reachability problem under the assumptions of URA97.  In fact, we
%   can see the first conjunct of $In(ua,pa,rh)$ in Example~\ref{ex:one}
%   as $In(ua)$, the singleton set containing its second conjunct as
%   $I_{Permission}$ (i.e.\ an axiom of the theory $T_{Permission}$),
%   and the signelton set containing its third conjuct (in which, we
%   replace $rh$ with $\succeq$) as $I_{Role}$ (i.e.\ an axioms of the
%   theory $T_{Role}$).  
  We assume that the set of constraints of the symbolic ARBAC polices
  is empty.  In this context, the formula (\ref{eq:unsafe}) above can
  be written as follows: 
  \begin{eqnarray*}
    PO ~ := ~ \forall u,r.(ua(u,r) \Leftrightarrow 
    \left(
     \begin{array}{ll}
       (u=Alice \wedge r = PartTime)    &  \vee \\
       (u=Alice \wedge r=Engineer)      &  \vee \\
       (u=Bob \wedge r=Manager)         &  \vee \\
       (u=Carol \wedge r=HumanResource) &
     \end{array}
     \right) )  & \wedge \\
    \exists u_1,r_1,p_1.(ua(u_1,r_1) \wedge pa(p_1,r_1) \wedge
                   u_1 = Alice \wedge r_1 \succeq FullTime \wedge p_ = Access)  & ,
   \end{eqnarray*}
   where the existentially quantified variables in the goal have been
   renamed for clarity.  The problem is to establish the
   satisfiability of $PO$ modulo the theory $T_{\mathit{ARBAC}}$ in
   Example~\ref{ex:zero-bis}.  As it will be seen below, there exists
   an algorithm capable of answering this question automatically.  For
   $PO$, the algorithm would return `unsatisfiable,' entitling us to
   conclude that the set of initial states considered in
   Example~\ref{ex:one} do not satisfy the goal of allowing Alice, who
   is a full-time employee, to get access to a certain resource. \qed
\end{example}
% Two remarks are in order.  First, let a $\Gamma:=(s_0, \{\alpha_1,
% ..., \alpha_n\})$ be an ARBAC system, $\Gamma_s:=(In(ua), \{\tau_1,
% ..., \tau_n\})$ be a symbolic ARBAC system (we have omitted the
% constraint for simplicity but the following discussion can be easily
% extended to also consider them), $RP:=\{(e^r_j,e^p_k)\}$ be a goal
% (the generalization to several pairs of roles and permissions is
% straightforward), and $\gamma_s$ be a symbolic goal of the form
% $\exists u,r,p.(ua(u,r]\wedge r \bowtie e^r_j \wedge p=e^p_k)$.  If
% $s_0\models In(ua)$, for every pair of states $s,s'$ we have that
% $(s,s')\in alpha_i$ iff $s,s'\models \tau_i(ua,ua')$, then the goal
% reachability problem is solvable iff the symbolic reachability problem
% is so.  The second remark is the following.  
If we were able to automatically check the satisfiability of formulae
of the form (\ref{eq:unsafe}), an idea to solve the user-role
reachability problem for ARBAC policies would be to generate instances
of (\ref{eq:unsafe}) for increasing values of $\ell$.  However, this
would not give us a decision procedure for solving the goal
reachability problem but only a semi-decision procedure.  In fact, the
method terminates only when the goal is reachable from the initial
state, i.e.\ when, for a certain value of $\ell$, the instance of the
formula (\ref{eq:unsafe}) is unsatisfiable modulo
$T_{\mathit{ARBAC}}$.  When, instead, the goal is not reachable, the
check will never detect the unsatisfiability and we will be forced to
generate an infinite sequence of instances of (\ref{eq:unsafe}) for
increasing values of $\ell$.  In other words, the decidability of the
satisfiability of (\ref{eq:unsafe}) modulo $T_{\mathit{ARBAC}}$ is
only a necessary condition for ensuring the decidability of the
user-role reachability problem.
% In fact, when the satisfiability check is negative---i.e.\
% (\ref{eq:unsafe}) is unsatisfiable modulo $T_{\mathit{ARBAC}}$---we
% do not known when to stop considering larger and larger values of
% $\ell$ so as to conclude that the goal is unreachable.
Fortunately, is possible to stop enumerating instances of
(\ref{eq:unsafe}) for a certain value $\overline{\ell}$ of $\ell$ when
the formula characterizing the set of reachable states for
$\ell=\overline{\ell}+1$ implies that characterizing the set of
reachable states for $\ell=\overline{\ell}$; i.e.\ we have detected a
\emph{fixed-point}.  We explore this idea in the following section.

\section{Symbolic analysis of ARBAC policies} %
\label{sec:reachability-analysis}
A general approach to solve the user-role reachability problem is
based on computing the set of backward reachable states.  It is
well-known that the computation of sets of backward (rather than
forward) reachable states is easier to mechanize.  For $n\geq 0$, the
$n$-\emph{pre-image} of a formula $K(ua)$ is a formula $Pre^n(\tau,K)$
recursively defined as follows: $Pre^0(\tau,K) := K$ and
$Pre^{n+1}(\tau, K) := Pre(\tau, Pre^n(\tau, K))$, where\footnote{In
  (\ref{eq:def-pre}), we use a second order quantifier over the
  relation symbol $ua$, representing the state of the system.  This
  should not worry the reader expert in first-order theorem proving
  since a higher-order feature is only used to give the definition of
  pre-image.  We will see that we can compute a first-order formula
  logically equivalent to (\ref{eq:def-pre}) so that only first-order
  techniques should be used to mechanize our approach.}
\begin{eqnarray}
  \label{eq:def-pre}
  Pre(\tau, K) & := & \exists ua'.(\tau(ua,ua') \wedge K(ua'))  .
\end{eqnarray}
The formula $Pre^n(\tau, \gamma)$ describes the set of states from
which it is possible to reach the goal $\gamma$ in $n\geq 0$ steps.
\begin{figure}[tb]
  \begin{center}
  \begin{tabular}{c}
    \begin{minipage}{.45\textwidth}
      \begin{tabbing}
        foo \= foo \= \kill
        \textbf{function} $\mathsf{BReach}(\Gamma ~:~ (In,Tr,C),~ \gamma ~:~ \exists\mbox{-formula})$ \\
        1 \> $P\longleftarrow \gamma$;  $B\longleftarrow \mathit{false}$; $\tau\longleftarrow \bigvee_{t\in Tr} t$; $\iota\longleftarrow \bigwedge_{i\in C} i$;\\
        2\> \textbf{while} ($\iota\wedge P\wedge \neg B$ is satisfiable modulo $T_{\mathit{ARBAC}}$) \textbf{do}\\
        3\>\> \textbf{if} ($In\wedge P$ is satisfiable modulo $T_{\mathit{ARBAC}}$) \\
        \>\> \hspace{.75cm} \textbf{then return}  $\mathsf{reachable}$;\\
        4\> \> $B\longleftarrow P\vee B$; \\ 
        5\>\> $P\longleftarrow Pre(\tau, P);$ \\
        6\> \textbf{end} \\
        7\> \textbf{return} $\mathsf{unreachable};$ 
      \end{tabbing}
    \end{minipage}
  \end{tabular}
  \end{center}
  \caption{\label{fig:reach-algo}%
    The basic backward reachability procedure}
\end{figure}
At the $n$-th iteration of the loop, the \emph{backward reachability
  algorithm} depicted in Figure~\ref{fig:reach-algo}, stores the
formula $Pre^n(\tau, \gamma)$ in the variable $P$ and the formula
$BR^n(\tau, \gamma):=\bigvee^n_{i=0} Pre^i(\tau, \gamma)$
(representing the set of states from which the goal $\gamma$ is
reachable in \emph{at most} $n$ steps) in the variable $B$.  While
computing $BR^n(\tau, \gamma)$, $\mathsf{BReach}$ also checks whether
the goal is reachable in $n$ steps (cf.\ line 3, which can be read as
$In\wedge Pre^n(\tau, \gamma)$ is satisfiable modulo
$T_{\mathit{ARBAC}}$) or a fixed-point has been reached (cf.\ line 2,
which can be read as $\neg((\iota\wedge BR^n(\tau, \gamma))\Rightarrow
BR^{n-1}(\tau, \gamma))$ is unsatisfiable modulo $T_{\mathit{ARBAC}}$
or, equivalently, that $((\iota\wedge BR^n(\tau, \gamma))\Rightarrow
BR^{n-1}(\tau, \gamma))$ is valid modulo $T_{\mathit{ARBAC}}$).
{Notice that $BR^{n-1}(\tau, \gamma)\Rightarrow BR^n(\tau, \gamma)$ is
  valid by construction; thus, if $((\iota\wedge BR^n(\tau,
  \gamma))\Rightarrow BR^{n-1}(\tau, \gamma))$ is a logical
  consequence of $T_{ARBAC}$, then also $((\iota\wedge BR^n(\tau,
  \gamma))\Leftrightarrow BR^{n-1}(\tau, \gamma))$ is so and a
  fixed-point has been reached.}  The invariant $\iota$ is conjoined to
the set of backward reachable states when performing the fixed-point
check as only those states that also satisfies the constraints are
required to be considered.  When $\mathsf{BReach}$ returns
$\mathsf{unreachable}$ (cf.\ line 7), the variable $B$ stores the
formula describing the set of states which are backward reachable from
$\gamma$ which is also a fixed-point.  Otherwise, when it returns
$\mathsf{reachable}$ (cf.\ line 3) at the $n$-th iteration, there
exists a run of length $n$ that leads the ARBAC policy from a RBAC
policy in $In$ to one in $\gamma$.
%%% SR dropped for the moment... %%%%%%%%%%%%%%%%%%%%%%%%%%%%%%%%%%%%%
% With some further book-keeping, it is possible to return the exact
% sequence of transitions in run when exiting with
% $\mathsf{reachable}$.  Thus, $\mathsf{BReach}$ can be used as the
% basis for an automated analysis techniques capable of verifying the
% correctness of policies but also for their debugging.  Indeed, this
% is an interesting feature of the approach because error-finding
% nicely complements verification. 
%%%%%%%%%%%%%%%%%%%%%%%%%%%%%%%%%%%%%%%%%%%%%%%%%%%%%%%%%%%%%%%%%%%%%%
We observe that for $\mathsf{BReach}$ to be an effective (possibly
non-terminating) procedure, it is mandatory that (i) the formulae used
to describe the set of backward reachable states are closed under
pre-image computation and (ii) both the satisfiability test for safety
(line 3) and that for fixed-point (line 2) are effective.

Regarding (i), it is sufficient to prove the following result.
\begin{property}
  \label{prop:closure-pre}
  Let $K$ be an $\exists$-formula.  If $\tau$ is of the form
  (\ref{eq:can_assign}) or (\ref{eq:can_revoke}), then $Pre(\tau,K)$
  is equivalent (modulo $T_{\mathit{ARBAC}}$) to an effectively
  computable $\exists$-formula.
\end{property}
\begin{proof}
  Let $K(ua):=\exists \underline{\tilde{u}},
  \underline{\tilde{r}}.\gamma(\underline{\tilde{u}},
  \underline{\tilde{r}},ua(\underline{\tilde{u}},\underline{\tilde{r}}))$,
  where $\gamma$ is a quantifier-free formula.  By definition,
  $Pre(\tau,K)$ is $\exists ua'.(\tau(ua,ua') \wedge K(ua'))$ and
  there are two cases to consider.  The former is when $\tau$ is of
  the form (\ref{eq:can_assign}).  In this case, $\exists
  ua'.(\tau(ua,ua') \wedge K(ua'))$ is equivalent to
  \begin{eqnarray*}
      \exists u,r, u_1, r_1, r_2, ..., r_k. 
      \left(
        \begin{array}{l}
          C(u,r, u_1, r_1, r_2, ..., r_k)  \wedge  \\
          \exists \underline{\tilde{u}},
          \underline{\tilde{r}}.\gamma(\underline{\tilde{u}},
          \underline{\tilde{r}},(ua \oplus (u_1,e^r))(\underline{\tilde{u}},\underline{\tilde{r}}))
        \end{array}
        \right)
  \end{eqnarray*}
  by simple logical manipulations and recalling the definition of $K$.
  In turn, this can be expanded to
  \begin{eqnarray*}
    \begin{array}{l}
      \exists u,r, u_1, r_1, r_2, ..., r_k.
      (C(u,r, u_1, r_1, r_2, ..., r_k)  \wedge  \\
      ~~~~\exists \underline{\tilde{u}},
      \underline{\tilde{r}}.\gamma(\underline{\tilde{u}},
      \underline{\tilde{r}},
      (  \lambda w,r.(\mathit{if}~(w=u\wedge r=e^r)
      ~\mathit{then}~\mathit{true}
      ~\mathit{else}~ua(w,r))) (\underline{\tilde{u}},\underline{\tilde{r}})))
    \end{array}
  \end{eqnarray*}
  by recalling the definition of $\oplus$.  It is possible to
  eliminate the $\lambda$-expression by observing that each of its
  occurrence will be applied to a pair of existentially quantified
  variables from $\underline{\tilde{u}},\underline{\tilde{r}}$ so that
  $\beta$-reduction can be applied.  After this phase, the
  `if-then-else' expressions can be eliminated by using a simple
  case-analysis followed by the moving out of the existential
  quantifiers that allows us to obtain an $\exists$-formula.  This
  concludes the proof of this case.  The second case, i.e.\ when
  $\tau$ is of the form (\ref{eq:can_revoke}), is omitted because
  almost identical to the previous.  \qed
\end{proof}
Observe also that $Pre(\bigvee_{i=1}^n\tau_i, K)$ is equivalent to
$\bigvee_{i=1}^n Pre(\tau_i, K)$ for $\tau_i$ of forms
(\ref{eq:can_assign}) and (\ref{eq:can_revoke}), for $i=1,...,n$.
\begin{example}
  \label{ex:four}
  To illustrate Property~\ref{prop:closure-pre}, we consider one of
  the transitions written in Example~\ref{ex:three} and the goal in
  Example~\ref{ex:two}.  We compute the pre-image w.r.t.\ the second
  transition in $AATU$ (where $HR$ stands for $HumanResource$ and $FT$
  for $FullTime$), i.e.
  \begin{eqnarray*}
    \exists u,r,p.(ua'(u,r) \wedge pa(p,r) \wedge
                   u = Alice \wedge r \succeq FT \wedge
                   p = Access) \wedge \\
    \exists u_1,r_1,u_2.(
        ua(u_1,r_1) \wedge r_1=HR \wedge u_1\neq  Carol  \wedge 
        ua' = ua \oplus (u_2,FT) ) & ,
  \end{eqnarray*}
  where $ua'$ is implicitly existentially quantified.  By simple
  logical manipulations, we have
  \begin{eqnarray*}
    \exists u,r,p, u_1,r_1,u_2 .(pa(p,r)\wedge
     (\mathit{if}~u=u_2\wedge r=FT
      ~\mathit{then}~\mathit{true}
      ~\mathit{else}~ ua(u,r)) \wedge \\
      u = Alice \wedge r \succeq FT \wedge p = Access \wedge 
      ua(u_1,r_1) \wedge r_1=HR \wedge u_1\neq  Carol ) ,
  \end{eqnarray*}
  which, by case analysis and some simplification steps, can be
  rewritten to
  \begin{eqnarray*}
    \exists u,r,p, u_1,r_1,u_2 .(pa(p,r)\wedge
    (r=FT \wedge u_2 = Alice  \wedge p = Access & \wedge \\
    ua(u_1,r_1) \wedge r_1=HR \wedge u_1\neq  Carol) & \vee \\
    (pa(p,r)\wedge u\neq u_2\wedge ua(u,r) \wedge u = Alice \wedge r \succeq FT
    \wedge  p = Access & \wedge \\
    ua(u_1,r_1) \wedge r_1=HR \wedge u_1\neq  Carol) & \vee \\
    (pa(p,r)\wedge r\neq FT \wedge ua(u,r) \wedge 
    u = Alice \wedge r \succeq FT \wedge p = Access & \wedge \\
    ua(u_1,r_1) \wedge r_1=HR \wedge u_1\neq  Carol ) ) & ,
  \end{eqnarray*}
  which is an $\exists$-formula according to
  Property~\ref{prop:closure-pre}.  \qed
\end{example}
Concerning the decidability of the satisfiability tests for safety and
fixed-point in the backward reachability algorithm in
Figure~\ref{fig:reach-algo} (point (ii) above), we observe that the
formulae at lines 2 and 3 can be effectively transformed to formulae
in the form $\exists
\underline{x}\forall\underline{y}.\varphi(\underline{x},
\underline{y}, ua)$ where $\underline{x}$ and $\underline{y}$ are
disjoint, which belong to the \emph{Bernays-Sch\"onfinkel-Ramsey}
(BSR) class (see, e.g.,~\cite{demoura-bjoerner-piskac}).  To see how
this is possible, let us consider the formulae at line 2.  This is the
conjunction of a $\forall$-formula ($\iota$), an $\exists$-formula (as
discussed above, the variable $P$ stores $Pre^n(\tau,\gamma)$, which
by Property~\ref{prop:closure-pre} is an $\exists$-formula), and
another $\forall$-formula (as discussed above, the variable $B$ stores
$\bigvee_{i=0}^n Pre^i(\tau,\gamma)$ whose negation is a conjunction
of $\forall$-formulae by Property~\ref{prop:closure-pre}, which is a
$\forall$-formula).  By moving out quantifiers (which is always
possible as quantified variables can be suitably renamed), it is
straightforward to obtain a BSR formula.  Now, let us turn our
attention to the formula at line 3.  It is obtained by conjoining a
$\forall$-formula ($In$ is so by assumption) and an $\exists$-formula
(stored in the variable $P$, see previous case).  Again, by simple
logical manipulations, it is not difficult to obtain a formula in the
BSR class.  We also observe that checking the satisfiability of BSR
formulae modulo $T_{\mathit{ARBAC}}$ can be reduced to checking the
satisfiability of formulae in the BSR class since all the axioms of
$T_{\mathit{ARBAC}}$ are universal sentences, i.e.\ BSR formulae.
Collecting all these observations, we can state the following
result.% \footnote{One may observe that there infinitely many
%   disequalities in the axioms of $T_{\mathit{ARBAC}}$ and this may
%   give rise to undecidability of the satisfiability checks.  Indeed,
%   this is not the case because only finitely many such disequalities
%   must be considered at any time.  In particular, only those
%   disequalities involving the constants occurring in $\varphi$ need to
%   be considered.}
\begin{property}
  \label{lem:dec-URA}
  The satisfiability tests at lines 2 and 3 of the backward
  reachability procedure in Figure~\ref{fig:reach-algo} are decidable.
\end{property}
This property is a corollary of the decidability of the satisfiability
of the BSR class (see, e.g.,~\cite{demoura-bjoerner-piskac}).
% There are many ways to prove this result.  Probably, the easiest way
% is to apply Herbrand theorem and realize that the domain of the
% Herbrand interpretation is finite.  More precisely, we Skolemize the
% existential prefix and obtain the formula $\forall
% \underline{y}.\varphi(\underline{x}, \underline{y})$, where the
% variables in $\underline{x}$ are considered as ``fresh'' constants.
% Then, the Herbrand domain is a finite set of constants in
% $\underline{x}$ and those occurring in $\varphi$, say $C$.  Following
% Herbrand instantiation procedure, it is possible to obtain the
% following ground formula to be checked for satisfiability:
% $\bigwedge_{\sigma} \varphi(\underline{x}, \underline{y}\sigma)$,
% where $\sigma$ ranges over all possible instantiations of the
% variables in $\underline{y}$ to the constants in $\underline{x}\cup
% C$.  The satisfiability of this formula can be checked by using a
% satisfiability (SAT) solver for proposition logic by considering all
% the atoms in the instances of $\varphi$ as propositional
% letters. Indeed, since it is required to consider possibly huge
% numbers of instances, heuristics to perform these instances
% efficiently are needed (see, e.g.,~\cite{demoura-bjoerner-piskac}).
Example~\ref{ex:four} above contains an illustration of a
satisfiability test to which Property~\ref{lem:dec-URA} applies.

\subsection{Termination}
%\label{subsec:br-termination}
The closure under pre-image computation
(Property~\ref{prop:closure-pre}) and the decidability of the
satisfiability checks (Property~\ref{lem:dec-URA}) guarantee the
possibility to mechanize the backward reachability procedure in
Figure~\ref{fig:reach-algo} but do not eliminate the risk of
non-termination.  There are various sources of diverge.  For example,
the existential prefix of a pre-image is extended at each pre-image
computation with new variables as shown in the proof of
Property~\ref{prop:closure-pre}.  % This potential source
% of divergence has been already discussed in~\cite{stoller2} for a
% similar problem.  
Another potential problem is that the fixed-point could not be
expressed by using disjunctions of $\exists$-formulae (according to
line 4 in Figure~\ref{fig:reach-algo}) even if it exists so that the
procedure is only able to compute approximations and thus never
terminates.
% (this was also noted in~\cite{stoller2} where it is said that ``states
% are described by potentially unbounded sets of parameterized boolean
% variables corresponding to role membership facts.''  
To show that both problems can be avoided and that the procedure in
Figure~\ref{fig:reach-algo} terminates,
% (thus proving the decidability of the goal reachability problem)
we follow the approach proposed in~\cite{ijcar08,AbdullaTCS} for
proving the termination of backward reachability for certain classes
of infinite state systems.  We introduce a model-theoretic notion of
certain sets of states, called \emph{configurations}, which are the
semantic counter-part of $\exists$-formulae, and then define a
well-quasi-order on them: this, according to the results
in~\cite{AbdullaTCS}, implies the termination of the backward
reachability procedure.  For lack of space, the full technical
development is omitted and can be found in
%%% SR: COMMENT OR UNCOMMENT THE FOLLOWING LINES FOR THE SHORT OR LONG
%%% VERSION, RESP.
\cite{ext-version};
% Appendix~\ref{app:termination};
here, we only sketch the main ideas.  We also point out that this
result can be seen as a special case of that in~\cite{ijcar08},
developed in a more general framework that allows for the
formalization and the analysis of safety properties for concurrent,
distributed, and timed systems as well as algorithms manipulating
arrays.  However, we believe worthwhile to prove termination for the
procedure presented in this paper (along the lines of~\cite{ijcar08})
as some technical definitions become much simpler.

A \emph{state of the symbolic ARBAC policy $\Gamma:= (In, Tr, C)$} is
a structure $\mathcal{M}\in Mod(T_{ARBAC})$, i.e.\ it is an RBAC
policy belonging to a certain class of first-order structures.  A
\emph{configuration} of $\Gamma$ is a state $\mathcal{M}$ such that
the cardinality of the domain of $\mathcal{M}$ is finite.
% For the sake of simplicity, $\mathcal{M}$ will be omitted whenever
% clear from the context.
Intuitively, a configuration is a finite representation of a possibly
infinite set of states that ``contains at least the part mentioned in
the configuration.''  The following example can help to grasp the
underlying intuition.
\begin{example}
  \label{ex:goal-semantics}
  As in Example~\ref{ex:unbounded}, let $T_{\mathit{User}}=\emptyset$,
  $T_{\mathit{Role}}=\emptyset$.  Consider the $\exists$-formula:
  $\exists u,r.(ua(u,r) \wedge u=e^u_0 \wedge r=e^r_0)$.  There is no
  bound on the number of pairs $(e^u_i,e^r_k)$ in a RBAC policy $s$
  satisfying the $\exists$-formula above provided that
  $(e^u_0,e^r_0)\in s(ua)$.  Our procedure for the reachability
  problem considers (only) those RBAC policies $s$ of the form
  $s(ua)=\{(e^u_0,e^r_0)\} \cup \Delta$ where $\Delta$ is a (possibly
  empty) set of pairs $(e^u_{i},e^r_{k})$ with $i,j\neq 0$.  In other
  words, the procedure considers all those configurations which
  contain at least the pair $(e^u_0,e^r_0)$ mentioned in the
  $\exists$-formula above plus any other (finite) set $\Delta$ of
  pairs.\qed
\end{example}
The idea that a configuration represents a (possibly infinite) set of
RBAC policies sharing a common (finite) set of user-role assignments
can be made precise by using the notion of partial order.  A
\emph{pre-order} $(P,\leq)$ is the set $P$ endowed with a reflexive
and transitive relation.  An \emph{upward closed set $U$} of the
pre-order $(P,\leq)$ is such that $U\subseteq P$ and if $p\in U$ and
$p\leq q$ then $q\in U$.  A \emph{cone} is an upward closed set of the
form $\uparrow p = \{ q\in P ~|~ p\leq q\}$.  We define a
\emph{pre-order on configurations} as follows.  Let $\mathcal{M}$ and
$\mathcal{M}'$ be configurations of $\Gamma$; $\mathcal{M} \leq
\mathcal{M}'$ iff there exists an embedding from $\mathcal{M}$ to
$\mathcal{M}'$.  Roughly, an embedding is a homomorphism that
preserves and reflects relations (see~\cite{ext-version} for a formal
definition)
%%% SR: UNCOMMENT THE FOLLOWING LINE FOR THE  LONG VERSION.
% (for a precise definition see Appendix~\ref{app:termination})
.  A configuration is the semantic counter-part of an
$\exists$-formula.  Let $[[K]] := \{ \mathcal{M}\in
Mod(T_{\mathit{ARBAC}}) ~|~ \mathcal{M} \models K \}$, where $K$ is an
$\exists$-formula.
\begin{lemma}
  \label{lem:semantic-counterpart}
  The following facts hold: (i) for every $\exists$-formula $K$, the
  set $[[K]]$ is upward closed and (ii) $[[K_1]]\subseteq [[K_2]]$ iff
  $(K_1\Rightarrow K_2)$ is valid modulo $T_{\mathit{ARBAC}}$, for
  every pair of $\exists$-formulae $K_1,K_2$.
\end{lemma}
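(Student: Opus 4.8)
The plan is to prove the two parts in sequence, using the syntactic shape of $\exists$-formulae together with basic model-theoretic facts about embeddings, namely that existential (positive-or-negative atomic) formulae are preserved under embeddings and that their negations, being universal, are preserved under substructures. For part~(i), I would take an $\exists$-formula $K = \exists \underline{u},\underline{r},\underline{p}.\,\varphi(\underline{u},\underline{r},\underline{p})$ with $\varphi$ quantifier-free, and a model $\mathcal{M}\in[[K]]$ together with a configuration $\mathcal{M}'$ with $\mathcal{M}\leq\mathcal{M}'$, i.e.\ an embedding $h\colon\mathcal{M}\to\mathcal{M}'$. Fix witnesses $\underline a,\underline b,\underline c$ in $\mathcal{M}$ for the existential quantifiers so that $\mathcal{M}\models\varphi[\underline a,\underline b,\underline c]$. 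Since $\varphi$ is a Boolean combination of atoms over the predicates $\succeq, pa, ua$ and the sort-specific predicates (and equalities), and since an embedding by definition both preserves and reflects all relations (and is injective, hence preserves and reflects equality), each atom has the same truth value at $\underline a,\underline b,\underline c$ in $\mathcal{M}$ as its image has at $h(\underline a),h(\underline b),h(\underline c)$ in $\mathcal{M}'$. Hence $\mathcal{M}'\models\varphi[h(\underline a),h(\underline b),h(\underline c)]$, so $\mathcal{M}'\models K$. One subtlety to address explicitly: the definition of $\leq$ quantifies over configurations (finite-domain models), so I should remark that $[[K]]$ as an upward-closed subset of configurations is what is meant, or equivalently restrict attention throughout to $Mod(T_{\mathit{ARBAC}})$ with finite domain; this is harmless since $T_{\mathit{ARBAC}}$ is universally axiomatized, so substructures of models are models.

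For part~(ii), the direction ``$(K_1\Rightarrow K_2)$ valid modulo $T_{\mathit{ARBAC}}$ implies $[[K_1]]\subseteq[[K_2]]$'' is immediate from the definitions of $[[\cdot]]$ and of validity modulo a theory: any $\mathcal{M}\in Mod(T_{\mathit{ARBAC}})$ satisfying $K_1$ satisfies $K_2$. For the converse, I would argue contrapositively: suppose $(K_1\Rightarrow K_2)$ is not valid modulo $T_{\mathit{ARBAC}}$, so there is some $\mathcal{N}\in Mod(T_{\mathit{ARBAC}})$ with $\mathcal{N}\models K_1\wedge\neg K_2$. From the satisfaction of $K_1 = \exists\underline u,\underline r,\underline p.\,\varphi_1$ pick witnesses $\underline a,\underline b,\underline c$ in $\mathcal{N}$, and let $\mathcal{N}_0$ be the substructure of $\mathcal{N}$ generated by $\{\underline a\}\cup\{\underline b\}\cup\{\underline c\}$ together with the (finitely many) constants named in $\varphi_1$ — since $\Sigma_{\mathit{ARBAC}}$ has no function symbols, this generated substructure has finite domain, i.e.\ it is a configuration. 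Because $T_{\mathit{ARBAC}}$ is axiomatized by universal sentences, $\mathcal{N}_0\in Mod(T_{\mathit{ARBAC}})$; because $\varphi_1$ is quantifier-free and all its parameters lie in $\mathcal{N}_0$, we get $\mathcal{N}_0\models\varphi_1[\underline a,\underline b,\underline c]$, hence $\mathcal{N}_0\models K_1$, so $\mathcal{N}_0\in[[K_1]]$. It remains to see $\mathcal{N}_0\notin[[K_2]]$: here I use part~(i) together with the fact that the inclusion map $\mathcal{N}_0\hookrightarrow\mathcal{N}$ is an embedding, so $\mathcal{N}_0\leq\mathcal{N}$; if we had $\mathcal{N}_0\models K_2$, then by part~(i) $\mathcal{N}\models K_2$, contradicting $\mathcal{N}\models\neg K_2$. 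Thus $\mathcal{N}_0$ witnesses $[[K_1]]\not\subseteq[[K_2]]$.

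The main obstacle — really the only delicate point — is the finiteness of the generated substructure $\mathcal{N}_0$ and its membership in $Mod(T_{\mathit{ARBAC}})$, since configurations are by definition finite-domain structures. Finiteness follows from the crucial design assumption that $\Sigma_{\mathit{ARBAC}}$ contains \emph{no function symbols} (only constants and predicates), so the substructure generated by a finite set is just that finite set together with finitely many interpreted constants; and $Mod$-membership is the standard Łoś–Tarski observation that universal theories are closed under substructures, applicable since every component of $T_{\mathit{ARBAC}} = T_{\mathit{Role}}\cup T_{\mathit{User}}\cup T_{\mathit{Permission}}\cup\mathit{PA}$ is universally axiomatized by assumption. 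Everything else is the routine preservation/reflection bookkeeping for quantifier-free formulae under embeddings that was already sketched for part~(i).
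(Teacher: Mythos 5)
Part (i) and the easy direction of (ii) are correct and essentially coincide with the paper's own argument (upward closure via preservation and reflection of atoms under embeddings, and the trivial ``if'' direction of containment). For the hard direction of (ii) you take a different route: the paper does not build a substructure by hand, it observes that $K_1\wedge\neg K_2$ together with the (universal, hence BSR) axioms of $T_{\mathit{ARBAC}}$ falls into the Bernays--Sch\"onfinkel--Ramsey class and invokes the finite model property from the BSR decidability proof to obtain a \emph{finite} countermodel, i.e.\ a configuration in $[[K_1]]\setminus[[K_2]]$. Your plan --- pick witnesses in an arbitrary countermodel $\mathcal{N}$ and pass to a finitely generated substructure $\mathcal{N}_0$ --- is a reasonable way to re-derive that finite model property from scratch, but it has a gap exactly at the step you flag as the only delicate one.

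The claim that the substructure generated by the witnesses and ``the finitely many constants named in $\varphi_1$'' is finite because $\Sigma_{\mathit{ARBAC}}$ has no function symbols is not correct as stated: $\Sigma_{\mathit{ARBAC}}$ has no function symbols of positive arity, but it contains \emph{countably many} constant symbols $e^u_i, e^r_i, e^p_i$ ($i\geq 0$), and a substructure must interpret all of them, so its domain contains $\{(e^u_i)^{\mathcal{N}},(e^r_i)^{\mathcal{N}},(e^p_i)^{\mathcal{N}} \mid i\geq 0\}$, which may be infinite (nothing in $T_{\mathit{ARBAC}}$ forces constants to collapse). Hence your $\mathcal{N}_0$ need not be a configuration. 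The repair is routine but must be spelled out: since $T_{\mathit{ARBAC}}$ is axiomatized by \emph{finitely many} universal sentences, only finitely many constants occur in $K_1$, $K_2$ and the axioms; let $D_0$ be the witnesses together with the $\mathcal{N}$-interpretations of exactly those constants, take the substructure of the reduct of $\mathcal{N}$ to this finite subsignature with domain $D_0$ (universal axioms and $\neg K_2$ are preserved, and $K_1$ holds since its witnesses and the quantifier-free matrix survive the restriction), and then expand it to a full $\Sigma_{\mathit{ARBAC}}$-structure by interpreting the remaining constants arbitrarily in $D_0$; this expansion does not affect any of the sentences involved, since they do not mention those constants. Alternatively, simply cite the BSR finite model property as the paper does. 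A last small point: your final appeal to part~(i) is not literally applicable when $\mathcal{N}$ is infinite, because the pre-order (and hence upward closure) is defined only on configurations; argue instead directly that the universal sentence $\neg K_2$ is preserved when passing from $\mathcal{N}$ to the restricted structure.
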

An upward closed set $U$ is \emph{finitely generated} iff it is a
finite union of cones.  A pre-order $(P,\leq)$ is a
\emph{well-quasi-ordering (wqo)} iff every upward closed sets of $P$
is finitely generated. This is equivalent to the standard definition
of wqo, see~\cite{ijcar08} for a proof.  The idea is to use only
finitely generated upward closed sets as configurations so that their
union is also finitely generated and we can conclude that the backward
reachability procedure in Figure~\ref{fig:reach-algo} is terminating
because of the duality between configurations and $\exists$-formulae
(Lemma~\ref{lem:semantic-counterpart}).% and the procedure computes a
% fixed-point by storing a disjunction of $\exists$-formulae in the
% variable $B$ or, dually, a union of configurations which must be
% finite because of the wqo on configurations.
\begin{theorem}
  \label{thm:termination}
  The backward reachability procedure in Figure~\ref{fig:reach-algo}
  terminates.
\end{theorem}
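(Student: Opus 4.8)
The plan is to reduce termination of $\mathsf{BReach}$ to a well-quasi-ordering argument on configurations, following the array-based framework of~\cite{ijcar08,AbdullaTCS}. First I would trace the run: when the loop is entered for the $n$-th time, $P$ holds $Pre^n(\tau,\gamma)$ and $B$ holds $BR^{n-1}(\tau,\gamma)=\bigvee_{i=0}^{n-1}Pre^i(\tau,\gamma)$. By Property~\ref{prop:closure-pre} together with the fact that $Pre$ distributes over disjunctions (in both arguments), every $Pre^i(\tau,\gamma)$ is effectively equivalent modulo $T_{\mathit{ARBAC}}$ to an $\exists$-formula, so $BR^n(\tau,\gamma)$ is a disjunction of $\exists$-formulae. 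By Lemma~\ref{lem:semantic-counterpart}(i) each $[[Pre^i(\tau,\gamma)]]$ is upward closed, and it is in fact \emph{finitely generated}: putting the quantifier-free matrix of the $\exists$-formula in DNF, the models of each disjunct are a finite union of cones, generated by the finitely many minimal witness structures. Hence $([[BR^n(\tau,\gamma)]])_{n\ge 0}$ is a non-decreasing sequence of finitely generated upward closed sets of the pre-order $(\mathrm{Conf},\le)$ of configurations. If this sequence stabilizes, then by Lemma~\ref{lem:semantic-counterpart}(ii) the formula $Pre^{n}(\tau,\gamma)\Rightarrow BR^{n-1}(\tau,\gamma)$ is valid modulo $T_{\mathit{ARBAC}}$ for all large $n$, so the guard $\iota\wedge P\wedge\neg B$ at line~2 eventually becomes unsatisfiable modulo $T_{\mathit{ARBAC}}$ and the loop exits; since the loop may also exit at line~3, $\mathsf{BReach}$ terminates in either case. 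It therefore suffices to prove that $(\mathrm{Conf},\le)$ is a wqo, because in a wqo every ascending chain of finitely generated upward closed sets stabilizes.

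So the core of the proof is to show that $(\mathrm{Conf},\le)$ — configurations ordered by embeddability — is a wqo, which by the characterization recalled just before the theorem amounts to showing that every infinite sequence $\cM_0,\cM_1,\dots$ of finite models of $T_{\mathit{ARBAC}}$ has indices $i<j$ with an embedding $\cM_i\hookrightarrow\cM_j$. Here I would exploit that the ``data part'' of a configuration is bounded: since the numbers of roles and permissions are finite and the interpretations of $\succeq$, $pa$ and of the finitely many auxiliary predicates of $\Sigma_{\mathit{Role}}$ and $\Sigma_{\mathit{Permission}}$ are pinned down by $T_{\mathit{ARBAC}}$, a configuration is determined, up to isomorphism of its role/permission part, by a ``coloring'' of its unbounded $\mathit{User}$-domain: each user $u$ carries the finite data consisting of $\{\,r\mid ua(u,r)\,\}$, the truth values at $u$ of the auxiliary $\mathit{User}$-predicates, and the position of $u$ with respect to any order on users declared in $\Sigma_{\mathit{User}}$. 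Embeddings of configurations then correspond to the subword (respectively sub-multiset) ordering on the induced words (respectively multisets) over this finite alphabet of colors, so the wqo follows from Higman's lemma when $T_{\mathit{User}}$ forces a linear order on users and from Dickson's lemma otherwise. This is exactly the special case of the general array-based termination result of~\cite{ijcar08} mentioned in the text.

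The main obstacle is this last reduction to a finite alphabet, i.e.\ identifying the hypotheses that make it go through: it crucially uses that only the $\mathit{User}$ sort is unbounded while $\mathit{Role}$ and $\mathit{Permission}$ are finite, and that $T_{\mathit{User}}$ is tame enough for Higman/Dickson to apply to its ordering structure — with an arbitrary $T_{\mathit{User}}$, or with an unbounded set of roles, the embedding pre-order need not be a wqo (finite graphs under induced subgraph being the standard cautionary example). One must also check carefully that embeddings are taken in the full signature $\Sigma_{\mathit{ARBAC}}$ and both preserve and reflect all relations, so that Lemma~\ref{lem:semantic-counterpart} genuinely identifies $\exists$-formulae with finitely generated upward closed sets. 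Finally, it is worth stressing that the unbounded growth of the existential prefix of successive pre-images (flagged earlier as a potential source of divergence) is harmless: it is purely syntactic, since semantically each $Pre^n(\tau,\gamma)$ remains a finitely generated upward closed set of $(\mathrm{Conf},\le)$, and the wqo forces the chain $[[BR^n(\tau,\gamma)]]$ — and hence, by Property~\ref{lem:dec-URA}, the decidable fixed-point test at line~2 — to converge after finitely many iterations.
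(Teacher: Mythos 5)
Your proposal follows essentially the same route as the paper's own proof: reduce termination of $\mathsf{BReach}$ to stabilization of the non-decreasing chain $[[BR^n(\tau,\gamma)]]$ using the duality between $\exists$-formulae and finitely generated upward closed sets (Lemma~\ref{lem:semantic-counterpart} and its supporting propositions), and then show that the embedding pre-order on configurations is a well-quasi-ordering, with Dickson's lemma doing the decisive work. Your rendition of the wqo step (coloring users over a finite alphabet of role-sets and auxiliary predicate values, invoking Higman or Dickson, and explicitly flagging that the $\mathit{Role}$/$\mathit{Permission}$ sorts must be bounded for the reduction to a finite alphabet) is in fact more careful than the paper's one-line counting appeal to Dickson's lemma, but it is the same argument rather than a genuinely different proof strategy.
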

As a corollary, we immediately obtain the following
fact.
% decidability of the user-role reachability problem for a theory
% $T_{\mathit{ARBAC}}$ specified as described in
% Section~\ref{sec:symbolic-specification-ARBAC}.
\begin{theorem}
  \label{thm:goal-reach-dec}
  The user-role reachability problem is decidable.
\end{theorem}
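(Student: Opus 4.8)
The plan is to derive Theorem~\ref{thm:goal-reach-dec} as a straightforward corollary of the three properties already established for the backward reachability procedure $\mathsf{BReach}$ of Figure~\ref{fig:reach-algo}: closure of $\exists$-formulae under pre-image (Property~\ref{prop:closure-pre}), decidability of the satisfiability tests at lines~2 and~3 (Property~\ref{lem:dec-URA}), and termination (Theorem~\ref{thm:termination}). The key observation that links these together is the reformulation established in Section~\ref{subsec:reach-and-smt}: a user-role reachability problem for an ARBAC policy $\Gamma$ with goal $RP$ translates effectively into a symbolic ARBAC policy $\Gamma_s := (In(ua), Tr, C)$ together with a symbolic goal $\gamma_s$ of the form (\ref{eq:symbolic-goal}), and $RP$ is reachable iff $\gamma_s$ is reachable in $\Gamma_s$ in the sense computed by $\mathsf{BReach}$.

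First I would invoke Property~\ref{prop:closure-pre} together with the remark that $Pre(\bigvee_i \tau_i, K)$ is equivalent to $\bigvee_i Pre(\tau_i, K)$, to conclude that $\mathsf{BReach}$ is \emph{effective}: the variable $P$, initialized to the $\exists$-formula $\gamma_s$, remains an (effectively computable) $\exists$-formula after each execution of line~5, and $B$ remains a finite disjunction of $\exists$-formulae. Next I would use Property~\ref{lem:dec-URA} to note that the loop guard at line~2 and the test at line~3 are decidable at every iteration, so each step of the procedure is computable. Then Theorem~\ref{thm:termination} guarantees that the \textbf{while}-loop is exited after finitely many iterations, so $\mathsf{BReach}$ always halts, returning either $\mathsf{reachable}$ or $\mathsf{unreachable}$. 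Finally I would argue \emph{soundness and completeness} of the answer: by the explanation accompanying Figure~\ref{fig:reach-algo}, when line~3 fires at the $n$-th iteration we have $In \wedge Pre^n(\tau,\gamma_s)$ satisfiable modulo $T_{\mathit{ARBAC}}$, which by the equivalence with formula (\ref{eq:unsafe}) for $\ell = n$ witnesses a run of length $n$ from an initial policy to one satisfying $\gamma_s$; conversely, when line~7 is reached, $B$ stores a fixed-point of $BR^n(\tau,\gamma_s)$ modulo $T_{\mathit{ARBAC}}$ with $In \wedge B$ unsatisfiable, so no $\ell$ makes (\ref{eq:unsafe}) satisfiable and the goal is genuinely unreachable. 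Hence $\mathsf{BReach}$ decides the user-role reachability problem.

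The only genuinely nontrivial ingredient is Theorem~\ref{thm:termination} (whose proof rests on the well-quasi-ordering of configurations via embeddings, Lemma~\ref{lem:semantic-counterpart}, and the results of~\cite{AbdullaTCS}), and that has already been discharged; so in this final step there is \emph{no real obstacle} — the work is purely in assembling the pieces and checking that the reduction in Section~\ref{subsec:reach-and-smt} is faithful in both directions. The one point requiring a line of care is the correspondence between the fixed-point detected by $\mathsf{BReach}$ (an \emph{implication} $((\iota \wedge BR^n) \Rightarrow BR^{n-1})$ valid modulo $T_{\mathit{ARBAC}}$) and the semantic fixed-point of the set of backward-reachable configurations; this follows from Lemma~\ref{lem:semantic-counterpart}(ii), since validity of the implication corresponds exactly to the inclusion $[[BR^n]] \cap [[\iota]] \subseteq [[BR^{n-1}]]$, and the reverse inclusion holds by construction, so no further backward-reachable states can appear.
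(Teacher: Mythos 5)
Your proposal is correct and follows essentially the same route as the paper: the paper obtains Theorem~\ref{thm:goal-reach-dec} as an immediate corollary of Property~\ref{prop:closure-pre}, Property~\ref{lem:dec-URA}, and Theorem~\ref{thm:termination}, exactly the three ingredients you assemble. Your additional care about soundness and completeness via the reduction of Section~\ref{subsec:reach-and-smt} and Lemma~\ref{lem:semantic-counterpart}(ii) only makes explicit what the paper leaves implicit, so there is no substantive difference.
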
 
This result is more general that those
in~\cite{li-tripunitara,stoller} which assume a bounded number of
users and roles.  We are more general than~\cite{stoller2} in allowing
for a finite but unknown number of users and roles while
in~\cite{stoller2} the users are bounded and only the roles are
parametric.  However, we allow for only a restricted form of negation
in the preconditions of \textsl{can\_assign} actions while~\cite{stoller2}
seems to allow for arbitrary negation.  Moreover, our procedure can
consider several initial RBAC policies at the same time
while~\cite{li-tripunitara,stoller,stoller2} can handle only one.

% This result generalizes those available the literature in several
% respects.  First, the results of our procedure holds for a finite
% number of users and roles while the automated analysis procedures
% in~\cite{li-tripunitara,stoller} only allows finitely many users and
% roles.  It is also more general than that of~\cite{stoller2} which is
% parametric in the roles but the number of users must be bounded.  On
% the other hand, it is less general than the decidability result
% in~\cite{stoller2} since it allows only a limited form of negation.
% On the other hand, our result is more general as in many situations we
% can eliminate the universal quantifiers needed to express $\neg
% ua^*(u,r)$ because the set of roles is known, say $R$, and $\neg
% ua^*(u,r)$ can be replaced with $\bigwedge_{r'\in R} (r'\succeq r
% \Rightarrow \neg ua(u,r'))$ as observed in
% Section~\ref{sec:symbolic-specification-ARBAC}.
% Notice that these results hold for every theory $T_{\mathit{ARBAC}}$
% specified as described in
% Section~\ref{sec:symbolic-specification-ARBAC}.  % ; hence, it subsumes the
% decidability results of~\cite{li-tripunitara,stoller,stoller2} and
% also for situations not considered before, such as the case when
% users are finitely many but their number is unknown.

Finally, notice that we can reduce other analysis problems (e.g., role
containment) to user-role reachability problems and thus show their
decidability.  For lack of space, this can be found in
%%% SR: COMMENT OR UNCOMMENT THE FOLLOWING LINES FOR THE SHORT OR LONG
%%% VERSION, RESP.
\cite{ext-version}.
% Appendix~\ref{app:related-problems}

%%% THE FOLLOWING SECTION HAS BEEN ADDED BECAUSE WE WERE GRANTED 1/2
%%% EXTRA PAGES!
\section{Preliminary experiments}
\label{sec:exp}

We briefly discuss some experiments with a prototype implementation of
the symbolic reachability procedure in Figure~\ref{fig:reach-algo}
that we call \fbk, short for Automated Symbolic Security Analyser.  We
consider the synthetic benchmarks described in~\cite{stoller} and
available on the web at~\cite{stollerpage} whereby both the number of
users and roles is bounded.  We perform a comparative analysis between
\fbk\ and the state-of-the-art tool in~\cite{stoller}, called
\stoller\ below.  Our findings shows that \fbk\ scales better than
\stoller\ on this set of benchmarks; the experiments were conducted on
an Intel(R) Core(TM)2 Duo CPU T5870, 2 GHz, 3 GB RAM, running Linux
Debian 2.6.32.

%\paragraph{Architecture of the system.}  
A client-server architecture is the most obvious choice to implement
the proposed symbolic backward reachability procedure.  The client
generates the sequence of formulae representing pre-images of the
formula representing the goal.  In addition, the client is also
assumed to generate the formulae characterising the tests for fix-point
or for non-empty intersection with the initial set of policies.  The
server performs the checks for satisfiability modulo
${T}_{\mathit{ARBAC}}$ and can be implemented by using
state-of-the-art automated deduction systems such as automated theorem
provers (in our case, SPASS~\cite{spass}) or SMT solvers (in our case,
Z3~\cite{z3}).  Although these tools are quite powerful, preliminary
experiments have shown that the formulae to be checked for
satisfiability generated by the client quickly become very large and
are not easily solved by available state-of-the-art tools.  A closer
look at the formulae reveals that they can be greatly simplified with
substantial speed-ups in the performances of the reasoning systems.
To this end, some heuristics have been implemented whose description
is not possible here for lack of space; the interested reader is
pointed to~\cite{asiaccs} for a complete description and more
experiments.
%%%%%%%%%%%%%%%%%%%%%%%%%%%%%%%%%%%%%%%%%%%%%%%%%%%%%%%%%%%%%%%%%%%%%%%%
\begin{figure}[t]
  \centering
  \begin{tabular}{cc}
    {\small goal size = 1} & {\small goal size = 2} \\
    \includegraphics[scale=.315]{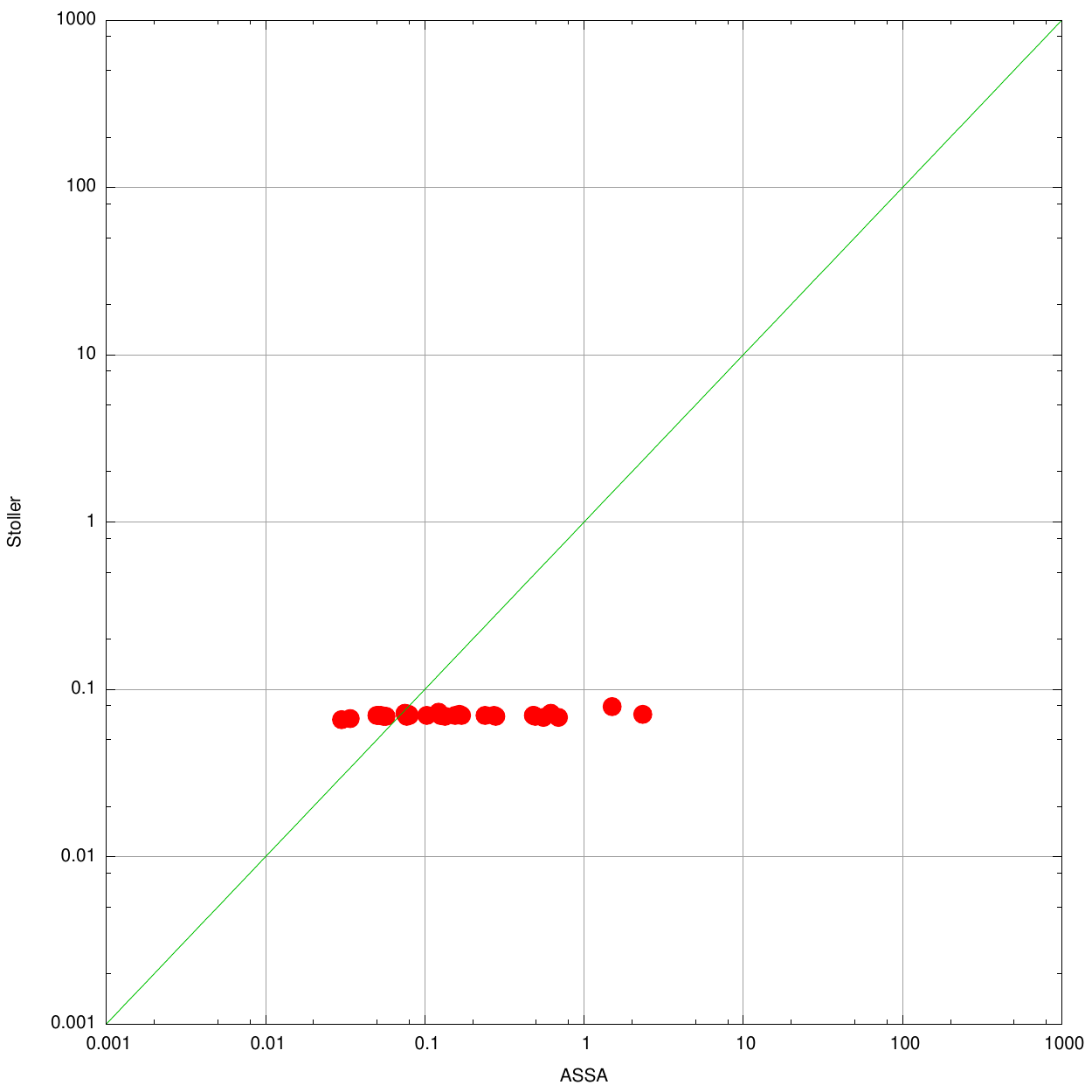} &
    \includegraphics[scale=.315]{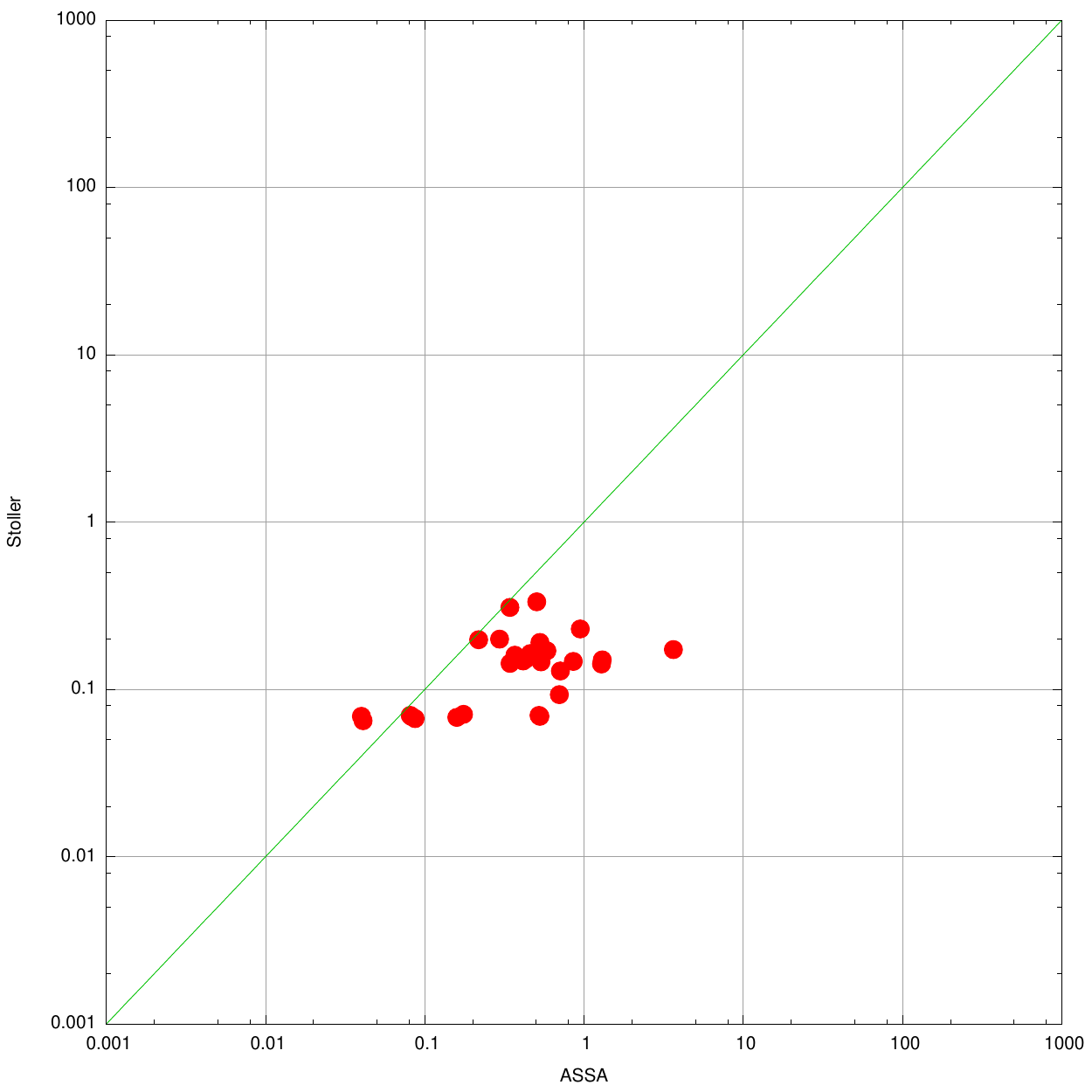} \\
    {\small goal size = 3} & {\small goal size = 4} \\
    \includegraphics[scale=.315]{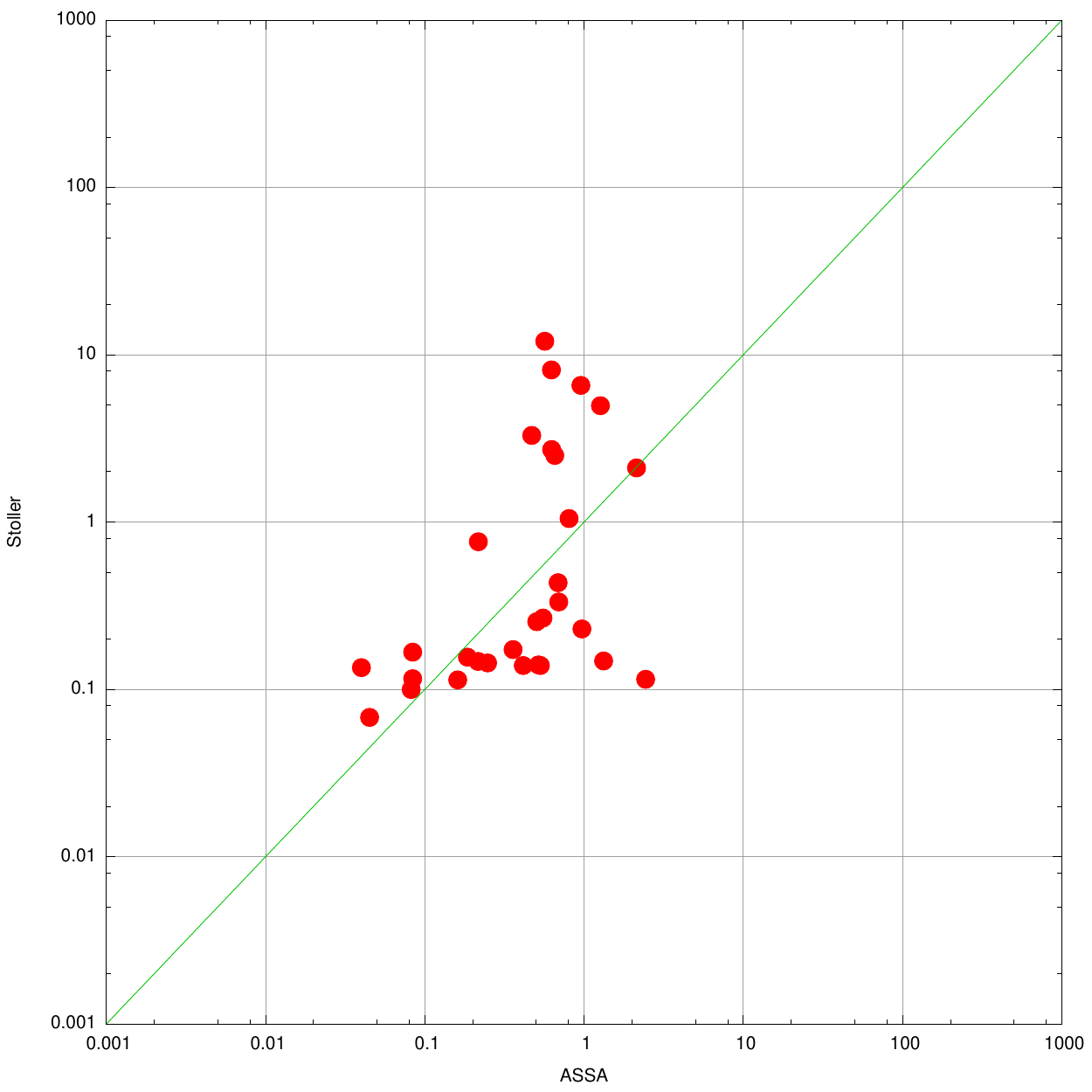} &
    \includegraphics[scale=.315]{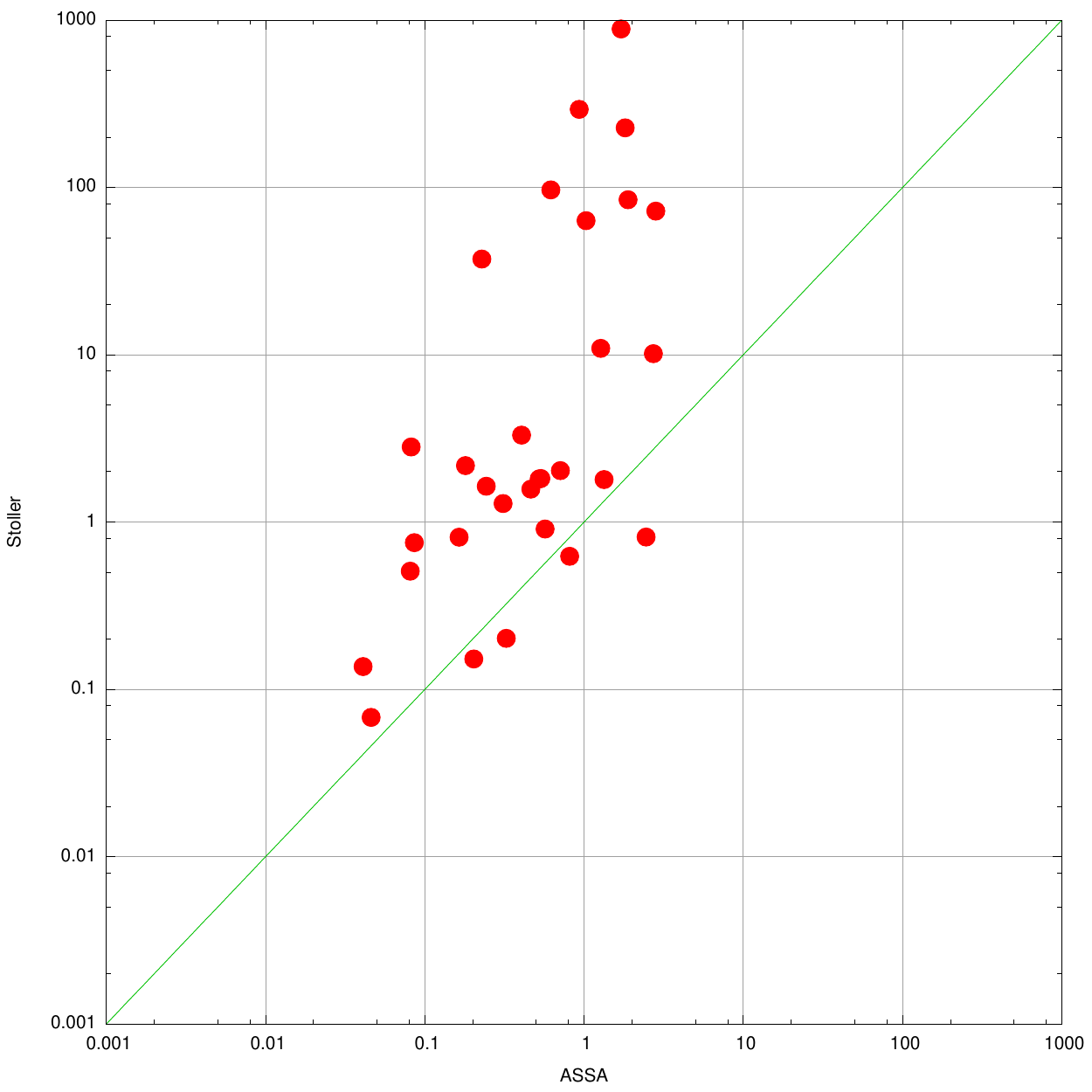} 
  \end{tabular}
  \caption{\label{fig:goal_size-vs-timings_numnodes_bench3}Comparison between \fbk\ and \stoller\ on some benchmarks from~\cite{stoller,stollerpage}}
\end{figure}
%%%%%%%%%%%%%%%%%%%%%%%%%%%%%%%%%%%%%%%%%%%%%%%%%%%%%%%%%%%%%%%%%%%%%%%%

%\paragraph{Benchmarks and results.}  
We consider the randomly generated benchmarks in~\cite{stollerpage},
where only the user-role assignment relation $ua$ can be modified by
\textsl{can\_assign} or \textsl{can\_revoke} actions (as assumed in
Section~\ref{sec:ARBAC-intro}).  These benchmarks were generated under
two additional simplifying assumptions: (i) a fixed number of users
and roles, and (ii) absence of role hierarchy (this is without loss of
generality under assumption (i) as observed in~\cite{stoller}).
% The problem generator allows one to choose the values of several
% parameters (e.g., the size of the goal); for a detailed description
% of the options the reader is pointed to~\cite{stoller,stollerpage}.
% The various classes were generated by selecting the values of the
% parameters according to those of an ARBAC policy of a university
% case study described in~\cite{stoller} and then varying selected
% parameters to explore the effects.
Besides the number of roles, one of the key parameter of the
benchmarks (according to the parametrised complexity result derived
in~\cite{stoller}) is the \emph{goal size}, i.e.\ the number of roles
in the set $RP$ of a goal reachability problem (as defined at the end
of Section~\ref{sec:ARBAC-intro}) or, equivalently, the number of
constants of sort $\mathit{Role}$ occurring in the symbolic goal
(\ref{eq:symbolic-goal}) of Section~\ref{subsec:reach-and-smt}.  The
benchmarks are divided in five classes.  The first and the second
classes were used to evaluate the worst-case behavior of forward
search algorithms (i.e.\ when the goal is unreachable) described
in~\cite{stoller}.  Our backward procedure (almost) immediately
detects unreachability by realizing that no action is backward
applicable.  The fourth and fifth classes of benchmarks fix the goal
size to one while the values of other parameters (e.g., the
cardinality of the set $R$ of roles) grow.  In particular, the fourth
class was used to show that the cost of analysis grows very slowly as
a function of the number of roles while the fifth aimed to compare the
performances of an enhanced version of the forward and the backward
algorithms of~\cite{stoller}.  For both classes, \fbk\ confirms that
its running time grows very slowly according to the results reported
in~\cite{stoller}.  However, \fbk\ is slightly slower than \stoller\
because of the overhead of invoking automated reasoning systems for
checking for fix-points instead of the \emph{ad hoc} techniques
of~\cite{stoller}.  The most interesting class of problems is the
third, which was used to evaluate the scalability of the backward
reachability algorithm of~\cite{stoller} with respect to increasing
values of the goal size $1,2,3,$ and
$4$. Figure~\ref{fig:goal_size-vs-timings_numnodes_bench3} shows four
scatter plots for values $1,2,3$, and $4$ of the goal size: the X and
Y axes report the median times of \fbk\ and \stoller, respectively
(logarithmic scale), to solve the $32$ reachability problems in the
third class of the benchmarks.  A dot above the diagonal means a
better performance of \fbk\ and viceversa; the time out was set to
$1,800$ sec.  Although, both \stoller\ and \fbk\ were able to solve
all the problems within the time-out, our tool is slower for goal
sizes $1$ and $2$, behaves as \stoller\ for goal size $3$, but
outperforms this for goal size $4$.  These results are encouraging and
seem to confirm the scalability of our techniques.  For a detailed
description of the implementation of \fbk\ and a more comprehensive
experimental evaluation (confirming these results), the reader is
pointed to~\cite{asiaccs}.

\section{Discussion}
\label{sec:conclusion}
We have proposed a symbolic framework for the automated analysis of
ARBAC policies that allowed us to prove the decidability of the
parametric reachability problem.  We used a decidable fragment of
first-order logic to represent the states and the actions of ARBAC
policies to design a symbolic procedure to explore the (possibly
infinite) state space.  Preliminary results with a prototype tool
implementing the backward reachability procedure in
Figure~\ref{fig:reach-algo} are encouraging.  A detailed description
of the implementation of the prototype and an extensive experimental
analysis is available in~\cite{asiaccs}.

% To the best of our knowledge, this is the first proposal of a general
% framework to study the parametric verification of ARBAC policies.
There are two main directions for future work.  First, it would be
interesting to study to what extent other variants of ARBAC can be
formalized in our framework, e.g., for UARBAC~\cite{uarbac}.
% is a schematic framework that allow arbitrary constraints over
% arbitrary types.  Another interesting extension where our symbolic
% approach can prove extremely useful is the analysis of RBAC with
% temporal constraints~\cite{tarbac} for role activation, role
% assignment, and so on.  Even more interesting would be to combine
% traditional ARBAC and its temporal variant: we believe this can be
% done in a very natural way in a suitable extension of the framework
% presented here.
Second, we want to adapt techniques developed in the context of
infinite state model checking to eliminate universal quantifiers in
guards of administrative actions (called global conditions, see,
e.g.,~\cite{abdulla-delzanno-rezine}), to allow for unrestricted
negation in \textsl{can\_assign}s.  

\paragraph{Acknowledgements.}  This work was partially supported by
the ``Automated Security Analysis of Identity and Access Management
Systems (SIAM)'' project funded by Provincia Autonoma di Trento in the
context of the ``team 2009 - Incoming'' COFUND action of the European
Commission (FP7), the FP7-ICT-2007-1 Project no.~216471, ``AVANTSSAR:
Automated Validation of Trust and Security of Service-oriented
Architectures,'' and the PRIN'07 Project 20079E5KM8 (Integrating
automated reasoning in model checking: towards push-button formal
verification of large-scale and infinite-state systems) funded by
MIUR.  Francesco Alberti must be thanked for his effort in
implementing and benchmarking \fbk.

%%%%%%%%%%%%%%%%%%%%%%%%%%%%%%%%%%%%%%%%%%%%%
\bibliographystyle{plain}
\bibliography{biblio}
%%%%%%%%%%%%%%%%%%%%%%%%%%%%%%%%%%%%%%%%%%%%%

%%% SR: COMMENT OR UNCOMMENT THE FOLLOWING LINE TO HAVE SHORT OR LONG
%%%     VERSION, RESP.
\newpage

\appendix

\section*{Plan of the Appendixes}

\noindent We provide some additional material to illustrate and
integrate the results presented in the paper:
\begin{itemize}
\item Appendix~\ref{app:extension-parametric-role} discusses how to
  formalize parametric roles in our framework and explain that the
  decidability result for user-role reachability also cover this scenario.
\item Appendix~\ref{app:termination} presents the formal details of the
  termination of the backward reachability procedure in
  Figure~\ref{fig:reach-algo}.  
\item Appendix~\ref{app:related-problems} discusses three related
  security analysis problems for ARBAC polices (namely, inductive
  policy invariant, role containment, and weakest preconditions) and
  their relationship with the user-role reachability problem.
\item Finally, Appendix~\ref{app:br-illustration} describes in some
  detail an execution of the symbolic backward reachability procedure
  Figure~\ref{fig:reach-algo} on a simple example taken
  from~\cite{stoller}.
\end{itemize}

\section{Formalizing parametric roles}
\label{app:extension-parametric-role}

Here, we explain how it is possible to model ARBAC policies with
parametrised roles as considered in, e.g.,~\cite{stoller2}.

A role schema can be seen as an expression of the form $\rho(p_1, ...,
p_n)$ for $n\geq 0$, where $\rho$ is a role name and $p_i$ is a
distinct parameter name $i=1, ..., n$.  Each parameter can take values
from a given data type containing an infinite number of values.  An
instance of a role schema is an expression of the form $\rho(p_1=t_1,
..., p_n=t_n)$, where $t_i$ is a data value or a variable.  For
example, in the university policy considered in~\cite{stoller2}, the
role schema $\mathit{Student}(dept,cid)$ is used for students
registered for the course numbered $cid$ offered by department $dept$,
the role schema $\mathit{Student}(dept)$ is used for all students of a
specific department $dept$, and the instance
$\mathit{Student}(dept=cs,cid=101)$ identifies students of the
Computer Science department taking course $101$.  Role schemas can be
overloaded by using parameter names; e.g., $\mathit{Student}$ can have
one parameter named $dept$ or two parameters named $dept$ and $cid$.
A parametrised version of ARBAC policies can use parametric roles to
express role assignment and revocation in a very compact way.  For
example, in the case of the university policy, one can have the
following role schemas: $\mathit{Chair}(dept)$,
$\mathit{Student}(dept, cid)$, and $\mathit{TA}(dept,cid)$.  Then, a
\textsl{can\_assign} rule is the following: the chair of department $D$ (i.e.\
a user belonging to the role $\mathit{Chair}(dept=D)$) can assign a
student of a department $D$ taking course $cs$ (i.e.\ a user belonging
to the role $\mathit{Student}(dept=D, cid=CID)$) to be the teaching
assistant of that course (i.e.\ a user belonging to the role
$\mathit{TA}(dept=D, cid=CID)$).

In our symbolic framework, this situation can be formalized as
follows.  We introduce a predicate symbol extended with an extra
argument for each parametric role, i.e.\ if the number of role names
in the role schema $\rho$ is $n$, then we use a predicate symbol
$\rho$ of arity $n+1$ (this technique is standard for example to
translate Entity-Relationship diagram schemas to fragments of
first-order logic).  For the example above, we introduce the following
predicate symbols: $\mathit{Chair}$, $\mathit{Student}$, and
$\mathit{TA}$ of arity 2, 3, and 3, respectively.  We do not use
parameter names, instead we fix an order on them so that we can use
the standard way of building atoms in first-order logic.  When a role
schema is overloaded, we introduce a different predicate symbol in
order to disambiguate the situation; a simple automated pre-processing
phase can be used to eliminate overloading.  In this context, the `can
assign rule above can be written as follows:
\begin{eqnarray*}
  \exists u,r,D,CID,u_1,r_1,r_2.\left(
    \begin{array}{ll}
      \mathit{Chair}(D,r) \wedge ua(u,r)  & \wedge  \\
      \mathit{Student}(D, CID, r_1) \wedge ua(u_1,r_1) & \wedge  \\
      \mathit{TA}(D, CID, r_2) \wedge  ua' = ua \oplus (u_1,r_2) &
      \end{array} 
    \right) ,
\end{eqnarray*}
where the variables $r, r_1,$ and $r_2$ are used as the names of the
roles corresponding to the particular value of the attributes in the
role schema.  This means that we need to require that each relation is
functional or, equivalently, that the interpretation of the predicate
symbols must be partial functions.  In our framework, this can be done
by adding suitable formulae to the background theory
$T_{\mathit{ARBAC}}$.  For the example of the university policy
considered above, we can simply write the following two
$\forall$-formulae:
\begin{eqnarray*}
  \forall D,r_1,r_2.((\mathit{Chair}(D,r_1)\wedge \mathit{Chair}(D,r_2)) 
   & \Rightarrow & r_1 = r_2) \\
  \forall D,CID,r_1,r_2.((\mathit{Student}(D,CID,r_1)\wedge 
                          \mathit{Student}(D,CID,r_2)) 
   & \Rightarrow & r_1 = r_2) \\
  \forall D,CID,r_1,r_2.((\mathit{TA}(D,CID,r_1)\wedge 
                          \mathit{TA}(D,CID,r_2)) 
   & \Rightarrow & r_1 = r_2) .
\end{eqnarray*}
Notice also that we can specify additional constraints among two or
more relations if we can express them as $\forall$-formulae.  It is
not obvious how this feature can be added to the approach
in~\cite{stoller2}.  For the example above, we have mentioned that we
can have a role schema $\mathit{Student}(dept)$ for identifying all
students in the department $dept$.  Indeed,
$\mathit{Student}(dept,cid)$ must characterize sub-sets of users of
the role $\mathit{Student}(dept)$.  If we introduce a binary predicate
symbol $\mathit{Student}_1$ of arity 2 corresponding to the role
schema $\mathit{Student}(dept)$, then we can express this by the
following $\forall$-formula:
\begin{eqnarray*}
  \forall D,CID,r.(\mathit{Student}(D,CID,r) \Rightarrow
                   \mathit{Student}_1(D,r) ) ,
\end{eqnarray*}
which can be added to $T_{\mathit{ARBAC}}$.

To summarize, our framework can handle parametrised roles as follows.
First, the sub-theory $T_{Role}$ of $T_{ARBAC}$ becomes many-sorted:
besides the sort $Role$, we introduce as many sort symbols---called
\emph{parameter sorts}---as domains for the parameters of each role.
Furthermore, for each role symbol $\rho$ of arity $n$, we introduce a
predicate symbol of arity $n+1$.  Overloading is eliminated by
introducing decorated versions of the predicate symbol and an order on
the parameter names is fixed so that we can use the standard way of
building atoms of first-order logic.  Second, for each predicate
symbol $\rho$ of arity $n+1$, we add the following functional
constraint to $T_{\mathit{Role}}$ and hence to $T_{ARBAC}$:
\begin{eqnarray*}
  \forall \underline{x}, r_1,r_2.((\rho(\underline{x},r_1) \wedge \rho(\underline{x},r_2)) \Rightarrow r_1 = r_2) ,
\end{eqnarray*}
where $\underline{x}$ is a tuple of length $n$ of variables of
appropriate sorts.  If needed, we can add further constraints, (e.g.,
formalizing relationship between different role symbols) if these can
be expressed as $\forall$-formulae.  For example, it is worth noticing
how to express the role hierarchy for parametrised role.  Besides the
usual axioms requiring $\succeq$ to be a partial order, we can add
also $\forall$-formulae of the following form:
\begin{eqnarray*}
  \forall \underline{x}, \underline{y}, r_1,r_2.((\rho_1(\underline{x},r_1) \wedge \rho_2(\underline{y},r_2)) \Rightarrow r_1 \succeq r_2) ,
\end{eqnarray*}
where $\underline{x},\underline{y}$ are tuples of variables of
appropriate sorts, $\rho_1,\rho_2$ are two predicates representing
parametric roles.  This axiom requires that all instances of the
parametric role $\rho_1$ are senior than those of role $\rho_2$.
Notice that one can design more sophisticated hierarchical
relationships between role instances depending on the values of the
parameters, provided that the signature is rich enough to express the
constraints between the values of the parameters and that only
$\forall$-formulae are used.

% Concerning cardinality on the domains of the parameter names, for
% example, we can require that certain domains have a fixed and finite
% cardinality (say, e.g., $2$) by means of the following two
% $\forall$-formulae:
%   \begin{eqnarray*}
%     \forall x.(x=n_1 \vee x=n_2) & \mbox{ and } &
%     n_1\neq n_2
%   \end{eqnarray*}
%   where $n_1$ and $n_2$ are constant symbols and $x$ ranges over the
%   type symbol interpreted as the domain of the parameter under
%   consideration.

Finally, \textsl{can\_assign} and \textsl{can\_revoke} actions can be written by using
existentially quantified variables ranging over the parameter names
besides those ranging over users and roles; thus generalizing the
shapes of actions (\ref{eq:can_assign}) and (\ref{eq:can_revoke}).
Formally, transitions have the following forms:
\begin{eqnarray*}
  % \label{eq:can_assign}
  \exists u,r, u_1, r_1, r_2, ..., r_k, \underline{p}. & 
  (C(u,r, u_1, r_1, r_2, ..., r_k, \underline{p})  \wedge  & ua' = ua \oplus (u_1,e^r)) \\
  % \label{eq:can_revoke}
  \exists u,r,u_1,\underline{p}.&(C(u,r, u_1,\underline{p}) ~~~~~~~~~~~~~~~~\wedge  & ua' = ua \ominus (u_1,e^r))
\end{eqnarray*}
where $\underline{p}$ is a tuple of variables of parameter sorts and
$C$ is a constraint in which also literals built out of the predicate
symbols introduced for modelling parametric roles may occur.

All the results proved in Sections~\ref{sec:reachability-analysis}
and~\ref{sec:reachability-analysis} can be easily extended to cover
ARBAC policies with parametric roles as soon as we observe that the
formulae introduced here satisfy the assumptions on the theory
$T_{\mathit{ARBAC}}$ of
Section~\ref{sec:symbolic-specification-ARBAC}.

\section{Termination of backward reachability}
\label{app:termination}

\subsection{Pre- and well-quasi-orders: definitions and basic properties}
\label{subapp:pre-and-well-quasi-orders}

A \emph{pre-order} $(P,\leq)$ is the set $P$ endowed with a reflexive
and transitive relation.  We say that $\leq$ is \emph{decidable} if,
given $p_1$ and $p_2$ in $P$, we can algorithmically check whether
$p_1\leq p_2$.  An \emph{upward closed set $U$} of the pre-order
$(P,\leq)$ is such that $U\subseteq P$ and if $p\in U$ and $p\leq q$
then $q\in U$.  A \emph{cone} is an upward closed set of the form
$\uparrow p = \{ q\in P ~|~ p\leq q\}$. An upward closed set $U$ is
\emph{finitely generated} iff it is a finite union of cones.  

For an upward closed set $U$, a \emph{generator} of $U$ is a set $G$
such that (a) $U=\bigcup_{g\in G} \uparrow g$ and (b) $g_1\leq g_2$
implies $g_1=g_2$, for every $g_1,g_2\in G$.  It is easy to see that
$G$ contains only minimal elements (w.r.t.\ $\leq$) but, in general,
it needs not to be unique.  In any case, it is always possible to
define a function $gen(U)$ returning a unique generator of $U$ (the
same chosen among the many possible ones).  

A pre-order $(P,\leq)$ is a \emph{well-quasi-ordering (wqo)} iff every
upward closed sets of $P$ is finitely generated (this is equivalent to
the standard definition of wqo, see~\cite{ijcar08} for a proof).  In
the case of a wqo, $gen(U)$ is finite because of property (b) of the
definition of generator of $U$.  This implies that every upward closed
set $U$ can be characterized by a finite set of configurations, namely
$gen(U)$.  

\subsection{Some notions and results of model-theory}
\label{subapp:model-theory}

Let $\mathcal{M}$ be a $\Sigma$-structure. A \emph{substructure} of
$\mathcal{M}$ is a $\Sigma$-structure $\mathcal{N}$ whose domain is
contained in that of $\mathcal{M}$ and such that the interpretations
of the symbols of $\Sigma$ in $\mathcal{N}$ are restrictions of the
interpretation of these symbols in $\mathcal{M}$; conversely, we say
that $\mathcal{M}$ is a \emph{superstructure} of $\mathcal{N}$.  Let
$\mathcal{C}$ be a class of structures; we say that $\mathcal{C}$ is
\emph{closed under substructures} if $\mathcal{M}\in \mathcal{C}$ and
$\mathcal{N}$ is a substructure of $\mathcal{M}$, then $\mathcal{N}\in
\mathcal{C}$. 
\begin{property}
  A class $\mathcal{C}$ of structures is closed under substructures
  iff there exists a theory $T$ such that $T$ contains only
  $\forall$-formulae and $Mod(T)=\mathcal{C}$.
\end{property}
A proof of this result can be found in any book on model theory,
e.g.,~\cite{hodges}.

Let $\mathcal{M}$ and $\mathcal{N}$ two structures over the same
signature $\Sigma$ and $M,N$ be their domains, respectively; an
\emph{embedding} $s$ is an injective mapping from $M$ to $N$ such that
(i) $s(f^{\mathcal{M}}(e_1, ...., e_n)) = f^{\mathcal{N}}(s(e_1), ...,
s(e_n))$ for each function symbol $f$ in the signature $\Sigma$ and
(ii) $(e_1, ..., e_n)\in R^{\mathcal{M}}$ iff $(s(e_1), ...,
s(e_m))\in {\mathcal{N}}$ for each predicate symbol $R$ in $\Sigma$,
where $(e_1, ..., e_n)$ is a tuple of elements in $M$ of length equal
to the arity of $f$ or $R$, respectively.  In other words, an
embedding is a homomorphism that preserves and reflects relations.  It
is possible to show (see, e.g.,~\cite{hodges}) that any embedding can
be seen as the composition of an isomorphism followed by an
``extension,'' i.e.\ if there is an embedding from $\mathcal{M}$ to
$\mathcal{N}$, we can assume that $\mathcal{M}$ is a substructure of
$\mathcal{N}$ (or dually, $\mathcal{N}$ is a superstructure of
$\mathcal{M}$).  

Abstractly, (Robinson) diagrams give a logical formulation of model
theoretic properties such as ``there exists an embedding from
structure $\mathcal{M}$ to structure $\mathcal{N}$.''  The importance
of this will be clear when considering the definition of the pre-order
on configurations (given in terms of the existence of an embedding
between structures).  Let $\mathcal{M}$ be a $\Sigma$-structure and
$A$ be a sub-set of the domain of $\mathcal{M}$; $\Sigma(A)$ is the
signature obtained by adding to $\Sigma$ new symbols of constants $a$
for $a\in A$.  We can regard $\mathcal{M}$ as a $\Sigma(A)$-structure
when the interpretation function of $\mathcal{M}$ is extended so that
every element $a$ in $A$ is mapped to the constant $a$.  The
\emph{(Robinson) diagram of $A$ in $\mathcal{M}$}, in symbols
$\delta_{\mathcal{M}}(A)$, is the set $L$ of all $\Sigma(A)$-literals
such that $\mathcal{M}\models \ell$, for every $\ell\in L$.
\begin{lemma}[Diagram Lemma]
  \label{lem:diagram}
  Let $\mathcal{M}$ and $\mathcal{N}$ be two $\Sigma$-structures and
  $M$ be the domain of $\mathcal{M}$. Then, there exists an embedding
  from $\mathcal{M}$ to $\mathcal{N}$ iff $\mathcal{N}$ can be
  expanded to a $\Sigma(M)$-structure which is a model of
  $\delta_{\mathcal{M}}(M)$.
\end{lemma}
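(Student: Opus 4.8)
The plan is to prove the two directions of the biconditional separately, in each case exploiting the definition of $\delta_{\mathcal{M}}(M)$ as the set of \emph{all} $\Sigma(M)$-literals satisfied by $\mathcal{M}$ when $\mathcal{M}$ is regarded as a $\Sigma(M)$-structure in which every element $a\in M$ names itself.

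For the left-to-right direction, suppose $s$ is an embedding from $\mathcal{M}$ to $\mathcal{N}$. I would expand $\mathcal{N}$ to a $\Sigma(M)$-structure $\mathcal{N}^{+}$ by interpreting each new constant $a$ (for $a\in M$) as $s(a)\in N$, and then check $\mathcal{N}^{+}\models\ell$ for every $\ell\in\delta_{\mathcal{M}}(M)$. First, by a routine induction on term structure using clause (i) of the definition of embedding, one shows that for every ground $\Sigma(M)$-term $t$ its value in $\mathcal{N}^{+}$ equals $s$ applied to its value in $\mathcal{M}$. An atomic $\Sigma(M)$-sentence has the form $t_1=t_2$ or $R(t_1,\dots,t_n)$; using injectivity of $s$ for the former and clause (ii) for the latter (which gives \emph{both} preservation and reflection of relations), one gets that such a sentence holds in $\mathcal{M}$ iff it holds in $\mathcal{N}^{+}$. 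Since a literal is an atomic sentence or its negation, this equivalence transfers to all literals, and hence every $\ell\in\delta_{\mathcal{M}}(M)$ holds in $\mathcal{N}^{+}$.

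For the right-to-left direction, assume $\mathcal{N}$ admits an expansion $\mathcal{N}^{+}$ to a $\Sigma(M)$-structure with $\mathcal{N}^{+}\models\delta_{\mathcal{M}}(M)$, and define $s\colon M\to N$ by $s(a):=a^{\mathcal{N}^{+}}$, the interpretation of the constant $a$ in $\mathcal{N}^{+}$. I would then verify the defining properties of an embedding. Injectivity: if $a\neq b$ in $M$ then $(a\neq b)\in\delta_{\mathcal{M}}(M)$, so $\mathcal{N}^{+}\models a\neq b$, i.e.\ $s(a)\neq s(b)$. Compatibility with functions: for a function symbol $f$ and a tuple $\bar a$ over $M$, set $b:=f^{\mathcal{M}}(\bar a)$; then $(f(\bar a)=b)\in\delta_{\mathcal{M}}(M)$, whence $f^{\mathcal{N}}(s(\bar a))=s(b)=s(f^{\mathcal{M}}(\bar a))$. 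Preservation and reflection of relations: for a predicate symbol $R$ and a tuple $\bar a$ over $M$, if $\bar a\in R^{\mathcal{M}}$ then $R(\bar a)\in\delta_{\mathcal{M}}(M)$ and $s(\bar a)\in R^{\mathcal{N}}$, while if $\bar a\notin R^{\mathcal{M}}$ then $\neg R(\bar a)\in\delta_{\mathcal{M}}(M)$ and $s(\bar a)\notin R^{\mathcal{N}}$. Hence $s$ is an embedding.

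The argument is essentially bookkeeping, so there is no real obstacle; the only point deserving care is that the diagram contains \emph{negated} atoms as well as atoms, which is exactly what yields \emph{reflection} of relations (not merely preservation) and hence a genuine embedding rather than just a homomorphism. In the application to $\Sigma_{\mathit{ARBAC}}$ there are no function symbols, so the term induction and the function-compatibility clause are vacuous, but I would state and prove the lemma for an arbitrary signature $\Sigma$ so that it also covers the parametric-role extension.
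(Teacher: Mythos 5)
Your proof is correct and is precisely the standard textbook argument (expand $\mathcal{N}$ along the embedding for one direction; read the embedding off the interpretation of the new constants and the literals of the diagram for the other), which is exactly what the paper appeals to when it states that the lemma "is an immediate consequence of the definition of Robinson diagram" and defers to model-theory texts such as Hodges. Your added remarks on negated atoms yielding reflection, and on the function-symbol clauses being vacuous for $\Sigma_{\mathit{ARBAC}}$ while the general statement covers the parametric-role extension, are accurate and consistent with how the lemma is used in the paper.
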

The proof of this fact is an immediate consequence of the definition
of Robinson diagram given above and can be found in any book on model
theory (see, e.g.,~\cite{hodges}).

\subsection{A pre-order on configurations: formal definition}

Let $\Gamma$ be a symbolic ARBAC policy, i.e.\
\begin{eqnarray*}
  \Gamma & := & (In(ua),
  \{\tau_1(ua,ua'), ..., \tau_n(ua,ua')\}, 
  \{\iota_1(ua), ..., \iota_m(ua)\} )
\end{eqnarray*}
where $In$ is a $\forall$-formula, $\iota_j$ is a $\forall$-formula,
and $\tau_i$ is a transition formula of the forms
(\ref{eq:can_assign}) and (\ref{eq:can_revoke}).  

Recall that a {state} of the ARBAC policy $\Gamma$ is a structure
$\mathcal{M}\in Mod(T_{\mathit{ARBAC}})$.
\begin{definition}
  A \emph{configuration} of $\Gamma$ is a state $\mathcal{M}$ where
  $\mathcal{M}$ is a finite model, i.e.\ the cardinality of the domain
  of $\mathcal{M}$ is bounded.
\end{definition}
We are now in the position to define the pre-order on configurations.
\begin{definition}
  Let $\mathcal{M}$ and $\mathcal{M}'$ be configurations.  We write
  $\mathcal{M} \leq \mathcal{M}'$ iff there exists an embedding $s$
  from $\mathcal{M}$ to $\mathcal{M}'$.  % If $\sigma \leq \sigma'$ and
%   the embedding is an inclusion mapping, then we say that $\sigma'$ is
%   a \emph{sub-configuration} of $\sigma$.
\end{definition}

\subsection{From $\exists$-formulae to configurations...}

We show that $\exists$-formulae identify configurations.  To state
this result formally, we recall the following notation: $[[K]] := \{
\mathcal{M}\in Mod(T_{\mathit{ARBAC}}) ~|~ \mathcal{M} \models K \}$,
for $K$ an $\exists$-formula.
\begin{proposition}
  \label{prop:exists-formulae-denote-upward-closed-sets}
  For every $\exists$-formula $K$, the set $[[K]]$ is upward closed.
\end{proposition}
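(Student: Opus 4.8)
The plan is to show that $[[K]]$ is upward closed with respect to the pre-order on configurations, i.e.\ that if $\mathcal{M}\in[[K]]$ and $\mathcal{M}\leq\mathcal{M}'$ then $\mathcal{M}'\in[[K]]$. First I would unwind the definitions: $\mathcal{M}\leq\mathcal{M}'$ means there is an embedding $s$ from $\mathcal{M}$ to $\mathcal{M}'$, and by the remarks following the definition of embedding (composition of an isomorphism and an extension), we may assume without loss of generality that $\mathcal{M}$ is a \emph{substructure} of $\mathcal{M}'$. Also, $\mathcal{M}\in[[K]]$ means $\mathcal{M}\in Mod(T_{\mathit{ARBAC}})$ and $\mathcal{M}\models K$, where $K$ is an $\exists$-formula of the form $\exists\underline{u},\underline{r},\underline{p}.\,\varphi(\underline{u},\underline{r},\underline{p})$ with $\varphi$ quantifier-free.

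The argument then splits into two observations. First, $\mathcal{M}'\in Mod(T_{\mathit{ARBAC}})$: this holds because $\mathcal{M}'$ is already assumed to be a state (a configuration is by definition a finite model in $Mod(T_{\mathit{ARBAC}})$), so there is nothing to check here — the pre-order is only defined on configurations, all of which lie in $Mod(T_{\mathit{ARBAC}})$ by fiat. Second, $\mathcal{M}'\models K$: since $\mathcal{M}\models\exists\underline{u},\underline{r},\underline{p}.\,\varphi$, there is a tuple $\underline{d}$ of elements of the domain of $\mathcal{M}$ witnessing the existential, i.e.\ $\mathcal{M}\models\varphi(\underline{d})$. Because $\mathcal{M}$ is a substructure of $\mathcal{M}'$ and $\varphi$ is quantifier-free (hence its truth on a fixed tuple of elements depends only on the interpretations of the predicate symbols restricted to those elements, which agree in $\mathcal{M}$ and $\mathcal{M}'$ by the substructure/embedding conditions — recall an embedding both preserves and \emph{reflects} relations), we get $\mathcal{M}'\models\varphi(\underline{d})$, and therefore $\mathcal{M}'\models\exists\underline{u},\underline{r},\underline{p}.\,\varphi$, i.e.\ $\mathcal{M}'\models K$.

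The only subtle point — and the step I would be most careful about — is the preservation of quantifier-free formulae under embeddings: one needs that literals, and hence Boolean combinations of literals, evaluate identically on a shared tuple of elements. For \emph{positive} literals this is just homomorphism; the reason we can handle \emph{negated} atoms (such as $\neg ua(u,r)$, $u\neq e^u_i$, or $\neg(r\succeq e^r_j)$, all of which may appear in $\varphi$) is precisely that we required embeddings to \emph{reflect} relations, not merely preserve them — which is why the Diagram Lemma (Lemma~\ref{lem:diagram}) is the right tool: $\mathcal{M}\leq\mathcal{M}'$ is equivalent to $\mathcal{M}'$ expanding to a model of the Robinson diagram $\delta_{\mathcal{M}}(M)$, and that diagram contains all $\Sigma(M)$-literals true in $\mathcal{M}$, in particular the ones obtained by substituting the witnessing constants for $\underline{d}$ into the literals of $\varphi$. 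Collecting those literals and re-abstracting the constants back to existential variables yields $\mathcal{M}'\models K$. This is a completely routine model-theoretic fact; the content of the proposition is really just the packaging of ``$\exists$-formulae are preserved in superstructures'' into the language of upward closed sets.
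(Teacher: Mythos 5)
Your proof is correct and follows essentially the same route as the paper's: pick witnesses for the existential quantifiers in $\mathcal{M}$, push them through the embedding, and use the fact that embeddings preserve and reflect literals (hence quantifier-free formulae) to conclude $\mathcal{M}'\models K$. The only cosmetic difference is that the paper first normalizes $K$ to have a conjunction of literals as matrix (via DNF and the fact that unions of upward closed sets are upward closed), whereas you handle arbitrary quantifier-free matrices directly and invoke the Diagram Lemma as an optional packaging of the same preservation fact.
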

\begin{proof}
  Since the union of an upward closed set is still an upward closed
  set, we assume---without loss of generality---that $K(ua)$ is of the
  form $\exists \underline{r}, \underline{u}.\varphi(\underline{r},
  \underline{u}, ua)$ where $\underline{u},\underline{r}$ are tuples
  of variables for users and roles, respectively, and $\varphi$ is a
  conjunction of literals (as we can always transform a Boolean
  combination of atoms into disjunctive normal form and then
  distribute the existential quantifiers over the disjunction).  Under
  these assumptions, showing that $[[K]]$ is upward closed amounts to
  prove that if the configuration $\mathcal{M}\in [[K]]$ and
  $\mathcal{M} \leq \mathcal{N}$, then $\mathcal{N}\in [[K]]$, i.e.\
  $\mathcal{N} \models K$.  Now, assume that $\mathcal{M}\in [[K]]$
  and $\mathcal{M} \leq \mathcal{N}$.  This implies, by definition of
  $[[\cdot]]$, that (a) $\mathcal{M} \models K$ and (b) there exists
  an embedding $s$ from $\mathcal{M}$ to $\mathcal{N}$.  From (a), by
  definition of truth, it follows that there exist tuples
  $\underline{e^u}$ and $\underline{e^r}$ of sort $\mathit{User}$ and
  $\mathit{Role}$, respectively, such that $\mathcal{M} \models
  K(\underline{e^u}, \underline{e^r})$.  From (b) and the definition
  of embedding, we derive that
  \begin{eqnarray*}
    \mathcal{M} \models K(\underline{e^u}, \underline{e^r})
    & \mbox{ iff } &
    \mathcal{N} \models K(s(\underline{e^u},\underline{e^r})) .
  \end{eqnarray*}
  The last two facts (and the well-known property that truth of
  quantifier-free formulae is preserved when considering
  superstructures) imply that $\mathcal{N} \models K$, as
  desired.  This concludes the proof that $[[K]]$ is upward
  closed. \qed
\end{proof}
We show that entailment between $\exists$-formulae is equivalent to
containment among configurations.
\begin{proposition}
  \label{prop:implication-exists-formulae-equal-containement-upward-closed-sets}
  $[[K_1]]\subseteq [[K_2]]$ iff $K_1\Rightarrow K_2$ is valid modulo
  $T_{\mathit{ARBAC}}$, for every pair of $\exists$-formulae
  $K_1,K_2$,
\end{proposition}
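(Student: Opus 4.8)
The plan is to prove the two implications separately, each by unwinding the definitions of $[[\cdot]]$ and of validity modulo a theory. For the implication from right to left, I would suppose that $K_1\Rightarrow K_2$ is valid modulo $T_{\mathit{ARBAC}}$ and take an arbitrary $\mathcal{M}\in [[K_1]]$. By definition of $[[\cdot]]$ we have $\mathcal{M}\in Mod(T_{\mathit{ARBAC}})$ and $\mathcal{M}\models K_1$. Validity modulo $T_{\mathit{ARBAC}}$ means precisely that every model of $T_{\mathit{ARBAC}}$ satisfies $K_1\Rightarrow K_2$; instantiating this at $\mathcal{M}$ gives $\mathcal{M}\models K_2$, i.e.\ $\mathcal{M}\in [[K_2]]$. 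Hence $[[K_1]]\subseteq [[K_2]]$.

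For the converse I would argue by contraposition. If $K_1\Rightarrow K_2$ is not valid modulo $T_{\mathit{ARBAC}}$, then by definition there is a structure $\mathcal{M}\in Mod(T_{\mathit{ARBAC}})$ with $\mathcal{M}\models K_1$ and $\mathcal{M}\not\models K_2$. Such an $\mathcal{M}$ witnesses $\mathcal{M}\in [[K_1]]\setminus [[K_2]]$, so $[[K_1]]\not\subseteq [[K_2]]$, which is the contrapositive of the desired implication. Thus both directions are immediate once one notes that ``valid modulo $T_{\mathit{ARBAC}}$'' quantifies over exactly the structures that populate the sets $[[\cdot]]$.

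The only point I would make explicit — and the place where the syntactic restrictions on the formulae are actually used — is that in the second direction the witnessing structure may be taken to be a \emph{configuration}, i.e.\ a finite model. Indeed, $T_{\mathit{ARBAC}}\cup\{K_1,\neg K_2\}$ is, after prenexing, a sentence in the \emph{Bernays-Sch\"onfinkel-Ramsey} class ($T_{\mathit{ARBAC}}$ is universal, $K_1$ is an $\exists$-formula, $\neg K_2$ is a $\forall$-formula, and the signature contains no function symbols), hence it enjoys the finite model property (see, e.g.,~\cite{demoura-bjoerner-piskac}): if it is satisfiable, it is satisfied by some finite model. Consequently the equivalence of the statement remains true when each $[[K_i]]$ is read as the set of configurations satisfying $K_i$, which is the reading needed to combine this result with Proposition~\ref{prop:exists-formulae-denote-upward-closed-sets} and the well-quasi-order on configurations in the termination proof. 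I do not anticipate a genuine obstacle here; the proposition is essentially definitional, and the finite model property of BSR simply ensures that restricting attention to finite structures loses nothing.
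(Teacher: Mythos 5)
Your proof is correct and follows essentially the same route as the paper: the right-to-left direction is immediate from the definitions, and the converse is handled by contraposition, reducing non-validity to satisfiability of $K_1\wedge\neg K_2$ modulo $T_{\mathit{ARBAC}}$ and invoking the finite model property of the BSR class to obtain a finite (configuration) witness. Your explicit remark about reading $[[\cdot]]$ over configurations makes precise a point the paper's proof only implicitly relies on, but the argument is the same.
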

\begin{proof}
  There are two cases to consider.  The `if' case is trivial: it is an
  immediate consequence of the definition of truth.  For the `only if'
  case, we prove that if $K_1\Rightarrow K_2$ is not valid modulo
  $T_{\mathit{ARBAC}}$, then $[[K_1]]\not \subseteq [[K_2]]$.
  Assuming that $K_1\Rightarrow K_2$ is not valid modulo
  $T_{\mathit{ARBAC}}$ is equivalent, by refutation, to say that $\neg
  (K_1\Rightarrow K_2)$ (or $K_1\wedge \neg K_2$) is satisfiable
  modulo $T_{\mathit{ARBAC}}$.  In turn, this implies that $K_1\wedge
  \neg K_2$ is satisfiable in a finite model according to the proof of
  the decidability of the BSR class.  From this and
  Proposition~\ref{prop:exists-formulae-denote-upward-closed-sets}, we
  can derive that $[[K_1]]\cap [[K_2]]^c \neq \emptyset$ (where
  $\cdot^c$ denotes the set complement operation).  By simple
  set-theoretic manipulations, we derive $[[K_1]]\not\subseteq
  [[K_2]]$, as desired. \qed
\end{proof}
Lemma~\ref{lem:semantic-counterpart} is an immediate consequence of
Propositions~\ref{prop:exists-formulae-denote-upward-closed-sets}
and~\ref{prop:implication-exists-formulae-equal-containement-upward-closed-sets}.

\subsection{... and viceversa: from configurations to $\exists$-formulae}

We show that finitely generated upward closed sets of configurations
are configurations of the form $[[K]]$, for some $\exists$-formula
$K$.  To do this, we use Robinson diagrams (introduced in
Section~\ref{subapp:model-theory}) since they give a logical
formulation of model theoretic properties such as ``there exists an
embedding from structure $\mathcal{M}$ to structure $\mathcal{N}$.''
The importance of this is clear as soon as we recall the definition of
pre-order over configurations that requires the existence of an
embedding among structures to show that a configuration precedes
another according to the pre-order.  The main obstacle in using
diagrams is that the formula $\delta_{\mathcal{M}}(M)$ usually
contains infinitely many literals.  Fortunately, in our case, it is
possible to show that we can consider only a finite sub-set of
literals in $\delta_{\mathcal{M}}(M)$ as all the others are implied by
those in the sub-set.
\begin{proposition}
  \label{prop:finitely-generated-configs-equiv-exists-formulae}
  The following facts hold:
  \begin{description}
  \item[(i)] with every configuration $\mathcal{M}$, it is possible to
    effectively associate an $\exists$-formula $K_{\mathcal{M}}$
    (called \emph{diagram formula (for $\mathcal{M}$)}) such that
    $[[K_{\mathcal{M}}]] = \uparrow \mathcal{M}$,
  \item[(ii)] with every $\exists$-formula $K$, it is possible to
    effectively associate a finite set $\{ \mathcal{M}_1, ...,
    \mathcal{M}_n\}$ of configurations such that $K$ is equivalent to
    $\bigvee_{i=1}^n K_{\mathcal{M}_i}$,
  \item[(iii)] any finitely generated upward closed set of
    configurations coincides with $[[K]]$, for some $\exists$-formula
    $K$.
  \end{description}
\end{proposition}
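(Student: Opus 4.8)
The plan is to prove the three items in the order (i), (ii), (iii), deriving (ii) from (i) and (iii) from (i) together with (ii). Two facts will be used throughout. First, the Diagram Lemma (Lemma~\ref{lem:diagram}) turns the existence of an embedding $\mathcal{M}\to\mathcal{N}$ into the expandability of $\mathcal{N}$ to a model of the Robinson diagram of $\mathcal{M}$; second, $T_{\mathit{ARBAC}}$ is axiomatized by $\forall$-formulae, so $Mod(T_{\mathit{ARBAC}})$ is closed under substructures (see \cite{hodges}). To keep all constructions finite, fix once and for all the \emph{finite} sub-signature $\Sigma_0$ of $\Sigma_{\mathit{ARBAC}}$ consisting of all sorts, all (finitely many) predicate symbols, and the finitely many constant symbols that occur either in the axioms of $T_{\mathit{ARBAC}}$ or in $In$, $Tr$, $C$, the goal, or in any $\exists$-formula under consideration. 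No axiom or formula relevant to the analysis mentions any other constant, so for the purposes of the analysis we may work over $\Sigma_0$ in place of $\Sigma_{\mathit{ARBAC}}$; over $\Sigma_0$ the diagram of a configuration is finite, which is what makes the argument go through.

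For (i), let $\mathcal{M}$ be a configuration with finite domain $M=\{e_1,\dots,e_k\}$. Introduce a fresh variable $x_j$ of the appropriate sort for each $e_j$, and let $K_{\mathcal{M}}$ be the existential closure over $x_1,\dots,x_k$ of the conjunction of: the literals $x_i\neq x_j$ for $i\neq j$; for each predicate symbol $R$ of $\Sigma_0$ and each tuple $\underline{e}$ over $M$, the literal $R(\underline{x})$ or $\neg R(\underline{x})$ according to whether $\mathcal{M}\models R(\underline{e})$; and, for each constant $c$ of $\Sigma_0$, the literal $x_j=c$ where $c^{\mathcal{M}}=e_j$. This is a finite conjunction of literals, so $K_{\mathcal{M}}$ is an $\exists$-formula, and its matrix is (over $\Sigma_0$) logically equivalent to the full Robinson diagram $\delta_{\mathcal{M}}(M)$: the remaining diagram literals about the $\Sigma_0$-symbols are entailed by the ones listed, and the diagram literals about other constants are irrelevant to membership in $Mod(T_{\mathit{ARBAC}})$ and to the truth of the $\exists$-formulae used by the procedure. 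By the Diagram Lemma, a state $\mathcal{N}\in Mod(T_{\mathit{ARBAC}})$ satisfies $K_{\mathcal{M}}$ iff it can be expanded to a model of $\delta_{\mathcal{M}}(M)$, iff there is an embedding from $\mathcal{M}$ into $\mathcal{N}$, iff $\mathcal{M}\leq\mathcal{N}$; hence $[[K_{\mathcal{M}}]]=\uparrow\mathcal{M}$.

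For (ii), let $K$ be an $\exists$-formula. Putting its quantifier-free matrix in disjunctive normal form and distributing the existential prefix over the disjunction, we may assume $K=\exists\underline{u},\underline{r},\underline{p}.\,\varphi$ with $\varphi$ a conjunction of literals. Take any $\mathcal{N}\in Mod(T_{\mathit{ARBAC}})$ with $\mathcal{N}\models K$, witnessed by a tuple $\underline{b}$ of elements with $\mathcal{N}\models\varphi(\underline{b})$, and let $\mathcal{M}_{\underline{b}}$ be the substructure of the $\Sigma_0$-reduct of $\mathcal{N}$ generated by $\underline{b}$, completed to a $\Sigma_{\mathit{ARBAC}}$-structure by interpreting the constants outside $\Sigma_0$ arbitrarily in its domain. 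Since $\Sigma_0$ has no function symbols besides its finitely many constants, $\mathcal{M}_{\underline{b}}$ is finite; it is a model of $T_{\mathit{ARBAC}}$ because $Mod(T_{\mathit{ARBAC}})$ is closed under substructures and the axioms of $T_{\mathit{ARBAC}}$ only involve $\Sigma_0$; thus $\mathcal{M}_{\underline{b}}$ is a configuration, it satisfies $K$ (still witnessed by $\underline{b}$), and it embeds into $\mathcal{N}$. As $\mathcal{M}_{\underline{b}}$ has cardinality bounded by the length of the existential prefix of $K$ plus the number of constants of $\Sigma_0$, and $\Sigma_0$ is finite, only finitely many such $\mathcal{M}_{\underline{b}}$ arise up to isomorphism; let $\mathcal{M}_1,\dots,\mathcal{M}_n$ be representatives. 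Each $\mathcal{M}_i\models K$, so $\uparrow\mathcal{M}_i\subseteq[[K]]$ by Proposition~\ref{prop:exists-formulae-denote-upward-closed-sets}; conversely every model of $K$ lies in some $\uparrow\mathcal{M}_i$ by the construction above. Hence $[[K]]=\bigcup_{i=1}^n\uparrow\mathcal{M}_i=\bigcup_{i=1}^n[[K_{\mathcal{M}_i}]]=[[\bigvee_{i=1}^n K_{\mathcal{M}_i}]]$ by (i), and therefore $K$ is equivalent modulo $T_{\mathit{ARBAC}}$ to the $\exists$-formula $\bigvee_{i=1}^n K_{\mathcal{M}_i}$ by Proposition~\ref{prop:implication-exists-formulae-equal-containement-upward-closed-sets}.

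Finally, (iii) is immediate: a finitely generated upward closed set of configurations is a finite union $\bigcup_{i=1}^m\uparrow\mathcal{M}_i$ of cones, which by (i) equals $\bigcup_{i=1}^m[[K_{\mathcal{M}_i}]]=[[\bigvee_{i=1}^m K_{\mathcal{M}_i}]]$, and $\bigvee_{i=1}^m K_{\mathcal{M}_i}$ is, after moving the existential quantifiers to the front, an $\exists$-formula. The \textbf{main obstacle} is the one already flagged: the full Robinson diagram of a finite $\Sigma_{\mathit{ARBAC}}$-structure is infinite, because $\Sigma_{\mathit{ARBAC}}$ has countably many constant symbols, so one must justify restricting attention to the finite sub-signature $\Sigma_0$ --- i.e.\ argue that the constants occurring nowhere in the problem affect neither the embedding pre-order on configurations nor the semantics $[[\cdot]]$ of $\exists$-formulae. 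Once this reduction is in place, (i) is a direct application of the Diagram Lemma and (ii) a routine finiteness argument based on closure under substructures.
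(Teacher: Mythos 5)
Your proposal is correct and follows essentially the same route as the paper: part (i) is the diagram formula plus the Diagram Lemma, with the infinite-diagram issue handled by restricting to the finitely many constants that matter (your explicit $\Sigma_0$ reduction plays exactly the role of the paper's remark that the restricted diagram is equivalent to the full Robinson diagram in the BSR setting); part (ii) is the same bounded-cardinality argument via closure under substructures of the universal theory $T_{\mathit{ARBAC}}$, which you merely carry out concretely with generated substructures instead of citing the corresponding fact about BSR theories; and (iii) is the same immediate corollary.
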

\begin{proof}
  We consider the three cases separately.
  \begin{description}
  \item[(i)] Let $\mathcal{M}$ be a configuration and consider the
    ``diagram'' $\delta_{\mathcal{M}}(\underline{e^u},
    \underline{e^r})$ where $\underline{e^u}$ and $\underline{e^r}$
    are finite tuples of users, roles, and permissions, respectively,
    that are also in the domain of $\mathcal{M}$.
    \begin{remark}
      Notice that $\delta_{\mathcal{M}}(\underline{e^u},
      \underline{e^r})$ is not the Robinson diagram as defined above;
      however, it turns out to be equivalent to
      $\delta_{\mathcal{M}}(\{e^u_i ~|~ i\geq 0\} \cup \{e^r_i ~|~
      i\geq 0\})$, i.e.\ the ``real'' diagram.  This is so because in
      any model of a BSR theory, there are only finitely many distinct
      atoms that ``matter,'' which are precisely those in
      $\delta_{\mathcal{M}}(\underline{e^u}, \underline{e^r})$,
      because when checking for satisfiability we can always restrict
      to those constants that occur in the formula to be checked for
      satisfiability as discussed in the sketch of the proof of
      Property~\ref{lem:dec-URA}.  (Recall, in fact, that by applying
      Herbrand theorem, the Herbrand universe is finite and composed
      only of the constants occurring in the formula.)  So, below, we
      refer to $\delta_{\mathcal{M}}(\underline{e^u},
      \underline{e^r})$ as the diagram and we treat it as the
      conjunction of its elements (i.e.\ as a first-order formula)
      since it is finite. \qed
    \end{remark}
    Now, take $K_{\mathcal{M}}$ to be the following $\exists$-formula:
    \begin{math}
      \exists  \underline{u}, \underline{r}. 
        \delta_{\mathcal{M}}(\underline{u}, \underline{r}) .
    \end{math}
    We are left with the problem of proving that $[[K_{\mathcal{M}}]]
    = \uparrow \mathcal{M}$.  By the definitions of
    $[[K_{\mathcal{M}}]]$ and $\uparrow \mathcal{M}$, this is
    equivalent to show that a configuration $\mathcal{N}$ is in
    $[[K_{\mathcal{M}}]]$, or---equivalently---$\mathcal{N} \models
    \exists \underline{u},
    \underline{r}. \delta_{\mathcal{M}}(\underline{u}, \underline{r})$
    iff $\mathcal{M}\leq \mathcal{N}$.  Now, assume $\mathcal{N}
    \models \exists \underline{u},
    \underline{r}. \delta_{\mathcal{M}}(\underline{u},
    \underline{r})$, which is equivalent to $\mathcal{N} \models
    \delta_{\mathcal{M}}(\underline{e^u}, \underline{e^r})$.  By the
    Diagram Lemma (i.e.\ Lemma~\ref{lem:diagram} above), this is
    equivalent to the existence of an embedding from $\mathcal{M}$ to
    $\mathcal{N}$, which---in turn---is equivalent to $\mathcal{M}\leq
    \mathcal{N}$, by definition of $\leq$.

  \item[(ii)] Without loss of generality, we can assume $K$ to be
    $\exists \underline{u},\underline{r}.\bigvee_{k=1}^n
    \varphi_k(\underline{u},\underline{r})$.  For each $k=1, ..., n$,
    we can also assume (again without loss of generality) that there
    exists an existentially quantified variable $x$ in
    $\underline{u}\cup \underline{r}$ such that $x=t$, for each
    constant in $K$. In this way, all the elements are explicitly
    mentioned in $K$.  Now, in a BSR theory, every quantifier-free
    formula with at most $m$ free variables is equivalent to a
    disjunction of the diagram $\delta_{\mathcal{M}}(X)$ where
    $\mathcal{M}$ is a substructure of a model in the theory and $X$
    is a set of elements of cardinality at most $m$.  Thus, $K$ can be
    rewritten as
    \begin{eqnarray*}
      \bigvee_{\mathcal{A}} \exists \underline{u},\underline{r}.
        \delta_{\mathcal{A}}(\underline{u},\underline{r})
    \end{eqnarray*}
    for $\mathcal{A}$ ranging over the models whose cardinality is $m$
    (recall that the class of models of a BSR theory is closed under
    substructures).  Each disjunct can be unsatisfiable, because it
    does not agree with the interpretation of $ua$, or satisfiable
    and, in this case, the model $\mathcal{A}$ is a configuration such
    that $\exists \underline{u},\underline{r}.
    \delta_{\mathcal{A}}(\underline{u},\underline{r})$ is precisely
    $K_{\mathcal{A}}$, as desired.
  \item[(iii)] An immediate corollary of (i) and (ii) above. \qed
  \end{description}
\end{proof}
The results in this and the previous subsection tells us that
$\exists$-formulae and configurations can be used interchangeably.

\subsection{Proof of termination of backward reachability}

\noindent \textbf{Theorem~\ref{thm:termination}.}  The backward
reachability procedure in Figure~\ref{fig:reach-algo} terminates.
\begin{proof}
  First of all, notice that when the algorithm return
  $\mathsf{reachable}$, it also terminates (line 3).  So, we consider
  the case when the goal is unreachable.  Let
  $B(\tau,K):=\bigcup_{i\geq 0} [[BR^i(\tau, K)]]$.  There two cases
  to consider.
  \begin{itemize}
  \item Let $K$ be the $\exists$-formula given in input to the
    algorithm and assume that $B(\tau,K)$ is finitely generated (that
    $B(\tau,K)$ is an upward closed set is obvious because it is
    obtained as union of upward closed sets since $[[K]]$ is so by
    Proposition~\ref{prop:exists-formulae-denote-upward-closed-sets}).
    Because of
    Proposition~\ref{prop:implication-exists-formulae-equal-containement-upward-closed-sets},
    we have that
    \begin{eqnarray*}
      [[BR^0(\tau, K)]] \subseteq [[BR^2(\tau, K)]] \subseteq \cdots
      \subseteq  [[BR^n(\tau, K)]]  \subseteq [[BR^{n+1}(\tau, K)]]
      \subseteq \cdots
    \end{eqnarray*}
    Because $B(\tau,K)$ is finitely generated, we have that there
    exists $n$ such that $[[BR^n(\tau, K)]] = [[BR^{n+1}(\tau, K)]]$
    and, again by
    Proposition~\ref{prop:implication-exists-formulae-equal-containement-upward-closed-sets},
    we derive that $BR^n(\tau, K) \Leftrightarrow BR^{n+1}(\tau, K)$
    is valid modulo $T_{\mathit{ARBAC}}$, i.e.\ the algorithm halts.
  \item Assume that the algorithm terminates.  By
    Proposition~\ref{prop:implication-exists-formulae-equal-containement-upward-closed-sets},
    this is equivalent to $BR^n(\tau, K) \Leftrightarrow
    BR^{n+1}(\tau, K)$ is valid modulo $T_{\mathit{ARBAC}}$ which, by
    Proposition~\ref{prop:implication-exists-formulae-equal-containement-upward-closed-sets},
    is equivalent to $[[BR^n(\tau, K)]] = [[BR^{n+1}(\tau, K)]]$, for
    some $n\geq 0$.  Notice that $B(\tau,K)=[[BR^n(\tau, K)]]$ is
    finitely generated by
    Proposition~\ref{prop:finitely-generated-configs-equiv-exists-formulae}.
  \end{itemize}
  So far, we have proved that the backward reachability procedure in
  Figure~\ref{fig:reach-algo} terminates iff $B(\tau,K)$ is finitely
  generated.  Thus, to conclude the proof, we show that $B(\tau,K)$ is
  indeed finitely generated.  To this end, if we are able to prove
  that the pre-order on configurations is a wqo, then are entitled to
  conclude that $B(\tau,K)$ is finitely generated (recall the
  definition of wqo in
  Section~\ref{subapp:pre-and-well-quasi-orders}).  Now, the pre-order
  on configurations is a wqo by Dickson's Lemma~\cite{dickson}. In
  fact, a configuration is uniquely determined by a pair of integers
  counting the number of pairs $(u,r)$ for which $ua(u,r)$ holds and
  the configuration ordering is obtained by component-wise comparison.
  This concludes the proof.  \qed
\end{proof}
Combining the results above, we derive the main result of this paper,
i.e.\ Theorem~\ref{thm:goal-reach-dec}.
% \begin{theorem}
%   The goal reachability problem for ARBAC policies with a finite but
%   unknown number of roles and users is decidable.
%  \end{theorem}
% Concerning complexity, the situation is less pleasant than the nice
% results reported in~\cite{li-tripunitara,stoller}.  In fact, it is
% possible to show that there is a non-primitive recursive bound on the
% number of iterations of the backward reachability procedure.  However,
% the procedure has proved very useful in practice when solving the
% reachability problem for classes of systems (e.g., broadcast
% protocols) with the same worst case behavior.

\section{Decidability of related security analysis problems}
\label{app:related-problems}

Here we consider three security analysis problems which are related to
user-role reachability and discuss their decidability.

\paragraph{Inductive policy invariants.}
In~\cite{dynpal,passat09}, the problem of checking properties that
remain unaffected under any sequence of actions of arbitrary (but
finite) length is considered.  This is the dual problem of user-role
reachability; in fact, it is not difficult to prove that if the
backward reachability procedure terminates (with
$\mathsf{unreachable}$), then the fix-point is the strongest
invariant.  More precisely, In other words, a \emph{policy invariant}
is a formula which holds in every state of an ARBAC policy.  In our
framework, the problem of checking whether a property is an inductive
invariant (a particular case of a policy invariant) turns out to be
decidable because of Property~\ref{lem:dec-URA}.  Let
$\Gamma:=(In,\{\tau_i\},\{\iota_j\})$ be a symbolic ARBAC policy.  The
$\forall$-formula $\psi(ua)$ is an \emph{inductive (policy) invariant
  for $\Gamma$} iff (a) $In(ua)\Rightarrow \psi(ua)$ is valid modulo
$T_{ARBAC}$ and (b) $(\iota(ua)\wedge \psi(ua)\wedge
\tau(ua,ua'))\Rightarrow \psi(ua')$ is valid modulo $T_{ARBAC}$.  It
is easy to see that (a) and (b) can be reduced to the satisfiability
of BSR formulae.  In fact, (a) is equivalent to the unsatisfiability
modulo $T_{ARBAC}$ of $In(ua)\wedge \neg \psi(ua)$, which---in
turn---can be transformed to a formula of BSR.  Similarly, (b) is
equivalent to the unsatisfiability modulo $T_{ARBAC}$ of
$\iota(ua)\wedge \psi(ua)\wedge \tau(ua,ua')\wedge \neg \psi(ua')$
which is again logically equivalent to a BSR formula.  These
observations with Property~\ref{lem:dec-URA} imply the following fact.
\begin{theorem}
  The problem of establishing if a $\forall$-formula is an inductive
  policy invariant is decidable.
\end{theorem}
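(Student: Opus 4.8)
The plan is to reduce each of the two defining conditions of an inductive policy invariant to a satisfiability question for a formula in the Bernays-Sch\"onfinkel-Ramsey (BSR) class modulo $T_{\mathit{ARBAC}}$, and then to appeal to the decidability of such checks established in Property~\ref{lem:dec-URA}. Recall that, given a symbolic ARBAC policy $\Gamma:=(In,\{\tau_i\},\{\iota_j\})$ and a candidate $\forall$-formula $\psi(ua)$, the formula $\psi$ is an inductive policy invariant for $\Gamma$ iff (a) $In(ua)\Rightarrow\psi(ua)$ is valid modulo $T_{\mathit{ARBAC}}$ and (b) $(\iota(ua)\wedge\psi(ua)\wedge\tau(ua,ua'))\Rightarrow\psi(ua')$ is valid modulo $T_{\mathit{ARBAC}}$, where $\iota:=\bigwedge_j\iota_j$ and $\tau:=\bigvee_i\tau_i$. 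It therefore suffices to exhibit, for each of (a) and (b), an effective transformation into a BSR satisfiability question.

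For (a), I would rephrase validity as unsatisfiability: $In(ua)\Rightarrow\psi(ua)$ is valid modulo $T_{\mathit{ARBAC}}$ iff $In(ua)\wedge\neg\psi(ua)$ is unsatisfiable modulo $T_{\mathit{ARBAC}}$. Now $In$ is a $\forall$-formula by assumption and $\psi$ is a $\forall$-formula, so $\neg\psi$ is equivalent to an $\exists$-formula; conjoining the axioms of $T_{\mathit{ARBAC}}$ (which are universal sentences) and prenexing with the usual renaming of bound variables yields a sentence of the shape $\exists\underline{x}\,\forall\underline{y}.\varphi(\underline{x},\underline{y},ua)$ with $\underline{x},\underline{y}$ disjoint. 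Since $\Sigma_{\mathit{ARBAC}}$ contains no function symbols, this sentence belongs to the BSR class, so Property~\ref{lem:dec-URA} (equivalently, the classical decidability of BSR) makes the check effective. For (b), the validity of $(\iota(ua)\wedge\psi(ua)\wedge\tau(ua,ua'))\Rightarrow\psi(ua')$ modulo $T_{\mathit{ARBAC}}$ is equivalent to the unsatisfiability modulo $T_{\mathit{ARBAC}}$ of $\iota(ua)\wedge\psi(ua)\wedge\tau(ua,ua')\wedge\neg\psi(ua')$, and I would keep track of the quantifier structure of each conjunct: $\iota$ and $\psi$ are $\forall$-formulae, $\neg\psi(ua')$ is an $\exists$-formula, and $\tau$, being a disjunction of instances of (\ref{eq:can_assign}) and (\ref{eq:can_revoke}), is equivalent to a formula with an existential prefix over users and roles followed by the universally quantified first-order rendering of the $ua'=ua\odot(\cdot)$ update (cf.\ the footnote recasting the $\lambda$/if-then-else abbreviations in pure first-order logic, which introduces no function symbols). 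Adding the universal axioms of $T_{\mathit{ARBAC}}$ and prenexing once more produces an $\exists^*\forall^*$ sentence over a function-symbol-free signature, i.e.\ again a BSR formula, whose satisfiability is decidable.

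The decision procedure then simply performs both checks: $\psi$ is an inductive policy invariant for $\Gamma$ precisely when the formula obtained in (a) and the formula obtained in (b) are both unsatisfiable modulo $T_{\mathit{ARBAC}}$; the conjunction of two decidable tests is decidable, which proves the theorem. I expect the only delicate point to be the bookkeeping in step (b): one must verify that eliminating the $\oplus/\ominus$ and `if-then-else' abbreviations in $\tau$ leaves the signature purely relational, and that the combined prenex of $\iota$, $\psi$, $\tau$, and $\neg\psi(ua')$ is genuinely of the form $\exists^*\forall^*$ rather than quantifier-alternating. Notably, no well-quasi-ordering or fixed-point argument is needed here, in contrast with the proof of Theorem~\ref{thm:termination}, because inductiveness is a single implication check rather than an iterative reachability computation.
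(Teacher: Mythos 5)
Your proposal is correct and follows essentially the same route as the paper: both reduce condition (a) to the unsatisfiability modulo $T_{\mathit{ARBAC}}$ of $In(ua)\wedge\neg\psi(ua)$ and condition (b) to that of $\iota(ua)\wedge\psi(ua)\wedge\tau(ua,ua')\wedge\neg\psi(ua')$, observe that each is equivalent to a BSR formula over the function-free signature, and conclude via Property~\ref{lem:dec-URA}. Your extra bookkeeping on the $\exists^*\forall^*$ prenexing of $\tau$ after eliminating the $\oplus/\ominus$ abbreviations is a fair elaboration of what the paper leaves implicit, not a different argument.
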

Indeed, checking inductive invariants is a lot cheaper than running
the backward reachability procedure.  The drawback is that if a
property $\psi$ fails to be an inductive invariant, then we cannot
conclude about its being an invariant of $\Gamma$ (in other words,
inductive invariants are a strict sub-class of policy invariants).
However, we can take the complement $\neg \psi$ (which is an
$\exists$-formula) of $\psi$ and run the backward reachability
procedure.  If this returns $\mathsf{unreachable}$, then we can
conclude that $\psi$ is an invariant of $\Gamma$.

\paragraph{Role containment.}
The problem of \emph{role containment} for a symbolic ARBAC policy
$\Gamma$ consists of checking if every member of a certain role, say
$e^r_1$, is also member of another role, say $e^r_2$, in every state
reachable from the initial state.  For simplicity, assume there is no
role hierarchy.  It is easy to reduce this to the user-role
reachability problem by considering a role $e^r_k$ not occurring in
$\Gamma$ and the following \textsl{can\_assign} action:
\begin{eqnarray*}
  \exists u,r,r_1. 
  \left(
  \begin{array}{l}
    ua(u,r) \wedge r=e^r_1 \wedge  
    \neg ua(u_1,r_1) \wedge r_1=e^r_2 \wedge 
     ua' = ua \oplus (u,e^r_k)
    \end{array}
    \right)  .
\end{eqnarray*}
Let $\Gamma'$ be obtained by adding the action above to $\Gamma$. It
is easy to see that the role containment problem for $\Gamma$ is
solvable iff role $e^r_k$ is reachable by $\Gamma'$.  

\paragraph{Weakest precondition.}
The \emph{weakest precondition problem} for a transition system
$\Gamma$ and goal $\gamma$ consists of computing the minimal sets of
initial role memberships of a given user $e^u_k$ for which $\gamma$ is
reachable.  This can be reduced to the user-role reachability problem
by taking $\forall u,r.\neg ua(u,r)$ as the initial state formula $In$
and then using a refinement of the backward reachability procedure in
Figure~\ref{fig:reach-algo}.  The refinement consists of using
$\exists$-formulae whose matrix is a conjunction of literals only;
this is without loss of generality as any $\exists$-formula can be
transformed to a finite disjunction of $\exists$-formulae whose
matrices are conjunctions of literals, called $\exists^+$-formulae, by
simple logical manipulations, and representing the search space by a
forest of trees whose nodes are labelled by $\exists^+$-formulae.

The root nodes are labelled by the $\exists^+$-formulae whose
disjunction is equivalent to the goal $\gamma$.  Then, we iteratively
extend each tree by selecting a node with no sons by adding as many
sons as the number of $\exists^+$-formulae which are equivalent to the
pre-image of the formula labelling the father node.  After the
creation of a node $n$, we check whether a fix-point has been reached
as follows.  First, we consider the formula $\psi$ labelling node $n$.
Second, we take the disjunction of the $\exists^+$-formulae labelling
all the nodes in the tree except $\psi$: it is not difficult to see
that this is equivalent to the content of the variable $B$ of the
procedure in Figure~\ref{fig:reach-algo}, i.e.\ it is the set of
backward reachable states.  Third, we check the satisfiability of
$\neg ((\iota \wedge \psi)\Rightarrow B)$, which is similar to the
check at line 2 in Figure~\ref{fig:reach-algo} except that $\psi$ is
an $\exists^+$-formula instead of an $\exists$-formula.  Because the
pre-order on configurations is a wqo, it is possible to show that this
procedure always terminates with finitely many trees.  At this point,
we collect all the $\exists^+$-formulae labelling the nodes of the
trees in the forest, compute the corresponding configurations (this is
always possible because of Lemma~\ref{lem:semantic-counterpart}, and
take only those sets where the interpretation of $ua$ has the minimal
number of occurrences of the user $e^u_k$ as the first component.
Since all the computation are effective, the procedure terminates.

By these reductions, we obtain the decidability of these two security
analysis problems.
\begin{theorem}
  The containment and weakest precondition problems are decidable.
\end{theorem}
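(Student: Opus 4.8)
The plan is to establish both decidability results through the two reductions sketched just above, checking in each case that the construction stays within the class of symbolic ARBAC policies for which the machinery of Section~\ref{sec:reachability-analysis} has been developed, and then invoking Theorem~\ref{thm:goal-reach-dec} (for role containment) or the termination argument behind Theorem~\ref{thm:termination} (for the weakest precondition).

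For \emph{role containment}, I would first observe that the auxiliary \textsl{can\_assign} action displayed above is a legitimate instance of (\ref{eq:can_assign}): its guard is a policy expression built only from the policy literals $ua(u,r)$, $\neg ua(u,r_1)$, $r=e^r_1$, and $r_1=e^r_2$, so in particular it uses only the restricted form of negation $\neg ua$ and not $\neg ua^*$. Hence $\Gamma'$ is again a symbolic ARBAC policy in the sense of Section~\ref{sec:symbolic-specification-ARBAC}. Correctness of the reduction then follows from the freshness of $e^r_k$: since $e^r_k$ occurs in no action of $\Gamma$ and in no initial-state formula, the only way a run of $\Gamma'$ can place some user in $e^r_k$ is by firing the new action, which is enabled precisely in a reachable state of $\Gamma$ containing a user who is a member of $e^r_1$ but not of $e^r_2$; conversely, any such reachable state of $\Gamma$ extends to a run of $\Gamma'$ reaching $e^r_k$. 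Therefore the role-containment property for $\Gamma$ fails iff the goal ``some user belongs to $e^r_k$'' is reachable for $\Gamma'$, and the latter is decidable by Theorem~\ref{thm:goal-reach-dec}. When a role hierarchy is present, one first eliminates it as recalled in Section~\ref{sec:symbolic-specification-ARBAC}, which is possible because there are only finitely many roles.

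For the \emph{weakest precondition} problem, the plan is to run the refined backward procedure described above, which departs from Figure~\ref{fig:reach-algo} only in that (i) every intermediate set of states is represented as a disjunction of $\exists^+$-formulae kept as a labelled forest --- legitimate by Property~\ref{prop:closure-pre}, since pre-images of conjunctions of literals are again finite disjunctions of such --- and (ii) $In$ is fixed to $\forall u,r.\neg ua(u,r)$. Termination is the crux. I would reuse the model-theoretic duality established for Theorem~\ref{thm:termination}: $\exists$-formulae correspond to finitely generated upward-closed sets of configurations (Lemma~\ref{lem:semantic-counterpart}), an $\exists^+$-formula denotes a single cone, and the pre-order on configurations is a well-quasi-ordering by Dickson's Lemma. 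Since the union of all node labels in the forest is exactly the backward-reachable set and this set is finitely generated, only finitely many cones can be added, so after finitely many extensions every new node passes the fix-point test $\neg((\iota\wedge\psi)\Rightarrow B)$ and the construction halts. Once it stabilises, I would collect the finitely many $\exists^+$-formulae labelling its nodes, turn each into its configuration via Lemma~\ref{lem:semantic-counterpart}, read off from each the set of role memberships of $e^u_k$, and output the inclusion-minimal such sets; every step being effective, the problem is decidable.

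The main obstacle I anticipate is precisely the termination-plus-correctness of this refined procedure: one must verify that replacing line~4 of Figure~\ref{fig:reach-algo} by the forest bookkeeping and line~2 by the per-node check still computes exactly the same backward-reachable set --- which amounts to observing that every $\exists^+$-disjunct of a pre-image becomes a node, so that no reachable configuration is lost --- and that the wqo argument genuinely applies cone by cone rather than only to the monolithic sequence $BR^n(\tau,\gamma)$. One must also argue that the final selection really yields the \emph{weakest} preconditions, i.e.\ that the inclusion-minimal role-membership sets of $e^u_k$ read off the complete collection of backward-reachable configurations are exactly the minimal initial assignments from which $\gamma$ is reachable. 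By contrast, the role-containment part is essentially a one-line corollary of Theorem~\ref{thm:goal-reach-dec} once the freshness bookkeeping for $e^r_k$ has been made precise.
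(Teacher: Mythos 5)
Your proposal is correct and follows essentially the same route as the paper, which proves this theorem precisely by the two reductions you develop: the fresh-role \textsl{can\_assign} reduction of role containment to user-role reachability (decidable by Theorem~\ref{thm:goal-reach-dec}), and the $\exists^+$-formula forest refinement of the backward procedure with $In := \forall u,r.\neg ua(u,r)$, whose termination rests on the same wqo/duality argument (Lemma~\ref{lem:semantic-counterpart} and Dickson's Lemma) used for Theorem~\ref{thm:termination}. You in fact spell out details (well-formedness of the new action, freshness bookkeeping, cone-by-cone termination, minimality of the extracted assignments) that the paper leaves implicit.
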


\section{A worked-out example}
\label{app:br-illustration}

We consider a simple reachability problem in~\cite{stoller}.  There
are several simplifying assumptions made by the authors
of~\cite{stoller} that allow us to: (i) ignore permissions and focus
only on roles, (ii) the role hierarchy can be abstracted away, (iii)
there is just one administrative role and user capable of executing an
administrative action of assignment, and (iv) there exists just one
user to which administrative actions can be applied.  As a
consequence, a \textsl{can\_assign} action can be seen as pair $\langle C, r'
\rangle$ (where the administrative role has been omitted) while a `can
revoke action only identifies the role $r'$ to be revoked and ignore
the administrative role that is supposed to apply the action, hence
its specification will simply be $\langle r' \rangle$.  Under these
assumptions, the ARBAC policy considered in~\cite{stoller} consists of
the following \textsl{can\_assign} actions:
\begin{eqnarray*}
  \textsl{can\_assign}_1 & : & \langle \{e^r_1\}, e^r_2 \rangle, \\
  \textsl{can\_assign}_2 & : & \langle \{e^r_2\}, e^r_3 \rangle, \\
  \textsl{can\_assign}_3 & : & \langle \{e^r_2\}, e^r_3 \rangle, \\
  \textsl{can\_assign}_4 & : &   \langle \{e^r_3, \overline{e^r_4}\}, e^r_5 \rangle, \\
  \textsl{can\_assign}_5 & : & \langle \{e^r_5\}, e^r_6 \rangle, \\
  \textsl{can\_assign}_6 & : &   \langle \{\overline{e^r_2}\}, e^r_7 \rangle, \\
  \textsl{can\_assign}_7 & : & \langle \{e^r_7\}, e^r_8 \rangle, 
\end{eqnarray*}
where we have dropped the numerical subscript of the constant $e^u$
denoting a user because of assumption (iv); and the following `can
revoke actions:
\begin{eqnarray*}
  \textsl{can\_revoke}_1 & : &   \langle r_1 \rangle, \\
  \textsl{can\_revoke}_2 & : &     \langle r_2 \rangle, \\
  \textsl{can\_revoke}_3 & : &     \langle r_3 \rangle, \\
  \textsl{can\_revoke}_4 & : &     \langle r_5 \rangle, \\
  \textsl{can\_revoke}_5 & : &     \langle r_6 \rangle, \\
  \textsl{can\_revoke}_6 & : &     \langle r_7 \rangle .
\end{eqnarray*}
The initial state $s_0$ of the ARBAC system is the following:
\begin{eqnarray*}
  s_0(ua) := \{ (e^u,e^r_1), (e^u,e^r_4), (e^u,e^r_7) \} ,
\end{eqnarray*}
and the goal is to reach a state where the user $e^u$ can be assigned
to role $e^r_6$.  As said in~\cite{stoller}, the goal is not reachable
from the initial state.  Below, we explain how to show that this is
indeed the case in our framework and using the backward reachability
procedure in Figure~\ref{fig:reach-algo}.

First of all, we specify the theory $T_{\mathit{ARBAC}}:=
T_{\mathit{Role}}\cup T_{\mathit{User}}\cup T_{\mathit{Permission}}\cup PA$
as follows:
\begin{eqnarray*}
  T_{\mathit{Role}} & := & SV(\{e^r_1, ..., e^r_8\}, \mathit{Role}) \\
  T_{\mathit{User}} & := & SV(\{e^u\}, \mathit{User}) \\
  T_{\mathit{Permission}} & := & \emptyset \\
  PA & := & \emptyset . 
\end{eqnarray*}
The formula $In(ua)$ characterizing the set of initial states is
expressed by
\begin{eqnarray*}
  \forall u,r.(ua(u,r)  & \Leftrightarrow & 
  \left(
    \begin{array}{ll} 
      (u=e^u \wedge r=e^r_1) & \vee \\
      (u=e^u \wedge r=e^r_4) & \vee \\
      (u=e^u \wedge r=e^r_7) & 
    \end{array}
  \right).
\end{eqnarray*}
The goal formula $\gamma(ua)$ characterizing the set of goal states is
expressed by
\begin{eqnarray*}
  \exists u,r.(ua(u,r)  & \wedge & u=e^u \wedge r=e^r_6) .
\end{eqnarray*}
Notice that because of assumption (ii), the restricted form of
negation allowed in the preconditions of transitions of the form
(\ref{eq:can_assign}) is sufficient to precisely describe the `can
assign actions above:
\begin{eqnarray*}
  \textsl{can\_assign}_1 & : &
  \exists u,r.(ua(u,r)  \wedge r = e^r_1 \wedge 
  ua' = ua \oplus (u,e^r_2) \\
  \textsl{can\_assign}_2 & : &
  \exists u,r.(ua(u,r)  \wedge r = e^r_2 \wedge 
  ua' = ua \oplus (u,e^r_3) \\
  \textsl{can\_assign}_3 & : &
  \exists u,r,r_1.\left(
    \begin{array}{l}
      ua(u,r)  \wedge r = e^r_3 \wedge 
      \neg ua(u,r_1) \wedge r_1 = e^r_4 \wedge \\
      ua' = ua \oplus (u,e^r_5)
    \end{array}
  \right)       \\
  \textsl{can\_assign}_4 & : &
  \exists u,r.(ua(u,r)  \wedge r = e^r_5 \wedge 
  ua' = ua \oplus (u,e^r_6) \\
  \textsl{can\_assign}_5 & : &
  \exists u,r.(\neg ua(u,r)  \wedge r = e^r_2 \wedge 
  ua' = ua \oplus (u,e^r_7) \\
  \textsl{can\_assign}_6 & : &
  \exists u,r.(ua(u,r)  \wedge r = e^r_7 \wedge 
  ua' = ua \oplus (u,e^r_8) .
\end{eqnarray*}
The \textsl{can\_revoke} actions can be expressed as follows:
\begin{eqnarray*}
  \textsl{can\_revoke}_1 & : & \exists u.(ua' = ua \ominus (u,e^r_1)) \\
  \textsl{can\_revoke}_2 & : & \exists u.(ua' = ua \ominus (u,e^r_2)) \\
  \textsl{can\_revoke}_3 & : & \exists u.(ua' = ua \ominus (u,e^r_3)) \\
  \textsl{can\_revoke}_4 & : & \exists u.(ua' = ua \ominus (u,e^r_5)) \\
  \textsl{can\_revoke}_5 & : &\exists u.(ua' = ua \ominus (u,e^r_6)) \\
  \textsl{can\_revoke}_6 & : &\exists u.(ua' = ua \ominus (u,e^r_7)) .
\end{eqnarray*}

Now, we can explain how the backward reachability procedure works on
the example.  In order to simplify the presentation, in the following,
we use a variant of the backward reachability procedure in
Figure~\ref{fig:reach-algo}.  The differences are the following.
First, instead of considering the disjunction of all the possible
actions and compute the pre-images of the goal with respect to this
complex formula, we consider the pre-images of the goal with respect
each possible action separately.  Indeed, this allows us to write more
compact formulae and, since it is easy to see that pre-image
computation distributes over disjunction, it is sufficient to take the
disjunction of the pre-images computed with respect to a single action
to obtain the same formula computed by the procedure in
Figure~\ref{fig:reach-algo}.  Concerning the satisfiability checks,
while the reachability test can be done as soon as we obtain a
(satisfiable) pre-image with respect to a single action, the fix-point
check requires a bit of care.  In fact, after obtaining a
(satisfiable) pre-image, the fix-point is \emph{local} to that
pre-image in the sense that all the (satisfiable) pre-images with
respect to the remaining actions must also be checked for fix-point.
Hence, a \emph{global} fix-point is reached only when all the local
fix-point are successful.  Furthermore, each local fix-point check
must be done by conjoining the actual pre-image with conjunction of
the negation of each pre-image previously computed.  It is not
difficult to see that the global fix-point corresponds to the
fix-point check of the procedure in Figure~\ref{fig:reach-algo}.  

First of all, the backward procedures computes the pre-image of
$\gamma$ with respect to each \textsl{can\_assign} and \textsl{can\_revoke} actions.
To illustrate how one of the pre-image computation is done, let us
consider $Pre(\textsl{can\_assign}_4,\gamma)$, i.e.
\begin{eqnarray*} 
  \exists u_1,r_1.(ua'(u_1,r_1)  \wedge  u_1=e^u \wedge r_1=e^r_6) & \wedge \\
  \exists u_2,r_2.(ua(u_2,r_2)  \wedge u_2=e^u \wedge r_2 = e^r_5 \wedge 
  ua' = ua \oplus (u_2,e^r_6) 
\end{eqnarray*}
where variables have been renamed to disambiguate the scope of
applications of the existential quantifiers and $ua'$ is implicitly
existentially quantified.  The formula can be rewritten as follows:
\begin{eqnarray*}
  \exists u_1,r_1,u_2,r_2.
  \left(
    \begin{array}{l}
      ua'(u_1,r_1)  \wedge  u_1=e^u \wedge r_1=e^r_6) \wedge \\
      ua(u_2,r_2)  \wedge u_2=e^u \wedge r_2 = e^r_5 \wedge \\
      ua'= \lambda w,r.(\mathit{if}~(w=u_2\wedge r=e^r_6)~\mathit{then}~ \mathit{true}~ \mathit{else}~ua(w,r))
    \end{array}
  \right)
\end{eqnarray*}
by simple logical manipulations and recalling the definition of
$\oplus$.  Then, substituting the $\lambda$-expression we derive:
\begin{eqnarray*}
  \exists u_1,r_1,u_2,r_2.
  \left(
    \begin{array}{l}
      \lambda w,r.(\mathit{if}~(w=u_2\wedge r=e^r_6)~\mathit{then}~ \mathit{true}~ \mathit{else}~ua(w,r))(u_1,r_1)  \wedge \\
      u_1=e^u \wedge r_1=e^r_6) \wedge 
      ua(u_2,r_2)  \wedge u_2=e^u \wedge r_2 = e^r_5 \wedge \\
      ua'= \lambda w,r.(\mathit{if}~(w=u_2\wedge r=e^r_6)~\mathit{then}~ \mathit{true}~ \mathit{else}~ua(w,r))
    \end{array}
  \right)
\end{eqnarray*}
which can be furtherly simplified as follows by using
$\beta$-reduction:
\begin{eqnarray*}
  \exists u_1,r_1,u_2,r_2.
  \left(
    \begin{array}{l}
      (\mathit{if}~(u_1=u_2\wedge r_1=e^r_6)~\mathit{then}~ \mathit{true}~ \mathit{else}~ua(u_1,r_1))  \wedge \\
      u_1=e^u \wedge r_1=e^r_6) \wedge 
      ua(u_2,r_2)  \wedge u_2=e^u \wedge r_2 = e^r_5 \wedge \\
      ua'= \lambda w,r.(\mathit{if}~(w=u_2\wedge r=e^r_6)~\mathit{then}~ \mathit{true}~ \mathit{else}~ua(w,r))
    \end{array}
  \right) .
\end{eqnarray*}
Now, we observe that $u_1=u_2$ is valid modulo $T_{\mathit{ARBAC}}$
since $T_{\mathit{User}}$ constrains the set of users to be the
singleton set $\{ e^u\}$ and that $r_1=e^r_6$ holds because it occurs
in the formula above.  Hence, we can simplify the formula above as
follows:
\begin{eqnarray*}
  \exists u_1,r_1,u_2,r_2.(
  u_1=e^u \wedge r_1=e^r_6) \wedge 
  ua(u_2,r_2)  \wedge u_2=e^u \wedge r_2 = e^r_5 ) 
\end{eqnarray*}
where $ua'$ has been dropped since the equality $ua'= \lambda
w,r.(\cdots)$ is easily seen to be always satisfiable (this is so
because to make the equality true, it is sufficient to take $ua'$
equal to the $\lambda$-expression on the right) .  Finally, simple
considerations on the quantified variables allow us to simplify the
last formula even further so as to obtain:
\begin{eqnarray*}
  \exists u,r.(
  ua(u,r)  \wedge u=e^u \wedge r = e^r_5 ) , 
\end{eqnarray*}
whose matrix is a policy constraint, exactly as the matrix of
$\gamma$.  This is not an accident as it is possible to show that that
the class of existentially quantified formulae whose matrix is a
policy constraint are closed under pre-image computation.  Let $B_0$
be $\gamma$ and $B_1$ be the last formula above.  The backward
procedure performs a satisfiability check of the conjunction between
$In$ and $B_1$, i.e.\ of the following formula:
\begin{eqnarray*}
  \forall u,r.(ua(u,r)  & \Leftrightarrow & 
  \left(
    \begin{array}{ll} 
      (u=e^u \wedge r=e^r_1) & \vee \\
      (u=e^u \wedge r=e^r_4) & \vee \\
      (u=e^u \wedge r=e^r_7) & 
    \end{array}
  \right)) \wedge 
  \exists u,r.\left(
    \begin{array}{ll} 
      ua(u,r)  & \wedge \\
      u=e^u \wedge r = e^r_5 & 
    \end{array}
  \right) 
\end{eqnarray*}
so as to check whether the goal has been reached.  Skolemizing the two
existentially quantified variables, we obtain:
\begin{eqnarray*}
  \forall u,r.(ua(u,r)  & \Leftrightarrow & 
  \left(
    \begin{array}{ll} 
      (u=e^u \wedge r=e^r_1) & \vee \\
      (u=e^u \wedge r=e^r_4) & \vee \\
      (u=e^u \wedge r=e^r_7) & 
    \end{array}
  \right)) \wedge 
  \left(
    \begin{array}{ll} 
      ua(\tilde{u},\tilde{r})  & \wedge \\
      \tilde{u}=e^u \wedge \tilde{r} = e^r_5 & 
    \end{array}
  \right) ,
\end{eqnarray*}
where $\tilde{r}$ and $\tilde{u}$ are fresh constants.  Now, observe
that the universally quantified variable $u$ can only take one value
as we have assumed that the set of users contains just one element
$e^u$; hence it must be $\tilde{u}=e^u$.  So, we are left with the
problem of instantiating the universally quantified variable $r$.  The
decidability result of Property~\ref{lem:dec-URA} allows us to
consider only the instances of the formula where $u$ is instantiated
to $e^u$ and $r$ to $\tilde{r}$.  It is not difficult to see that the
resulting formula is unsatisfiable, thus entitling us to conclude that
the sets of states characterized by $B_1$ and $In$ are disjoint and
the goal state is not reachable by applying $\textsl{can\_assign}_4$.

Then, the backward procedure proceeds to check for a fix-point.  This
is equivalent to the validity of $B_1 \Rightarrow B_0$ or to the
unsatisfiability of its negation, namely $B_1\wedge \neg B_0$:
\begin{eqnarray*}
  \exists u,r.\left(
    \begin{array}{ll} 
      ua(u,r)  & \wedge \\
      u=e^u \wedge r = e^r_5 & 
    \end{array}
  \right) \wedge 
  \forall u,r.\neg (ua(u,r)   \wedge  u=e^u \wedge r=e^r_6) .
\end{eqnarray*}
As before, we Skolemize the existentially quantified variables so as
to obtain the following formula:
\begin{eqnarray*}
  \left(
    \begin{array}{ll} 
      ua(\tilde{u},\tilde{r})  & \wedge \\
      \tilde{u}=e^u \wedge \tilde{r} = e^r_5 & 
    \end{array}
  \right) \wedge 
  \forall u,r.\neg (ua(u,r)   \wedge  u=e^u \wedge r=e^r_6) .
\end{eqnarray*}
where $\tilde{u},\tilde{r}$ are fresh constants.  As before, because
of Property~\ref{lem:dec-URA}, without loss of generality, we can
restrict to consider the formula obtained by instantiating $u$ to
$e^u$ and $r$ to $\tilde{r}$: this time, however, we conclude that the
formula is satisfiable.  Thus, we have shown that a fix-point has not
been reached and we need to compute the pre-images of $B_1$ w.r.t.\
the all the \textsl{can\_assign} and \textsl{can\_revoke} actions.  However, before
computing the pre-images of $B_1$, we also need to compute the
pre-images of $B_0$ w.r.t.\ $\tau$ in $\{ \textsl{can\_assign}_i |
i=1,2,3,5,6\}\cup \{ \textsl{can\_revoke}_i | i = 1,.., 6\}$, i.e.\ for the
remaining assignments and revocations.  This turns out to be useless
as all the formulae obtained in this way characterizes sets of states
that are sub-sets of those specified by $\gamma$ or, in other words,
we have reached a (local) fix-point.  For the sake of conciseness, we
do not do this here.  However, the reader can verify this as a simple
exercise by following the steps taken above for computing
$Pre(\mathit{\textsl{can\_assign}}_4, U)$ and checking for safety and
fix-point.  Similar observations hold also for the pre-images of
$B_1$: it turns out that all these formulae implies $B_1$, i.e.\
several (local) fix-point have been reached, and are unsatisfiable
when considered in conjunction with $In$, i.e.\ they pass the safety
check.  As a consequence, we can conclude that we have reached a
(global) fix-point and the goal is not reachable.

\end{document}